\documentclass[final,onefignum,onetabnum]{siamart171218}



\usepackage{amsfonts}
\usepackage{graphicx}
\usepackage{epstopdf}
\usepackage{algorithmic}
\usepackage[ruled,algo2e, vlined, linesnumbered]{algorithm2e}

\ifpdf
  \DeclareGraphicsExtensions{.eps,.pdf,.png,.jpg}
\else
  \DeclareGraphicsExtensions{.eps}
\fi



\newsiamremark{remark}{Remark}
\newsiamremark{hypothesis}{Hypothesis}
\crefname{hypothesis}{Hypothesis}{Hypotheses}
\newsiamthm{claim}{Claim}
\newsiamremark{algo}{Algorithm}

\headers{Hardness and ease}{J. Klassen, M. Marvian, S. Piddock, M. Ioannou, I. Hen, and B.M. Terhal}

\title{Hardness and ease of curing the sign problem for two-local qubit Hamiltonians
\thanks{
\funding{JK, MI and BT acknowledge funding through ERC grant EQEC No. 682726. MM and IH are partially supported by the Office of the Director of National Intelligence (ODNI), Intelligence Advanced Research Projects Activity (IARPA), via the U.S. Army Research Office contract W911NF-17-C-0050. IH is partially supported by the Air Force Research laboratory under agreement number FA8750-18-1-0044. The U.S. Government is authorized to reproduce and distribute reprints for Governmental purposes notwithstanding any copyright notation thereon. The views and conclusions contained herein are those of the authors and should not be interpreted as necessarily representing the official policies or endorsements, either expressed or implied, of the ODNI, IARPA, or the U.S. Government.}}
}


\author {Joel Klassen~\footnotemark[2]~\footnotemark[3]
\and Milad Marvian~\footnotemark[2]~\footnotemark[4]  
\and Stephen Piddock~\footnotemark[5]
\and Marios Ioannou~\footnotemark[6] 
\and Itay Hen~\footnotemark[8] 
\and Barbara M. Terhal~\footnotemark[9] 
}

\usepackage{amsopn}

\usepackage{braket}
\usepackage{empheq}

\newcommand{\bes} {\begin{subequations}}
\newcommand{\ees} {\end{subequations}}
\newcommand{\ba}{\begin{eqnarray}}
\newcommand{\ea}{\end{eqnarray}}





\newcommand{\lowrank}{rank-1}
\newcommand{\highrank}{rank$>$1}
\newcommand{\capHRCC}{Rank$>$1 Connected Component}
\newcommand{\HRCC}{rank$>$1 connected component}

\newcommand{\abrvHRCC}{RCC}

\begin{document}
\maketitle

\renewcommand{\thefootnote}{\fnsymbol{footnote}}
\footnotetext[2]{Klassen and Marvian contributed equally to this work.}
\footnotetext[3]{QuTech, Delft University of Technology
  (\email{J.D.Klassen@tudelft.nl})}
\footnotetext[4]{Research Laboratory of Electronics, MIT
  (\email{mmarvian@mit.edu})}
  \footnotetext[5]{School of Mathematics, University of Bristol
  (\email{stephen.piddock@bristol.ac.uk})}
\footnotetext[6]{Faculty of Physics, Ludwig Maximilian University of Munich (\email{M.Ioannou@tudelft.nl})}
\footnotetext[8]{Information Sciences Institute, University of Southern California (\email{itayhen@isi.edu})}
\footnotetext[9]{QuTech, Delft University of Technology(\email{B.M.Terhal@tudelft.nl})}

\renewcommand{\thefootnote}{\arabic{footnote}}

\begin{abstract}
We examine the problem of determining whether a multi-qubit two-local Hamiltonian can be made stoquastic by single-qubit unitary transformations.  We prove that when such a Hamiltonian contains one-local terms, then this task can be NP-hard. This is shown by constructing a class of Hamiltonians for which performing this task is equivalent to deciding $3$-SAT. In contrast, we show that when such a Hamiltonian contains no one-local terms then this task is easy, namely we present an algorithm which decides, in a number of arithmetic operations over $\mathbb{R}$ which is polynomial in the number of qubits, whether the sign problem of the Hamiltonian can be cured by single-qubit rotations. 
\end{abstract}



\section{Introduction}

The sign problem in quantum physics has long been recognized as one of the main 
impediments of efficient Monte-Carlo simulation of quantum many-body systems~\cite{sign:troyer,Loh-PRB-90}. Hamiltonians that do not suffer from the sign problem have recently been given the name `stoquastic'~\cite{Bravyi:QIC08}, a term 
which aims to capture the relationship between these Hamiltonians and stochastic processes. Many interesting quantum models such as the transverse field Ising model, the Bose-Hubbard model,
and a collection of kinetic particles in a position-dependent potential, are stoquastic. However, stoquasticity, as introduced in Ref.~\cite{Bravyi:QIC08}, is a basis-dependent concept. It 
requires that the Hamiltonian of the physical model in question be real and have non-positive off-diagonal elements in a given basis. For a many-body local Hamiltonian acting on $n$ qubits, this basis is typically a product basis on which the terms of the Hamiltonian act locally and can be efficiently described.
The non-positivity of the off-diagonal elements of a stoquastic Hamiltonian matrix in a particular basis has important consequences. It guarantees, via the Perron-Frobenius theorem, that there exists a set of orthonormal states, spanning the ground state subspace, whose amplitudes are non-negative in that basis \cite{Bravyi:2009sp}. In addition, the quantum partition function of a stoquastic Hamiltonian can be expressed as a sum of non-negative, easily computable, weights, which implies that Markov chain Monte-Carlo algorithms can be used to perform importance sampling of the quantum configuration space to calculate thermal averages of physical observables, using these weights as (unnormalized) probabilities. For this reason, it is said that stoquastic Hamiltonians do not suffer from the sign problem \cite{Bravyi:2009sp, Bravyi:QIC08}. However it is important to note that the absence of a sign problem does not necessarily imply polynomial-time convergence of standard Monte-Carlo methods \cite{Hastings2013, Jarret2016, Bringewatt2018}.

From a computational complexity perspective, the problem of estimating ground-state energies of stoquastic local Hamiltonians is considered easier than for general Hamiltonians~\cite{Bravyi:QIC08, bravyi:2006aa}. Moreover, in the classification of the complexity of estimating ground-state energies of local Hamiltonians, stoquastic Hamiltonians appear as the only intermediate class between classical Hamiltonians and general Hamiltonians~\cite{Cubitt:2016vl}. 
Stoquastic local Hamiltonians are of interest not only in quantum complexity theory. In Ref.~\cite{Bravyi:2009sp} it was shown that deciding whether a stoquastic Hamiltonian is frustration-free is a MA-complete problem. Recently Ref.~\cite{AG:stoquastic} showed that the gapped version of this question is in NP, linking derandomization of MA to NP to the possibility of gap amplification of stoquastic local Hamiltonians.


Identifying classes of Hamiltonians that are stoquastic is clearly motivated both from practical and complexity-theoretic perspectives. Given that stoquasticity is basis-dependent, an interesting question arises: under what circumstances can the sign problem be `cured', as coined in Ref.~\cite{marvian2018computational}, by performing local basis changes? This is the main question explored in this paper.

It is worth noting that the sign problem may be resolved by means other than a local basis transformation. Other methods for generating positive-valued decompositions of the partition function include, e.g., re-summation techniques wherein negative-valued weights in the decomposition are grouped together with positive ones to form positive `super-weights' that can in turn be treated as probabilities in a quantum Monte Carlo algorithm~\cite{resum1,resum2,hen2019}. Other methods also include applying a constant-depth quantum circuit \cite{torlai+:pos}. These other methods are beyond the scope of this paper. 


Naturally, devising techniques for obviating or mitigating the sign problem has been a focus of much research in the quantum Monte-Carlo (QMC) community since its inception \cite{Suzuki1993, Landau:2005:GMC:1051461,resum1,resum2,hen2019}. In particular, the importance of basis choice has been widely recognized (see, e.g. ~\cite{Bishop2001, Hatano1992, Hastings2015, Ringel2017}). Recognizing the key role that stoquastic Hamiltonians play both in computational complexity and in physics, a more general algorithmic approach has recently been launched to determine whether a Hamiltonian can be made stoquastic~\cite{ marvian2018computational, klassen2018two, Bausch2018}. In this 
paper we present an important strengthening of these initial results.  

Stoquasticity has also attracted attention from the experimental community. In particular there has been  a growing interest in engineering Hamiltonian interactions that are not stoquastic~\cite{Vinci:2017aa,Ozfidan2019}. Some of the reasons for this include: enhancing the performance of quantum annealer protocols for optimization \cite{nonstoq1,nonstoq2,albash}; realizing universal adiabatic quantum computers~\cite{AL:review,aharonov_adiabatic_2007,Biamonte:07}; and physically emulating quantum many-body systems~\cite{Deng2019}. Here too, the question of whether and how local basis changes can cure the sign problem is highly relevant, as experimental quantum advantages hinge on the inability to simulate non-stoquastic interactions on classical computers.


\section{Previous work}

In what follows, we will refer to Hermitian matrices that are real and have only non-positive off-diagonal elements as \textbf{symmetric $Z$-matrices}~\cite{BP:book}.

A no-go result presented recently by some of the authors of this paper states that the problem of determining whether there exists a sign-curing transformation for general local Hamiltonians is NP-hard when one is restricted to applying particular single-qubit transformations  to the Hamiltonian \cite{marvian2018computational}. This result can be summarized as

\begin{theorem}\cite{marvian2018computational}
Let $H$ be a three-local $n$-qubit Hamiltonian and let {\sf LocalCliffordSignCure} be the problem of determining whether there exist single-qubit Clifford transformations ${\sf C}_u$ with ${\sf C}=\bigotimes_{u=1}^n {\sf C}_u$ such that ${\sf C} H {\sf C}^{\dagger}$ is a symmetric $Z$-matrix. {\sf LocalCliffordSignCure} is {\rm NP}-hard. Let $H$ be a $6$-local $n$-qubit Hamiltonian and let {\sf LocalRealRotSignCure} be the problem of determining whether there exist real single-qubit rotations $R_u \in SO(2)$ with $R=\bigotimes_{u=1}^n R_u$ such that $R H R^{T}$ is a symmetric $Z$-matrix. {\sf LocalRealRotSignCure} is {\rm NP}-hard.
\label{theo:NPprev}
\end{theorem}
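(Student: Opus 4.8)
The plan is to prove both claims by polynomial-time many-one reductions from a canonical NP-complete problem; I would use $3$-SAT, switching to a convenient variant (not-all-equal-$3$-SAT or graph $3$-colorability) if the gadgets come out cleaner. I would treat the Clifford case first. The starting point is to reformulate {\sf LocalCliffordSignCure} as a discrete feasibility problem: writing $H=\sum_k \alpha_k\bigotimes_u P^{(k)}_u$ in the Pauli basis, conjugation by ${\sf C}=\bigotimes_u {\sf C}_u$ replaces each single-qubit factor $P_u\in\{X,Y,Z\}$ by ${\sf C}_u P_u {\sf C}_u^{\dagger}=\pm Q_u$, i.e.\ by a \emph{signed permutation} of $\{X,Y,Z\}$ drawn from the $24$-element rotation group of the octahedron. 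So {\sf LocalCliffordSignCure} asks whether there is one such choice $g_u$ per qubit making the resulting Pauli expansion of $H$ have only real, non-positive off-diagonal entries (so the problem is in fact in NP, and the content is hardness).

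Next I would, given a $3$-SAT formula $\phi$, assemble $H$ from constant-size blocks: (i) a \emph{variable block} for each variable $x_i$ whose own sign constraints leave exactly two admissible equivalence classes of local choices at a designated ``value qubit,'' to be labelled \textsc{true}/\textsc{false} (the residual freedom being a symmetry that acts trivially on the rest of $H$); (ii) a \emph{consistency block} forcing two value qubits to carry the same Boolean value, so a variable occurring in many clauses can be fanned out; and (iii) a \emph{clause block} on the three qubits carrying the literals of a clause, realized as a single $3$-local Pauli term plus a few $2$-local ``frame'' terms, engineered so that the block is curable as a stand-alone Hamiltonian exactly when at least one of its three literals is \textsc{true}. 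Spending the $3$-locality budget in (iii) is precisely what lets a genuine three-literal constraint sit inside one interaction term. The assembled $H$ has size polynomial in $|\phi|$, and I would prove it is curable iff $\phi$ is satisfiable: the ``if'' direction reads a Clifford off a satisfying assignment and checks that each block becomes a symmetric $Z$-matrix; the ``only if'' direction — the delicate one — must show that a global cure forces each variable block into a consistent value and each clause block into a satisfied state, and in particular that no accidental cancellation between Pauli terms sharing the same off-diagonal matrix entries can rescue an unsatisfiable $\phi$. I would control this by choosing the block coefficients with well-separated scales, so that curability is forced block by block and the relevant sign inequalities are independent within each block.

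For {\sf LocalRealRotSignCure} the choices at each qubit form a continuum $SO(2)$ rather than a finite set, so the feasible set in $(SO(2))^n$ is cut out by trigonometric inequalities and the reduction cannot simply ``read off'' a finite gadget. The extra ingredient I would add is an \emph{angle-locking block}: a constant-size gadget — now allowed to be up to $6$-local — whose only curable rotations on a target qubit are the four multiples of $\pi/2$, effectively re-discretizing the problem to the Clifford-like setting, after which the variable, consistency and clause blocks above carry over essentially unchanged (with $RHR^{T}$ in place of ${\sf C}H{\sf C}^{\dagger}$). The increase from $3$-local to $6$-local is the price of building one such locked ``Clifford-like'' qubit, whose value is stored redundantly on a couple of physical qubits, and still placing a three-literal clause constraint on three of them. (Alternatively one designs continuous clause and consistency gadgets directly, so that their semialgebraic feasibility regions encode $3$-SAT; the angle-locking route is the cleaner one.)

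The step I expect to be the main obstacle is exactly this gadget engineering together with the ``only if'' analysis: making the dictionary $\{\text{admissible local choice}\}\leftrightarrow\{\text{Boolean value}\}$ both robust to the residual single-qubit symmetry that a cure cannot detect and immune to global cancellations among Pauli terms touching a common off-diagonal entry, and — for the rotation case — pinning the continuous parameters down to a finite effective set via the angle-locking gadget while staying within the $6$-local budget.
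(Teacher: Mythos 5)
Note first that the paper does not actually prove this theorem: it is cited from Ref.~\cite{marvian2018computational}, and the body of the paper goes on to strengthen it to the two-local case (Theorem~\ref{thm1}, Section~\ref{sec:hardnessResult}). So there is no in-paper proof of this exact statement to compare against; the closest reference point is the architecture of the Theorem~\ref{thm1} proof, which the authors explicitly obtain by modifying the constructions of~\cite{marvian2018computational}.

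With that caveat, your plan matches the high-level architecture that the paper (and, one infers, the cited work) uses: reduce from $3$-SAT, build constant-size clause gadgets that are curable iff the clause is satisfiable, and for the continuous ($SO(2)$) case prepend a discretization gadget that forces the curing rotation to lie in a finite subset. Your ``angle-locking block'' is exactly the role played by the paper's Hadamard gadget $G$ in Lemma~\ref{lem:restrictedbasis}. Where you diverge is in the clause machinery: you propose a three-tier design (variable blocks with two admissible cure classes, consistency blocks fanning a variable out, and a three-local clause term), whereas the paper's Theorem~\ref{thm1} construction dispenses with variable and consistency blocks entirely. There, each clause $C_k$ gets a single control qubit $d_k$ and the gadget $H_k = -(X_{d_k}+Z_{d_k}+I)\otimes(S(c_{k,1})+S(c_{k,2})+S(c_{k,3})+2I)$ with literals encoded directly as $Z_i$ (positive) or $X_i$ (negated) on the shared variable qubit, so that the Hadamard/identity choice on qubit $i$ simultaneously serves as the Boolean assignment of $x_i$ for all clauses containing it, and the negation built into $S(\cdot)$ replaces your consistency block. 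That design is also why the paper's only-if direction is short: each $H_k$ is the only interaction touching $d_k$, so curability decouples clause by clause with no need for well-separated coefficient scales.

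Two points you should shore up. First, your parenthetical claim that ${\sf LocalCliffordSignCure}$ ``is in fact in NP'' because the Clifford choices are discrete is not automatic: the theorem concerns the global symmetric $Z$-matrix condition on a $2^n\times 2^n$ matrix, and for $k$-local Hamiltonians with $k>2$ termwise and globally stoquastic do not coincide in general (the paper's remark after Theorem~\ref{theo:NPprev} flags exactly this, and Proposition~\ref{prop:global2} only rescues the two-local case). Verifying global stoquasticity of a generic $3$-local $H$ given ${\sf C}$ is not an evidently polynomial-time check, so either you must argue containment only for the constructed family or work with the termwise notion. Second, the heart of the construction is precisely the gadget engineering you defer, in particular the only-if analysis showing that a global cure forces consistent Boolean values and rules out cancellations between Pauli terms hitting the same off-diagonal entries; your proposal correctly names this as the main obstacle, but as written it is a plan rather than a proof, and the coefficient-separation idea you gesture at is not how the paper's gadgets sidestep it (they use structural decoupling via the control qubits instead).
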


{\em Remark}: When dealing with $k$-local Hamiltonians with $k> 2$, it is important to note that two distinct notions of stoquasticity have been defined in Ref.~\cite{Bravyi:QIC08}, namely there exist termwise-stoquastic Hamiltonians and globally-stoquastic Hamiltonians. A globally-stoquastic Hamiltonian is a symmetric $Z$-matrix, while a Hamiltonian which is $k$-local termwise-stoquastic is one which can be decomposed into $k$-local terms such that each term is a symmetric $Z$-matrix. A globally-stoquastic Hamiltonian need not be termwise-stoquastic while a termwise-stoquastic Hamiltonian is always globally-stoquastic. The results in Theorem~\ref{theo:NPprev} hold for both definitions. For the two-local Hamiltonians in this paper one can prove~\cite{Bravyi:QIC08} that these notions coincide, hence we do not distinguish between these two definitions in this paper. We provide a proof of this equivalence in Proposition~\ref{prop:global2} for completeness. 

It was also recently shown, by other authors of this paper, that for a particularly broad family of two-local Hamiltonians, namely arbitrary XYZ Heisenberg Hamiltonians, there is an efficient procedure for determining whether the sign problem can be cured by single-qubit unitary transformations:

\begin{theorem}\cite{klassen2018two}
Let $H=\sum_{u,v} H_{uv}$, $u=1, \ldots, n$, $v=1,\ldots, n$ be an $n$-qubit Hamiltonian with $H_{uv}=a^{uv}_{XX} X_u X_v+a^{uv}_{YY}Y_u Y_v +a^{uv}_{ZZ}Z_u Z_v$, where each $a^{uv}_{kk}$ is given with $O(1)$ bits. There is an efficient algorithm, which we call the \textbf{XYZ-algorithm}, that runs in time $O(n^3)$ to decide whether there are single-qubit rotations $U_u \in SU(2)$ with $U=\bigotimes_{u=1}^n U_u$ such that $U H U^{\dagger}$ is a symmetric $Z$-matrix.
\label{theo:KT}
\end{theorem}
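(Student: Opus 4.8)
The plan is to recast the problem in the $SO(3)$ picture acting on Pauli vectors, cut stoquasticity down to an edge-by-edge condition, and then show that what remains is a polynomial-size combinatorial feasibility problem. For the setup: conjugation of a single-qubit operator by $U_u\in SU(2)$ acts on its traceless part by a rotation $O_u\in SO(3)$, and since every element of $SO(3)$ arises this way, nothing is lost by optimizing over the real rotations $O_u$ directly; under them the edge term of the XYZ model, which has coupling matrix $A_{uv}=\diag(a^{uv}_{XX},a^{uv}_{YY},a^{uv}_{ZZ})$, transforms as $A_{uv}\mapsto O_u^{T}A_{uv}O_v$, and crucially no one-local terms are ever generated. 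Because the Hamiltonian is two-local I would invoke the equivalence of termwise and global stoquasticity (the proposition quoted later in this paper) to reduce the task to deciding whether there exist $O_u\in SO(3)$ making \emph{every} rotated edge term a symmetric $Z$-matrix; and expanding the six off-diagonal matrix elements of a two-qubit term with no one-local part shows this holds exactly when its coupling matrix is \emph{diagonal}, say $\diag(m_X,m_Y,m_Z)$, with $m_X\le -|m_Y|$ (the $Z\otimes Z$ coefficient being unconstrained, while reality kills the $Y$-to-non-$Y$ couplings, and the opposite signs with which $X\otimes Z$ and $Z\otimes X$ contribute to two distinct off-diagonal entries kill those couplings too).

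Next, the combinatorial core: I would analyze what diagonality of $O_u^{T}A_{uv}O_v$ forces. Squaring the relation (using that $A_{uv}$ is symmetric) shows $O_v^{T}A_{uv}^{2}O_v$ and $O_u^{T}A_{uv}^{2}O_u$ are diagonal, so on a \emph{generic} edge — the three magnitudes $|a^{uv}_{XX}|,|a^{uv}_{YY}|,|a^{uv}_{ZZ}|$ distinct and nonzero — both $O_u$ and $O_v$ must be signed permutation matrices, i.e.\ elements of the order-$24$ octahedral subgroup of $SO(3)$; the dominant-negative condition $m_X\le -|m_Y|$ then restricts which axis of $u$ is glued to which axis of $v$ and the sign carried along that gluing, so that only a bounded discrete set of admissible local choices survives on each edge. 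Edges on which a coefficient vanishes, or two coefficients coincide in magnitude, permit continuous residual rotations and must be carved out as separate, weaker constraints.

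This turns global feasibility into a constraint-satisfaction problem over finitely many ``frame'' labels per qubit: one propagates the admissible frames along each connected component of the interaction graph and checks the leftover consistency, which — once the permutation pattern is fixed along a component — is a system of $\mathbb{Z}_2$-linear equations on the sign degrees of freedom, equivalently the question of whether an associated signed graph is balanced so that every cycle has sign-product $+1$. For the running time, there are $O(n^2)$ edges, each classified and reduced to its local constraint in $O(1)$ arithmetic operations (the $O(1)$-bit promise making the magnitude comparisons exact), and resolving the propagation together with the $\mathbb{Z}_2$-linear / signed-graph-balance check over the $O(n)$ vertex variables costs $O(n^3)$.

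The step I expect to be the genuine obstacle is the treatment of the degenerate edges inside the combinatorial core: once coefficients vanish or agree in magnitude the single-qubit rotations are no longer pinned to signed permutations, and one has to argue that the surviving continuous freedom can still be absorbed so that the overall problem stays finite and its consistency check stays linear — it is this case analysis, rather than any single sharp idea, where the effort is concentrated, together with verifying that propagating admissible frames along a component cannot cause an exponential proliferation of branches.
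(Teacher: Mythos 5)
You have correctly reconstructed the backbone of the argument: pass to the $SO(3)$ picture on the $\beta$-matrices, invoke the equivalence of termwise and global stoquasticity for two-local Hamiltonians so that the task becomes edge-by-edge, observe that a strictly two-local edge term is a symmetric $Z$-matrix iff its coupling matrix is diagonal with $m_X \le -|m_Y|$, and use the squaring trick $O_u^{\sf T} A_{uv}^2 O_u$ diagonal to pin $O_u$ and $O_v$ to the signed-permutation (single-qubit Clifford) subgroup on any edge where the three magnitudes are distinct and nonzero. The paper cites this theorem from Ref.~\cite{klassen2018two} without reproducing a proof, but its one-sentence gloss --- that the essential step is showing single-qubit Cliffords suffice, after which the problem becomes a discrete optimization --- is exactly the plan you describe, down to the signed-graph / $2$-XOR-SAT flavour of the final feasibility check.

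The gap you flag is genuine, and it is where the stated ``essential step'' actually lives. Your squaring argument gives only a \emph{necessary} condition, and only when the singular values of $\beta_{uv}$ are distinct; it says nothing on degenerate edges, where $O_u$ retains continuous freedom inside a degenerate eigenspace. Treating degenerate edges as ``weaker constraints'' and hoping to absorb the residual freedom leaves open exactly the two questions you acknowledge: whether some point in the continuum could cure a Hamiltonian that no signed permutation cures, and whether propagating partial frames along a connected component stays polynomial instead of branching. What Ref.~\cite{klassen2018two} actually proves is the stronger, global statement that single-qubit Cliffords \emph{suffice} whenever any single-qubit unitaries do --- i.e.\ the feasible set over $SO(3)^n$ always contains a signed-permutation point. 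That is not an edge-local fact extractable from singular-value multiplicities; it needs its own argument about the joint optimization over all vertices (an extremality/convexity-style argument on the orbit of the diagonal entries), which your sketch does not supply. In short, the scaffold is right, but what remains unfilled is not a routine degenerate-case enumeration --- it is the theorem's key lemma, and without it you cannot rule out that the decision problem over the continuum strictly exceeds the discrete one you have reduced to.
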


An essential step in the proof of Theorem~\ref{theo:KT} was to show that single-qubit Clifford transformations suffice as basis changes, reducing the problem to an optimization problem over a discrete set of degrees of freedom.  

\section{Main results}

This paper aims to bridge the gap between these two previous results, and identify the boundary between classes of Hamiltonians for which curing the sign problem by local basis transformations is hard and those for which this problem is easy.  The main results of this paper address the following problem. 
\begin{definition}[{\sf LocalSignCure}]
Given a two-local $n$-qubit Hamiltonian, {\sf LocalSignCure} is the problem of determining whether there exists a set of single-qubit unitary transformations $U_u \in SU(2)$ with $U=\bigotimes _{u=1}^n U_u$ such that $U HU^{\dagger}=\tilde{H}$ is a symmetric $Z$-matrix.
\end{definition}
We colloquially refer to such unitary transformation $U$ as a \textbf{sign-curing transformation}, and say that the sign problem of a Hamiltonian can be \textbf{cured} if such a transformation exists.


The main results of this paper are the following two theorems and constitute a strengthening of Theorems~\ref{theo:NPprev} and~\ref{theo:KT} to two-local Hamiltonians. 
In Section~\ref{sec:hardnessResult} we will prove the following theorem.

\begin{theorem}
There exists a family of a two-local $n$-qubit Hamiltonians for which {\sf LocalSignCure} is {\rm NP}-complete. 
\label{thm1}
\end{theorem}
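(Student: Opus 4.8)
The plan is to reduce $3$-SAT to {\sf LocalSignCure} in a way that parallels the NP-hardness argument of Theorem~\ref{theo:NPprev}, but lowering the locality from three (or six) down to two by exploiting one-local terms, which the abstract indicates are essential to the hardness. First I would recall the key local obstruction: for a single edge $H_{uv}$ acting on qubits $u,v$, after applying single-qubit rotations $U_u \otimes U_v$ one wants all off-diagonal entries non-positive and all entries real; for two-local terms one can use the fact (Proposition~\ref{prop:global2}, assumed here) that global and termwise stoquasticity coincide, so it suffices to make \emph{each} rotated term $U_u U_v H_{uv} U_u^\dagger U_v^\dagger$ a symmetric $Z$-matrix simultaneously, with a \emph{single} choice of $U_u$ per qubit shared across all incident edges. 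The one-local terms $h_u \cdot \vec\sigma_u$ constrain $U_u$ further: making $U_u (h_u\cdot\vec\sigma_u) U_u^\dagger$ a $Z$-matrix forces the rotated field to lie in a restricted set (essentially pointing along $\pm X$ or $\pm Z$ up to the usual sign-fixing), which discretizes the otherwise continuous $SU(2)$ degree of freedom at each vertex. The reduction then encodes Boolean variables in these residual discrete choices and Boolean clauses in the compatibility constraints imposed by two-local edges.

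The key steps, in order: (1) Set up a ``gadget'' graph on $n = \mathrm{poly}(m)$ qubits for a $3$-SAT instance with $m$ clauses. (2) For each propositional variable $x_i$, introduce a qubit (or small cluster) carrying a one-local term whose sign-curability leaves exactly two admissible single-qubit frames, to be identified with $x_i = 0$ and $x_i = 1$. (3) Build ``wire'' gadgets out of two-local terms plus ancillary one-local terms that copy a variable's frame-choice along the graph without introducing new freedom, so that the truth value is faithfully propagated; here I would use chains of edges of the form $a(X_uX_v + Z_uZ_v)$ or similar, whose stoquasticity after rotation pins down relative frames. (4) Build a ``clause'' gadget on three incoming wires that is simultaneously sign-curable \emph{iff} at least one of the three literals is satisfied — the natural candidate is a small constant-size Hamiltonian whose term-by-term $Z$-matrix conditions become a system of sign constraints with exactly the $3$-SAT solution structure (all-$8$-assignments-but-one feasibility). (5) Verify that the whole construction is sign-curable iff all clause gadgets are, iff the formula is satisfiable, and check that membership in NP holds because a sign-curing $U$ — which by an argument as in Theorem~\ref{theo:KT} may be taken from a discrete (e.g.\ Clifford-like) set, or simply verified to polynomial precision — is a polynomial-size certificate.

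The main obstacle I expect is step (4): designing a \emph{two-local} clause gadget whose joint termwise-stoquasticity condition is exactly an OR of three literals. With only two-local interactions one cannot directly couple three qubits, so the gadget must route the three literal-frames through a common auxiliary region and arrange that the pairwise $Z$-matrix constraints are jointly satisfiable precisely on the seven satisfying assignments. Getting the ``unique forbidden assignment'' behavior out of pairwise real-and-nonpositive-off-diagonal conditions — rather than, say, a parity or a $2$-of-$3$ condition — is the delicate part, and will likely require carefully chosen coefficients and possibly several ancilla qubits per clause, together with a proof that no unintended continuous rotation can sneak in and satisfy the constraints. A secondary obstacle is ensuring the one-local terms genuinely \emph{force} the discrete frame choice: one must rule out rotations that leave a one-local term almost-but-not-quite diagonal in a way that still yields a global $Z$-matrix when combined with the two-local terms, which is where invoking the termwise/global equivalence of Proposition~\ref{prop:global2} and a careful case analysis of single-qubit orbits will be needed.
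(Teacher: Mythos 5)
Your high-level plan — reduce from $3$-SAT, use one-local terms to force a discrete choice of frame at each variable qubit, and encode clauses in small two-local gadgets — matches the paper's strategy. But the proposal leaves two of the load-bearing steps unconstructed, and your own intuitions about them are slightly off.

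First, step~(2) as written does not work: a single one-local term $h_u\cdot\vec\sigma_u$ only constrains one axis of the $SO(3)$ rotation at qubit $u$, leaving a continuous one-parameter family of admissible $U_u$. To actually discretize, the paper's gadget $G$ (Eq.~\eqref{eq:Hdiscrete}) adds \emph{three} ancillas $a_u,b_u,c_u$ per original qubit and uses a carefully tuned two-local Heisenberg-type edge $3X_{a_u}X_{b_u}+Y_{a_u}Y_{b_u}+2Z_{a_u}Z_{b_u}$ with three distinct singular values; since there are no one-local terms on $a_u,b_u$, diagonality of its $\beta$-matrix is then forced (Proposition~\ref{prop:basic}), which pins the rotations to signed permutations. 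Combined with the identity coupling $X_u X_{a_u}+Y_u Y_{a_u}+Z_u Z_{a_u}$ (which, viewed as a $\beta$-matrix, forces $O_u = O_{a_u}$ up to signs) and the one-local term $-(X_{c_u}+Z_{c_u})$ (which rules out $Y$-related Cliffords and fixes signs), this reduces the admissible $U_u$ to $\{I,W,X,XW\}$ — hence, modulo conjugation by $X$, to $\{I,W\}$. None of this follows from the one-local term alone, and your "rule out almost-but-not-quite-diagonal rotations" worry is precisely what the distinct-singular-value edge resolves.

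Second, you flag step~(4) as the hard part and speculate it will require "several ancilla qubits per clause" and "carefully chosen coefficients." The paper's clause gadget is in fact a single ancilla $d_k$ with
\begin{equation*}
H_{\text{OR}} = -(X_d+Z_d+I)\otimes (Z_1+Z_2+Z_3+2I),
\end{equation*}
which expands into purely two-local terms. The OR behavior you were hoping to engineer via a system of pairwise $Z$-matrix sign constraints instead falls out of a global observation: $-(X_d+Z_d+I)$ is entrywise non-positive with some strictly negative off-diagonal entries, and it is invariant under $W_d$; so after conjugating by $\bigotimes_j W_j^{x_j}$, the $Z$-matrix condition on the block involving $d$ is exactly that $W^{x_1}Z W^{x_1}+W^{x_2}Z W^{x_2}+W^{x_3}Z W^{x_3}+2I$ is entrywise non-negative, which fails only for $(x_1,x_2,x_3)=(0,0,0)$. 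This sidesteps your concern that "with only two-local interactions one cannot directly couple three qubits": the three clause qubits never need to be pairwise coupled to each other, only each to $d_k$. Finally, on step~(5): NP membership does \emph{not} follow from a Theorem~\ref{theo:KT}-style argument (that theorem assumes no one-local terms), nor from "polynomial precision verification" — the paper explicitly cautions that {\sf LocalSignCure} is not known to be in NP in general. It holds here precisely because the gadget $G$ reduces the witness to a bit string (Corollary~\ref{cor:NPcontainment}). So the proposal identifies the correct shape of the argument but misses both of the concrete gadget constructions that make it go through.
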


 To prove this, we modify the constructions introduced in~Ref.~\cite{marvian2018computational} thereby reducing the locality of Hamiltonians from three-local (in the case of the single-qubit Clifford group) and six-local (in the case of the single-qubit orthogonal group) to two-local. This result demonstrates that {\sf LocalSignCure} is hard in general. Theorem~\ref{thm1} additionally demonstrates that deciding if a multi-qubit two-local Hamiltonian can be sign-cured by single-qubit Clifford transformations is hard. We show in Appendix~\ref{app:cliffordsEasy} that, in the absence of one-local terms, this task is easy. We should stress here that it is not clear that {\sf LocalSignCure} is a problem in NP for general Hamiltonians. We expand on this point in the Discussion section~\ref{sec:dis}. 

For a relatively broad subclass of two-local  Hamiltonians we can, however, show that finding local basis changes is easy.



\begin{theorem}
Let $H$ be an \textbf{exactly two-local} $n$-qubit Hamiltonian, meaning a Hamiltonian of the form $H=\sum_{u,v} H_{uv}$ with $H_{uv} = \sum_{k, l \in \{X,Y,Z\}} (\beta_{uv})_{kl} \sigma_k^u \otimes \sigma_l^v $ with $\sigma_k^u$ a Pauli matrix of type $k$, acting on qubit $u$, and $(\beta_{uv})_{kl}$ is given with $O(1)$ bits.  There is an efficient algorithm, using $O(n^3)$ arithmetic operations over $\mathbb{R}$, which solves {\sf LocalSignCure} for $H$.
\label{thm2}
\end{theorem}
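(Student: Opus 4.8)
The plan is to reduce {\sf LocalSignCure} for exactly two-local Hamiltonians to a combination of (i) a per-edge test that identifies, for each pair $(u,v)$ with $H_{uv}\neq 0$, the set of allowed local-basis pairs $(U_u,U_v)$ that make $U_u\otimes U_v\, H_{uv}\, (U_u\otimes U_v)^\dagger$ a symmetric $Z$-matrix, and (ii) a global consistency check that these per-edge constraints can be satisfied simultaneously. The key structural fact to establish first is that, because there are no one-local terms, the local freedom on each qubit is governed only by how it couples through two-local terms, and — as in the proof of Theorem~\ref{theo:KT} — the relevant set of basis changes can be taken to be discrete (single-qubit Cliffords, up to diagonal phases), so the problem becomes a constraint-satisfaction problem over a finite alphabet per vertex of the interaction graph.

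Concretely, I would proceed as follows. First, classify the single two-local term $H_{uv}=\sum_{kl}(\beta_{uv})_{kl}\,\sigma_k^u\otimes\sigma_l^v$: writing $\beta_{uv}$ as a $3\times3$ real matrix and using the action $\sigma_k \mapsto O\sigma_k$ of single-qubit rotations on the Pauli vector via $SO(3)$ (the adjoint action), the term becomes $\sum (O_u^{\mathrm T}\beta_{uv}O_v)_{kl}\,\sigma_k\otimes\sigma_l$; so sign-curability of this term alone is a statement about the singular value decomposition of $\beta_{uv}$ and which canonical forms (diagonal XX+YY+ZZ type, XX+YY type, single-Pauli-tensor type, etc.) can be rotated into a symmetric $Z$-matrix. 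This gives, for each edge, a finite list of admissible ``frames'' for $u$ and $v$ together with sign data. Second, set up a graph/constraint problem: each vertex $u$ gets a variable ranging over its admissible local frames (a finite set, bounded independent of $n$), each edge imposes a relation between the frames of its endpoints, and one must decide global satisfiability; I expect this to collapse — after fixing gauge on a spanning tree of each connected component — to something like a $2$-coloring / parity problem or a small system of linear equations over a finite group, solvable by propagation along the interaction graph in $O(n^2)$ or $O(n^3)$ time. Third, I would handle the rank-1 versus higher-rank components separately (the paper's notation $\abrvHRCC$, \HRCCs{}, suggests exactly this split): on components where every edge has an essentially unique frame the constraints are rigid and propagation decides feasibility immediately, while rank-1 pieces carry extra continuous freedom that must be shown never to obstruct (or to be separately checkable).

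The main obstacle I anticipate is the global consistency step rather than the per-edge classification: it is conceivable that each edge is individually curable and each pair of edges sharing a vertex is jointly curable, yet a cycle in the interaction graph introduces a ``holonomy'' obstruction (a product of edge relations around the cycle that must equal the identity). Showing that checking all such obstructions reduces to a polynomial-size computation — e.g., that it suffices to check independent cycles, or that the constraint propagation along a spanning tree already exposes any inconsistency when closing the remaining edges — is where the real work lies, and is presumably also where the $O(n^3)$ bound comes from (solving a linear system over the cycle space, or Gaussian elimination over $\mathbb{F}_2$ or a similar finite structure). A secondary subtlety is ensuring that restricting to the discrete set of basis changes loses no generality: one must argue that if any $U\in SU(2)^{\otimes n}$ cures the sign problem then a ``Clifford-like'' one does, adapting the corresponding argument behind Theorem~\ref{theo:KT} to the full exactly-two-local family (all $3\times3$ coupling matrices $\beta_{uv}$, not just the diagonal Heisenberg case).
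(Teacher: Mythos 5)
There is a genuine gap, and it is precisely the step you flag as a ``secondary subtlety'': the reduction to a discrete set of local transformations does not hold for exactly two-local Hamiltonians with general $\beta_{uv}$. Appendix~\ref{app:cliffordsNotSuffice} of the paper gives an explicit three-qubit triangle with monomial $\beta_{uv}$ for which no assignment of signed permutations (equivalently, single-qubit Cliffords) diagonalizes all edges, yet a continuous $\pi/4$ rotation (T-gate) at every vertex does. The discreteness argument that underlies Theorem~\ref{theo:KT} relies on all $\beta_{uv}$ being \emph{already diagonal}, which forces the per-vertex frames onto the coordinate axes; in the general case the admissible frames are the singular-vector frames of the $\beta_{uv}$'s and are genuinely continuous objects. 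As a result your ``finite alphabet per vertex'' CSP never gets off the ground, and the cycle/holonomy constraints you anticipate are not equations over a finite group but eigenspace conditions on products of arbitrary orthogonal $3\times 3$ matrices.

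There is a second, related gap. Even if one had a clean way to propagate continuous frames, the natural plan ``first diagonalize every $\beta_{uv}$ by local rotations, then run the XYZ-algorithm'' is blocked: Appendix~\ref{app:HeisForm} shows that deciding whether all $\beta_{uv}$ can be simultaneously diagonalized (i.e.\ whether the Hamiltonian can be put into Heisenberg form by single-qubit unitaries) is already NP-hard, by reduction from edge 3-coloring of cubic graphs. The paper escapes this by \emph{not} demanding full diagonality on rank-1 edges. The crucial observation is that a rank-1 $\beta_{uv}$ has a 2-dimensional null space, so the stoquasticity condition $|(\Sigma_{uv})_{11}|\ge|(\Sigma_{uv})_{22}|$ can only be satisfied with $(\Sigma_{uv})_{22}=0$; this is the ``No-Lone-YY'' condition, and it is strictly weaker than diagonality on rank-1 edges. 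Your treatment of rank-1 edges as ``extra continuous freedom that must be shown never to obstruct'' has it backwards: the rank-1 edges are not a nuisance to be argued away, they are where the necessary relaxation lives, and without singling them out the problem reduces to an NP-hard one. The paper's algorithm then does use the spanning-tree-plus-fundamental-cycles idea you describe, but over continuous subspace intersections within each rank$>$1 connected component, followed by a genuinely discrete 2-XOR-SAT only for the choice of index permutation across the rank-1 edges that join distinct components.
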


This algorithm is presented in Section~\ref{sec:curingResult}.  It employs the XYZ-algorithm referred to in Theorem~\ref{theo:KT}, as a subroutine. It is important to note that, just as in the XYZ algorithm, this algorithm makes no guarantee that $H$ can be cured, it only efficiently \textbf{decides} whether or not $H$ can be cured. 
An important difference between the new algorithm in Theorem~\ref{thm2} and the XYZ-algorithm is that the new algorithm requires finding singular value decompositions of matrices specified by $O(1)$ bits, as well as intersections of vector subspaces, while the XYZ-algorithm required solving a discrete optimization problem. Since we do not address the question of how a finite-precision implementation of these standard linear algebra operations affects the accuracy with which we decide whether the sign problem of $H$ can be cured, we state our theorem in terms of arithmetic operations over $\mathbb{R}$. However the algorithm is expected to be numerically stable insofar as repeated composition of orthogonal rotations, and intersections of vector spaces, are numerically stable. A rigorous account of the complexity of this problem would require a finite precision formulation. We will not attempt to do this here.



The upshot of these results is that the presence of local fields can change the complexity class of curing the sign problem of two-local Hamiltonians by single-qubit unitaries from P to NP-complete.

\section{Preliminaries}

For ease of exposition and reference we start by stating the following observation about two-qubit Hamiltonians. 

\begin{proposition}
A two-qubit Hamiltonian $H=\sum_{k,l=I,X,Y,Z} a_{kl} \sigma_k \otimes \sigma_l$ is a symmetric $Z$-matrix if and only if $a_{IY}=a_{YI}=a_{XY}=a_{YX}=a_{ZY}=a_{YZ}=0$ (the matrix is real) and $a_{XX} \leq - \vert a_{YY} \vert$ and $a_{IX} \leq -\vert a_{ZX} \vert$, $a_{XI} \leq -\vert a_{XZ} \vert$ (the matrix has non-positive off-diagonal elements).
\label{prop:basic}
\end{proposition}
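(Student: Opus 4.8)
The plan is to verify both directions by a direct computation in the computational basis $\{\ket{00},\ket{01},\ket{10},\ket{11}\}$, split into three steps: handling reality, writing down the off-diagonal entries explicitly, and repackaging the sign constraints.

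First I would dispose of the reality condition. Since $H$ is Hermitian the coefficients $a_{kl}$ are real, so $H$ is a real matrix precisely when its skew-symmetric imaginary part vanishes. Among the sixteen products $\sigma_k\otimes\sigma_l$, exactly those containing a single $Y$ factor are purely imaginary (the other ten are real), and these are exactly $IY,YI,XY,YX,ZY,YZ$. Since the Pauli products are linearly independent, the imaginary part of $H$ is zero iff the six coefficients $a_{IY},a_{YI},a_{XY},a_{YX},a_{ZY},a_{YZ}$ vanish, which is the first clause of the proposition.

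Next, assuming $H$ is real, I would compute its six above-diagonal matrix elements directly from the elementary one-qubit matrix elements $\bra{0}I\ket{0}=\bra{0}Z\ket{0}=\bra{1}I\ket{1}=1$, $\bra{1}Z\ket{1}=-1$, $\bra{0}X\ket{1}=\bra{1}X\ket{0}=1$, $\bra{0}Y\ket{1}=-i$, $\bra{1}Y\ket{0}=i$ (all other elements between $\ket{0}$ and $\ket{1}$ vanishing), sorting the contributing $\sigma_k\otimes\sigma_l$ by which qubit they flip and discarding the single-$Y$ terms already set to zero. The result should be
\begin{align*}
\bra{00}H\ket{01}&=a_{IX}+a_{ZX}, & \bra{10}H\ket{11}&=a_{IX}-a_{ZX},\\
\bra{00}H\ket{10}&=a_{XI}+a_{XZ}, & \bra{01}H\ket{11}&=a_{XI}-a_{XZ},\\
\bra{00}H\ket{11}&=a_{XX}-a_{YY}, & \bra{01}H\ket{10}&=a_{XX}+a_{YY},
\end{align*}
and since a real Hermitian matrix is symmetric, these six numbers (together with their transposes) are all of the off-diagonal entries. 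Finally, $H$ is a symmetric $Z$-matrix iff it is real and all six of these numbers are $\le 0$; grouping them into the three pairs above and using the elementary equivalence that $x+y\le 0$ and $x-y\le 0$ together hold iff $x\le-\vert y\vert$, the three pairs collapse to $a_{IX}\le-\vert a_{ZX}\vert$, $a_{XI}\le-\vert a_{XZ}\vert$, and $a_{XX}\le-\vert a_{YY}\vert$, which are the remaining conditions.

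There is no genuine obstacle in this argument; it is entirely elementary. The only step that needs care is the middle one — correctly enumerating, for each pair of computational basis states, which Pauli products $\sigma_k\otimes\sigma_l$ connect them, and keeping track of the signs arising from $\bra{1}Z\ket{1}=-1$ and from the two $Y$ factors in the doubly-flipped entry $\bra{00}H\ket{11}$. I would double-check these by also writing out $Y\otimes Y$, $X\otimes Z$, and similar products as explicit $4\times4$ matrices.
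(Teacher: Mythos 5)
Your proof is correct. Note that the paper states Proposition~\ref{prop:basic} as an unproved observation (``we start by stating the following observation''), so there is no proof in the paper to compare against; your computation supplies the elementary verification that the authors left implicit. I checked your six above-diagonal entries and they are all right, including the sign bookkeeping $\bra{1}Z\ket{1}=-1$ in $\bra{10}H\ket{11}$ and $\bra{01}H\ket{11}$, and $(-i)(-i)=-1$ versus $(-i)(i)=+1$ giving $a_{XX}-a_{YY}$ and $a_{XX}+a_{YY}$ in the two doubly-flipped positions. The observation that a Pauli word is real iff it contains an even number of $Y$'s, combined with linear independence of the Pauli words, correctly disposes of the reality condition, and the repackaging $\{x+y\le0,\,x-y\le0\}\iff x\le-\vert y\vert$ is standard. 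This is exactly the direct computational argument one would expect, and the approach is consistent with the closely related computations the paper does carry out in the proof of Proposition~\ref{prop:global2}.
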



%
\begin{proposition}\label{prop:ortho} \cite{klassen2018two}
Given a two-qubit Hamiltonian $H=\sum_{k,l=I,X,Y,Z} a_{kl} \sigma_k \otimes \sigma_l$, where the two-local term can be concisely represented by the $3 \times 3$ matrix \begin{equation*}\beta = \left( \begin{matrix}
a_{XX} & a_{XY} & a_{XZ} \\
a_{YX} & a_{YY} & a_{YZ} \\
a_{ZX} & a_{ZY} & a_{ZZ}
\end{matrix} \right).
\end{equation*} A pair of single-qubit unitary transformations $U_1$ and $U_2$ with action: $H \rightarrow (U_1\otimes U_2)H(U_1\otimes U_2)^{\dagger}$ corresponds to a pair of $SO(3)$ rotations $O_1$, $O_2$ acting on the $\beta$-matrix: $\beta \rightarrow O_1^{\sf T} \beta O_2$.
\end{proposition}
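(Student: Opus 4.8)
The plan is to reduce everything to the standard two-to-one homomorphism $\varphi: SU(2) \to SO(3)$, i.e.\ the adjoint (spin-$1$) representation. First I would establish the single-qubit statement: for every $U \in SU(2)$ there is a real $3\times 3$ matrix $O$ with $U \sigma_k U^{\dagger} = \sum_{l \in \{X,Y,Z\}} O_{kl}\,\sigma_l$ for $k \in \{X,Y,Z\}$, and this $O$ lies in $SO(3)$. Orthogonality is immediate from the fact that conjugation by a unitary preserves the Hilbert--Schmidt inner product together with $\tfrac12\Tr(\sigma_k\sigma_l)=\delta_{kl}$, which gives $\sum_m O_{km}O_{lm}=\tfrac12\Tr\big((U\sigma_k U^{\dagger})(U\sigma_l U^{\dagger})\big)=\delta_{kl}$; the determinant is $+1$ because $U\mapsto O$ is continuous, $SU(2)$ is connected, and $I\mapsto I$. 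I would also record that $U\mapsto O$ is a group homomorphism, that a global phase acts trivially under conjugation (so restricting the single-qubit transformations to $SU(2)$ rather than $U(2)$ loses nothing), and that $\varphi$ is onto $SO(3)$, so that conversely any pair $O_1,O_2\in SO(3)$ is induced by some $U_1,U_2\in SU(2)$.

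Next I would propagate this through the tensor product. Using $(U_1\otimes U_2)(\sigma_k\otimes\sigma_l)(U_1\otimes U_2)^{\dagger} = (U_1\sigma_k U_1^{\dagger})\otimes(U_2\sigma_l U_2^{\dagger})$ together with the single-qubit identity, one obtains
\begin{equation*}
(U_1\otimes U_2)(\sigma_k\otimes\sigma_l)(U_1\otimes U_2)^{\dagger}=\sum_{m,n}(O_1)_{km}(O_2)_{ln}\,\sigma_m\otimes\sigma_n .
\end{equation*}
Substituting into $H=\sum_{k,l}a_{kl}\,\sigma_k\otimes\sigma_l$ and reading off the new coefficient of $\sigma_m\otimes\sigma_n$ gives $(\beta)_{mn}\mapsto\sum_{k,l}(O_1)_{km}(\beta)_{kl}(O_2)_{ln}=(O_1^{\sf T}\beta O_2)_{mn}$, which is exactly the claimed action $\beta\mapsto O_1^{\sf T}\beta O_2$. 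I would also note that conjugation by a product unitary preserves the Pauli weight of every term, so the two-local block of $H$ is mapped to the two-local block and it is consistent to describe the transformation purely at the level of $\beta$; the one-local terms $\sigma_k\otimes I$ and $I\otimes\sigma_l$ transform separately under $O_1$ and $O_2$ and are irrelevant to this statement.

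The only delicate point is the bookkeeping of index conventions and the resulting placement of transposes: with the convention $U\sigma_k U^{\dagger}=\sum_l O_{kl}\sigma_l$ one gets $O_1^{\sf T}\beta O_2$ as above, whereas the opposite convention merely swaps $O_u\leftrightarrow O_u^{\sf T}$, which is harmless since $SO(3)$ is closed under transposition. Beyond fixing these conventions consistently I do not expect any genuine obstacle; the entire content is the spin-$1$ representation of $SU(2)$ plus one line of index-chasing through the tensor product.
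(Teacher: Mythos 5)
Your proof is correct and is the standard argument: it uses the adjoint (spin-1) representation $SU(2)\to SO(3)$, verifies orthogonality via the Hilbert--Schmidt inner product and determinant $+1$ via connectedness, then pushes this through the tensor product and a one-line index computation to get $\beta\mapsto O_1^{\sf T}\beta O_2$. The paper simply cites this proposition from prior work rather than reproving it, so there is no alternative in-text proof to contrast with; your version is exactly what one would expect, and the remarks about surjectivity of $SU(2)\to SO(3)$ and the separate transformation of the one-local and identity blocks are the right things to say to make the "corresponds to" bidirectional and well defined on $\beta$ alone.
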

For the curious reader, an example of a Hamiltonian which is not stoquastic under any single-qubit unitary transformations is provided in Appendix~\ref{app:simpleExample}. 

It was claimed in Ref.~\cite{Bravyi:QIC08}, without proof, that a two-local termwise-stoquastic Hamiltonian with respect to a basis is also globally stoquastic. We include the proof here:

\begin{proposition}\cite{Bravyi:QIC08}
A two-local Hamiltonian $H$ acting on $n$ qubits is a symmetric $Z$-matrix in the computational basis if and only if $H=\sum_{u<v} D_{uv}$ where each $D_{uv}$ acts nontrivially on at most two qubits, namely qubits $u$ and $v$, and $D_{uv}$ is a symmetric $Z$-matrix.
\label{prop:global2}
\end{proposition}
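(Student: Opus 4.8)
The "if" direction is immediate: if $H = \sum_{u<v} D_{uv}$ with each $D_{uv}$ a symmetric $Z$-matrix (acting nontrivially only on qubits $u,v$), then $H$ is real, and its off-diagonal entries are sums of the (non-positive) off-diagonal entries of the $D_{uv}$'s, plus — crucially — no cancellation-free interference can create a positive entry, since each summand is individually non-positive off the diagonal. Hence $H$ is a symmetric $Z$-matrix. So the whole content is in the "only if" direction, and the plan is to produce an explicit decomposition.

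The approach I would take is to expand $H$ in the Pauli basis. Since $H$ is two-local we may write $H = c I + \sum_u \sum_{k\in\{X,Y,Z\}} h^u_k \sigma^u_k + \sum_{u<v}\sum_{k,l\in\{X,Y,Z\}} (\beta_{uv})_{kl}\,\sigma^u_k\sigma^v_l$. The hypothesis that $H$ is a symmetric $Z$-matrix (real, non-positive off-diagonal) should be translated, via Proposition~\ref{prop:basic} applied to each pair of qubits, into sign constraints on the coefficients. Concretely: reality forces all Pauli coefficients involving a single $Y$ (i.e. $h^u_Y$, and $(\beta_{uv})_{Yl}$, $(\beta_{uv})_{kY}$ for $k,l\ne Y$) to vanish; the off-diagonal structure forces, for every pair $u<v$, the inequalities of Proposition~\ref{prop:basic}, $(\beta_{uv})_{XX}\le -|(\beta_{uv})_{YY}|$ together with the one-local-vs-$ZX$ / $XZ$ inequalities — but here there is a subtlety, because the coefficient multiplying $X_u$ inside the two-qubit block $\{u,v\}$ is not just $h^u_X$; it receives contributions $(\beta_{uw})_{XZ}$ (or $(\beta_{uw})_{XI}$-type terms) from the diagonal parts $\sigma^w_Z$ of the other blocks. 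The cleanest way to see the constraints is to note that $H$ is a symmetric $Z$-matrix iff every $2^2\times 2^2$ block obtained by fixing the computational-basis values of all qubits except a chosen pair $u,v$ is a symmetric $Z$-matrix; each such block is a two-qubit Hamiltonian whose coefficients are affine functions of the $\pm1$ assignments to the remaining qubits, and Proposition~\ref{prop:basic} must hold for all $2^{n-2}$ sign choices.

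The construction of the $D_{uv}$ then proceeds as follows. Put each off-diagonal Pauli string into exactly one $D_{uv}$: terms $\sigma^u_X\sigma^v_X$, $\sigma^u_Y\sigma^v_Y$, $\sigma^u_X\sigma^v_Z$, $\sigma^u_Z\sigma^v_X$ (and, after absorbing the one-local off-diagonal terms $X_u$, $Y_u$ — which must actually be zero by reality except $X_u$, which survives — into adjacent blocks) go into $D_{uv}$ with their original coefficients; the diagonal content ($I$, $Z_u$, $Z_u Z_v$, and the constant) can be distributed among the $D_{uv}$ however convenient, e.g. splitting the local $Z$ fields $h^u_Z Z_u$ and the constant evenly over the $n-1$ pairs containing $u$. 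One then checks that, with the global sign constraints in hand, each individual $D_{uv}$ already satisfies the inequalities of Proposition~\ref{prop:basic} — the key point being that the inequality $a_{XX}\le -|a_{YY}|$ for the block $\{u,v\}$ involves \emph{only} coefficients local to that block ($(\beta_{uv})_{XX}$ and $(\beta_{uv})_{YY}$), so it transfers directly; and the remaining inequalities, which couple the effective single-qubit $X$ and $ZX$ terms, follow by choosing the worst-case ($\pm1$) assignment of the spectator qubits in the constraint derived above, which precisely matches what a single $D_{uv}$ contributes.

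The main obstacle I anticipate is bookkeeping the off-diagonal one-local terms and the fact that the effective "field" felt by a pair $\{u,v\}$ is dressed by the diagonal $Z$-parts of all other interactions: one has to argue that the global non-positivity constraint, being an inequality valid for \emph{all} spectator configurations, in particular holds for the configuration that isolates a single block, and that this is exactly the per-block constraint needed. Making this "restrict to one block by choosing spectator signs adversarially, then the inequality decouples" argument precise — and confirming that no genuinely nonlocal obstruction arises when $n>2$ — is the real work; everything else is routine Pauli-coefficient matching.
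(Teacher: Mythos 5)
Your proposal follows essentially the same route as the paper's proof: expand in the Pauli basis, use reality to kill odd-$Y$ strings, observe (in your ``restrict to a pair block and quantify over all spectator $\pm1$ assignments'' language, versus the paper's ``decompose by Hamming distance into $M^{(0)},M^{(1)},M^{(2)}$ and sum with signs $\Delta^w$'' language) that the $XX/YY$ constraint is purely pairwise, while the $X$--vs--$XZ/ZX$ constraint aggregates over all neighbours of a vertex. You have correctly isolated where the content lies.

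The one place you remain vague is precisely the step the paper carries out explicitly, and it is not quite a formality: you assert that, given the global inequalities, ``each individual $D_{uv}$ already satisfies the inequalities of Proposition~\ref{prop:basic}'' and that the worst-case spectator constraint ``precisely matches what a single $D_{uv}$ contributes,'' but the worst-case spectator constraint is $h_X^u \le -\sum_{w} |(\beta_{uw})_{XZ}|$ (a single inequality summed over all $w$), while for each $D_{uv}$ separately you need to choose a split $h_X^u=\sum_v a_{XI}^{uv}$ with $a_{XI}^{uv}\le -|(\beta_{uv})_{XZ}|$ for every $v$ individually. That such a split exists is not automatic from the block-restriction language; it is a small but genuine lemma about distributing a single budget over neighbours so that each local slack is non-negative, which is exactly what the paper's Eq.~(\ref{eq:global}) and the sentence following it establish. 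Your remark about ``splitting the local $Z$ fields evenly'' addresses the (trivial) diagonal case but not this one; an even split of $h_X^u$ would generally fail when the $|(\beta_{uv})_{XZ}|$ are very unequal. Spelling out one valid choice, e.g.\ $a_{XI}^{uv} = -|(\beta_{uv})_{XZ}| + \frac{1}{n-1}\bigl(h_X^u + \sum_w |(\beta_{uw})_{XZ}|\bigr)$, closes the gap. With that made explicit, your proof is correct and isomorphic to the paper's.
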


\begin{proof}
Let $\ket{x}$, with $x \in \{0,1\}^n$, denote a computational basis state. If there exists a decomposition $H=\sum_{u<v} D_{uv}$ such that $D_{uv}$ is real and $\forall x \neq y$, $\bra{x} D_{uv} \ket{y}\leq 0$, then $H$ is real and $\forall x \neq y$, $\bra{x} H \ket{y} = \sum_{u,v} \bra{x} D_{uv} \ket{y} \leq 0$.
This proves one direction of the bi-conditional, we now prove the other direction. Since $H$ is real, $H=H^{\sf T}$. Therefore every Pauli operator $P$ in the Pauli expansion of $H$ must satisfy $P=P^{\sf T}$, and so $H$ does not contain any Pauli operators with odd numbers of $Y$ terms.  Let $d_H(x,y)$ denote the Hamming distance between bit strings $x$ and $y$. Since $H$ is two-local, $H=M^{(0)}+M^{(1)}+M^{(2)}$ where $\bra{x} M^{(m)}\ket{y}=0$ whenever $d_H(x,y)\neq m$. In other words the Hamiltonian decomposes into three sets: $M^{(0)}$ contains all terms which are diagonal (i.e., terms of the form $ZI$, $IZ$ and $ZZ$), $M^{(1)}$ contains all terms that flip 1 bit (i.e., of the form $XZ$, $ZX$, $XI$, $IX$), and $M^{(2)}$ contains all terms that flip two bits (of the form $XX$ and $YY$). There is no particular condition which has to be fulfilled for the diagonal group $M^{(0)}$, and so we ignore it. 
Furthermore, from the condition $\forall x\neq y$ $\bra{x} H \ket{y} \leq 0$, it follows that $\forall x\neq y$ $\bra{x} M^{(1)} \ket{y} \leq 0$ and $\bra{x} M^{(2)} \ket{y} \leq 0$, since $M^{(1)}$ and $M^{(2)} $ are non-zero at different off-diagonal positions.



For any potential decomposition $H=\sum_{u<v} D_{uv}$ we can similarly write $D_{uv}=D^{(0)}_{uv}+D^{(1)}_{uv}+D^{(2)}_{uv}$, grouping diagonal, one-qubit flipping, and two-qubit flipping terms.  Since $M^{(m)}$ contains all terms which flip $m$-qubits, $M^{(m)}=\sum_{u,v} D^{(m)}_{uv}$. In the case of $m=2$, $D^{(2)}_{uv}$ and $D^{(2)}_{wr}$ are non-zero at different off-diagonal positions when $u,v\neq w,r$, and so $\forall x\neq y$ $\bra{x} M^{(2)} \ket{y} \leq 0$ implies $\forall x\neq y, \forall u<v, \bra{x} D^{(2)}_{uv} \ket{y} \leq 0$. 

In the case of $m=1$, $D^{(1)}_{uv}$ and $D^{(1)}_{wx}$ may both be non-zero on the same off-diagonal position, and so we must use a different argument. We can write $M^{(1)}=\sum_{u,v\colon u < v} [a_{XZ}^{uv} X_u Z_v +a_{ZX}^{uv} Z_u X_v]+\sum_u a_X^u X_u$. By writing out matrix elements one can show that 
\begin{equation*}
\forall x\neq y, \bra{x} M^{(1)} \ket{y} \leq 0 \Rightarrow \forall u \;\; a_X^u+\sum_{v\colon v > u} \Delta^v a_{XZ}^{uv}+\sum_{w\colon w<u} \Delta^w a_{ZX}^{wu} \leq 0,
\end{equation*}
for all choices of sign-patterns $\Delta^u=\pm 1$. Note that $\Delta^u=\pm 1$ since $Z_u$ is applied on the identical $u$th bit in $x$ and $y$ which can either be 0 or 1. This implies that $\forall u$ we have 
\begin{equation}
a_X^u \leq -\left(\sum_{v\colon v > u} |a_{XZ}^{uv}|+\sum_{w\colon w<u} |a_{ZX}^{wu}|\right).
\label{eq:global}
\end{equation}
A local term is of the form $D_{uv}^{(1)}=a_{XZ}^{uv} X_u Z_v+a_{ZX}^{uv} Z_u X_v+a_{XI}^{uv} X_u I_v+a_{IX}^{uv} I_u Z_v$, where the coefficients $a_{XI}^{uv}, a_{IX}^{uv}$ can be freely chosen up to the overall constraint 
$a_X^u=\sum_{v\colon v > u} a_{XI}^{uv}+\sum_{w\colon w < u} a_{IX}^{wu}$. Now, clearly, if Eq.~(\ref{eq:global}) holds, then one can always distribute $a_X^u$ into a sum over $a_{XI}^{uv}$ (for $v > u$) and $a_{IX}^{wu}$ (for $w < u$) such that each $a_{XI}^{uv} \leq -|a_{XZ}^{uv}|$ and each $a_{IX}^{wu} \leq -|a_{ZX}^{wu}|$. Hence, by Proposition~\ref{prop:basic} there is a decomposition with terms $D_{uv}$ such that $D_{uv}^{(1)}$ is a symmetric $Z$-matrix, and so $D_{uv}$ is a symmetric $Z$-matrix.
\end{proof}

\section{{\sf LocalSignCure} for a class of two-local Hamiltonians is NP-complete}\label{sec:hardnessResult}


In this section we present a family of Hamiltonians for which solving {\sf LocalSignCure} is NP-complete, and thus show that {\sf LocalSignCure} is NP-hard. 

We will first show that {\sf LocalSignCure} for this class of Hamiltonians is in NP. This is not immediately apparent, since local basis transformations have a continuous parametrization, hence one either has to allow for approximate sign-curing transformations or prove that for this particular class of Hamiltonians any sign-curing transformation is a member of a discrete subset of transformations. We settle this problem by proving in Lemma~\ref{lem:restrictedbasis} that with the addition of ancilla qubits and ``gadget" interactions, any Hamiltonian in this class can be converted into one for which any sign-curing transformation must consist of either Hadamard gates or the identity operation.
In order to prove that the problem is NP-hard, we show how to encode any 3-SAT instance into the problem of curing a corresponding Hamiltonian using the identity or Hadamard gates. In Lemma~\ref{lem:H3SAT} we prove that such a curing transformation exists if and only if the corresponding 3-SAT instance is satisfiable.
A proof of Theorem~\ref{thm1} follows straightforwardly by considering  {\sf LocalSignCure} for the family of Hamiltonians constructed by adding the gadgets (Lemma~\ref{lem:restrictedbasis}) to the Hamiltonians corresponding to 3-SAT instances (Lemma~\ref{lem:H3SAT}).

\subsection{Hadamard sign curing gadget}
In this section we introduce the ``gadget" interactions which will effectively force any sign-curing transformation to be from a discrete subset of transformations. Let $W_u$ be a single-qubit Hadamard on qubit $u$: this is a convention we will use throughout this section and the next.

\begin{lemma}
	\label{lem:restrictedbasis}
Let $H$ be a two-local Hamiltonian on $n$ qubits. For each qubit $u \in \{1,\dots n\}$, add three ancilla qubits $a_u,b_u,c_u$ and define the two-local gadget Hamiltonian $G$, and the total Hamiltonian $H_{\rm Had}$ as:
\begin{align} 
 \label{eq:Hdiscrete} 
G&=\sum_{u=1}^n \big[-(X_{c_u}+Z_{c_u})-(X_u X_{a_u}+Y_u Y_{a_u}+Z_u Z_{a_u}) \nonumber \\ 
 \nonumber  &-(3X_{a_u}X_{b_u}+Y_{a_u}Y_{b_u}+2Z_{a_u}Z_{b_u})
 -(X_{b_u}X_{c_u}+Y_{b_u}Y_{c_u}+Z_{b_u}Z_{c_u}) \big],\\
 H_{\rm Had} &= H + G.
 \end{align}
Then the following are equivalent:
\begin{enumerate}
\item there exists a unitary $U=\bigotimes_{u=1}^{n} (U_u \otimes U_{a_u} \otimes U_{b_u} \otimes U_{c_u})$ such that $U H_{\rm Had} U^{\dagger}$ is a symmetric $Z$-matrix.
\item there exists $x \in \{0,1\}^n$ such that ${\sf W}(x)^{\dagger} H {\sf W}(x)$ is a symmetric $Z$-matrix, where ${\sf W}(x)=\bigotimes_{u=1}^n W_u^{x_u}$.
\end{enumerate}  
\end{lemma}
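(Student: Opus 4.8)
The plan is to analyze the gadget Hamiltonian $G$ qubit-chain by qubit-chain, using Proposition~\ref{prop:ortho} to translate each two-local term into its $3\times 3$ $\beta$-matrix and each single-qubit unitary into a pair of $SO(3)$ rotations. For a fixed $u$, the relevant chain is $c_u$ -- $b_u$ -- $a_u$ -- $u$ together with the one-local term $-(X_{c_u}+Z_{c_u})$ on $c_u$. I would first argue that the direction $(2)\Rightarrow(1)$ is immediate: if ${\sf W}(x)^\dagger H {\sf W}(x)$ is a symmetric $Z$-matrix for some $x$, take $U_u = W_u^{x_u}$ on each ``system'' qubit and then choose the ancilla rotations so that $G$'s chain terms become symmetric $Z$-matrices. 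One checks that each Heisenberg-type chain term and the $c_u$ field term, after conjugating $u$ by $W^{x_u}$ and the ancillas by Hadamards or identity in a consistent pattern, remains a symmetric $Z$-matrix; this is a finite case check using Proposition~\ref{prop:basic}.

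The substance is the direction $(1)\Rightarrow(2)$, where I must show the gadget rigidly forces the local basis changes to be (essentially) Hadamard-or-identity on each qubit. The strategy: suppose $U H_{\rm Had} U^\dagger$ is a symmetric $Z$-matrix. Since the decomposition into two-local $Z$-matrix terms is possible (Proposition~\ref{prop:global2}), and the chains around distinct $u$ share no qubits with each other except through $H$, I focus on a single chain and ask which pairs of $SO(3)$ rotations $(O_{c_u},O_{b_u},O_{a_u},O_u)$ simultaneously make the $\beta$-matrices of $-(X_{b_u}X_{c_u}+Y_{b_u}Y_{c_u}+Z_{b_u}Z_{c_u})$, of $-(3X_{a_u}X_{b_u}+Y_{a_u}Y_{b_u}+2Z_{a_u}Z_{b_u})$, of $-(X_uX_{a_u}+Y_uY_{a_u}+Z_uZ_{a_u})$, and the vector $-(1,0,1)$ attached to $c_u$, into the symmetric-$Z$ form of Proposition~\ref{prop:basic}. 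The key point is that the coefficient pattern $(3,1,2)$ in the $a_u b_u$ term is \emph{non-degenerate} — all three eigenvalues distinct and the ratios fixed — so the constraint that $O_{a_u}^{\sf T}\,\mathrm{diag}(3,1,2)\,O_{b_u}$ be a valid $Z$-matrix $\beta$-matrix pins down $O_{a_u}$ and $O_{b_u}$ up to a small finite group of signed permutations; the isotropic Heisenberg terms $\mathrm{diag}(1,1,1)$ on the other bonds then propagate a constraint from $b_u$ to $c_u$ and from $a_u$ to $u$, while the field term $-(1,0,1)$ on $c_u$ (with its own $Z$-matrix condition $a_{IX}\le -|a_{ZX}|$ etc.) eliminates the remaining freedom, leaving only the Hadamard and identity as admissible single-qubit transformations on $u$ (up to diagonal gauge that does not affect whether $H$ is cured). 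Concretely I would enumerate: the $Z$-matrix conditions for the $-(1,0,1)$ field force $O_{c_u}$ into a discrete set; combined with the $\mathrm{diag}(1,1,1)$ bond $c_u b_u$ this forces $O_{b_u}$ to lie in a related discrete set; the anisotropic bond then forces $O_{a_u}$; and finally the last $\mathrm{diag}(1,1,1)$ bond transfers this to $O_u$, yielding $O_u \in \{$identity-type, Hadamard-type$\}$. Then $U H_{\rm Had} U^\dagger$ being a $Z$-matrix implies in particular that its restriction involving only the original $n$ qubits, which is $(\bigotimes_u U_u) H (\bigotimes_u U_u)^\dagger$ up to the gauge freedom, is a $Z$-matrix, giving statement $(2)$ with $x_u$ recording whether $U_u$ is Hadamard-type or identity-type.

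The main obstacle I anticipate is the bookkeeping of the residual gauge freedom: even after the gadget pins down each $O_u$ ``up to signed permutations fixing the $X$-$Z$ plane appropriately,'' one must verify that this residual freedom is exactly the stabilizer of the symmetric-$Z$ property — i.e. that applying it to $H$ does not change whether $H$ is a $Z$-matrix — so that the reduction to the clean statement ``$W^{x_u}$ or identity'' is lossless. This amounts to checking that the leftover single-qubit transformations are generated by things like $Z$ (a diagonal sign) and $X$ (swapping $\ket 0 \leftrightarrow \ket 1$), which preserve the symmetric-$Z$ class, combined with the genuine discrete choice captured by $x$. I would isolate this as a short sublemma about the stabilizer of the set of symmetric $Z$-matrices under single-qubit conjugation, and I expect it to follow from Proposition~\ref{prop:basic} by direct inspection. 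The rest — the per-bond case analysis — is routine though somewhat lengthy, and I would present it compactly by working entirely with $3\times 3$ $\beta$-matrices via Proposition~\ref{prop:ortho} rather than with $4\times 4$ matrices.
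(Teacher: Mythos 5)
Your overall plan follows the same rigidity strategy as the paper: use Proposition~\ref{prop:basic} on the anisotropic $\mathrm{diag}(3,1,2)$ bond to force signed permutations, use the isotropic $\mathrm{diag}(1,1,1)$ bonds to propagate constraints along the chain $c_u$--$b_u$--$a_u$--$u$, and use the field on $c_u$ to cut down the Clifford part. The $2\Rightarrow1$ direction and the ``trace out the ancillas'' (what you call ``restriction'') closing move also match the paper. However, your anticipated sublemma is where the plan breaks, and it does so in a way that is not merely bookkeeping.

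You propose to absorb the ``residual gauge freedom'' by proving that the leftover single-qubit conjugations, ``generated by things like $Z$ (a diagonal sign) and $X$,'' stabilize the class of symmetric $Z$-matrices. That sublemma is false for $Z$: conjugation by $Z_u$ sends $X_u \mapsto -X_u$, so a term such as $-X_uX_v$ (a legitimate non-positive off-diagonal contribution) becomes $+X_uX_v$, which violates the $Z$-matrix condition. Equivalently, $Z$-conjugation changes the off-diagonal signs without permuting entries, so it does not preserve the class. The same is true of $Y$. The correct statement is that only $X$-conjugation (a permutation of computational basis states) preserves symmetric $Z$-matrices. Consequently you cannot push the Pauli residual into a ``gauge'' argument; you must instead show that the gadget itself eliminates the $Z$ and $Y$ Paulis. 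The paper does this by an explicit sign-propagation along the chain: the field $-(X_{c_u}+Z_{c_u})$ forces $P_{c_u}\in\{I,X\}$ (otherwise a $+X_{c_u}$ term appears); then each $-(XX+YY+ZZ)$ Heisenberg bond forces the next Pauli to be in $\{I,X\}$ as well (otherwise a $+X\otimes X$ term appears). Note that this argument is inherently \emph{sequential} starting from $c_u$: a $Y$ or $Z$ applied simultaneously on both endpoints of a Heisenberg bond would leave that bond's signs unchanged, so the anchoring by the $c_u$ field is essential, not a cosmetic detail.

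A smaller inaccuracy: the field $-(X_{c_u}+Z_{c_u})$ by itself does not force $O_{c_u}$ into a discrete set --- the reality condition only kills the $Y$ component, leaving a continuous rotation in the $X$--$Z$ plane. Discreteness of $O_{c_u}$ comes from first establishing via the $\mathrm{diag}(3,1,2)$ bond and the isotropic bonds that $O_{c_u}$ is a signed permutation; only then does the field pick out $\{I,W\}\cdot\{I,X\}$. Your first ordering of the argument (anisotropic bond first, then propagate, field last) is the correct one; the second ``concrete'' ordering starting from the field is circular. With the Pauli-elimination step made explicit and the discreteness argued in the right order, the rest of your plan closes the proof in essentially the same way as the paper.
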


\begin{proof}
First we prove $2 \rightarrow 1$. If there exists $x \in \{0,1\}^n$ such that ${\sf W}(x) H {\sf W}(x)^{\dagger}$ is a symmetric $Z$-matrix, then it is easy to check that  $U H_{\rm Had} U^{\dagger}$ is a symmetric $Z$-matrix with
\[U= \bigotimes_{u=1}^{n} \left(W_u \otimes  W_{a_u}\otimes W_{b_u}\otimes W_{c_u}\right)^{x_u}.\]

To prove the other direction, we will show that if 1. holds, each of the single-qubit unitaries $U_{\alpha}$ ($\alpha \in \bigcup_{u=1}^n \{u,a_u,b_u,c_u\}$) must be from the discrete set $\{I,W,X,XW\}$. This fact will suffice by the following reasoning. First note that conjugating by local $X$ matrices permutes the off-diagonal matrix entries of the Hamiltonian among themselves~\cite{marvian2018computational}. So if $U H_{\rm Had} U^{\dagger}$ is a symmetric $Z$-matrix and $U_{\alpha} \in \{I,W,X,XW\}$, then $\bar{U} H_{\rm Had} \bar{U}^{\dagger}$ is also a symmetric $Z$-matrix, where $\bar{U} = \bigotimes_{\alpha} \bar{U}_{\alpha}$ and 
\begin{equation*}
\bar{U}_{\alpha} = \left\lbrace \begin{array}{cc}
I & U_{\alpha} = I \textrm{ or } X\\
W & U_{\alpha}= W \textrm{ or } XW 
\end{array} \right. 
\end{equation*} 
since $UH_{\rm Had} U^{\dagger}$ and $\bar{U} H_{\rm Had} \bar{U}^{\dagger}$ are related by conjugation by local $X$ matrices. Using the fact that the partial trace of a symmetric $Z$-matrix is also a symmetric $Z$-matrix, and noting that by tracing out the ancilla qubits  of $\bar{U} H_{\rm Had} \bar{U}^{\dagger}$ we get $\bar{U} H \bar{U}^{\dagger}$, we conclude that if $\bar{U} H_{\rm Had} \bar{U}^{\dagger}$ is a symmetric $Z$-matrix, then so is $\bar{U}H\bar{U}^{\dagger}$, and $\bar{U} = {\sf W}(x)$ for some $x$.

%
%


We now proceed with proving that $U_{\alpha} \in \{I,W,X,XW\}$ given 1. Here we make use of the picture of orthogonal rotations on $\beta$ matrices, as mentioned in Proposition ~\ref{prop:ortho}. 
For a given $u$ we note that there are no 1-local terms involving qubits $a_u$ and $b_u$, and that the matrix $\beta_{{a_u} {b_u}}$ is diagonal and has 3 distinct non-zero singular values. 
In the absence of 1-local terms, it follows directly from Proposition~\ref{prop:basic} that $\beta_{{a_u} {b_u}}$ has to remain diagonal for $H_{\rm Had}$ to be a symmetric $Z$-matrix.
Therefore, the only possible transformations are signed permutations (of the Paulis) on qubits $a_u$ and $b_u$ with the permutations being the same to maintain the diagonality of $\beta_{{a_u} {b_u}}$.
	This implies that there exists a single-qubit Clifford transformation ${\sf C}$ (corresponding to the permutation) and Pauli matrices $P_{a_u}$ and $P_{b_u}$ such that  $U_{a_u}=P_{a_u}{\sf C}$ and $U_{b_u}=P_{b_u}{\sf C}$.

	We now consider the interaction between qubits $u$ and $a_u$. For the overall Hamiltonian to be real, the coefficients of $X_u Y_{a_u}, Y_u X_{a_u}, Z_u Y_{a_u}, Y_u Z_{a_u}$ must all be zero. Since there are no 1-local terms acting on qubit $a_u$, the coefficient of $Z_u X_{a_u}$ must also be zero and so the rotated matrix $\beta_{u a_u}'$ must have zeroes in the following positions:
	\begin{equation*}\beta_{u a_u}'=O_u^{\sf T} \beta_{u a_u} O_{a_u}=\left(\begin{array}{ccc}
	* & 0 & * \\
	0 & * & 0 \\
	0 & 0 & * \\
	\end{array}\right).
	\end{equation*}
	Note that $-\beta_{u a_u}$ is the identity matrix in Eq.~(\ref{eq:Hdiscrete}), so $\beta_{u a_u}'=O_u^{\sf T} \beta_{u a_u} O_{a_u}=-O_u^{\sf T} O_{a_u}$ is an orthogonal matrix. The only orthogonal matrix with zeroes in these positions is a diagonal matrix (with $\pm 1 $ on the diagonal). Therefore $O_u$ must equal $O_{a_u}$ up to signs; that is $U_u=PU_{a_u}$ for some Pauli $P$.
	
Since the matrix $\beta_{b_u c_u}$ is identical to   $\beta_{u a_u}$ and also there is no 1-local terms acting on qubit $b_u$, an identical argument shows that any curing transformation,  $U_{b_u}$ and $U_{c_u}$ must satisfy $U_{b_u}=P U_{c_u} $ for some Pauli matrix $P$. Thus for all $\alpha \in \{u,a_u,b_u,c_u\}$, we have $U_{\alpha}=P_{\alpha} { \sf C}$ for some Pauli matrix $P_{\alpha}$ and a single-qubit Clifford transformation ${\sf C}$.	
	
	Due to the 1-local terms $-(X_{c_u}+Z_{c_u})$, if ${\sf C}$ maps $X\rightarrow Y$ or $Z \rightarrow Y$ the Hamiltonian will have imaginary matrix entries and so, up to multiplication by a Pauli, ${\sf C}$ must be $I$ or $W$. Incorporating any such Pauli into $P_{c_u}$, we may assume wlog that ${\sf C} \in \{I,W\}$. 
Furthermore, if $P_{c_u}$ is $Y$ or $Z$, there will be a positive $+X_{c_u}$ term, so $P_{c_u} \in \{I,X\}$. Finally, if any of the other $P_{\alpha}$ are $Y$ or $Z$, there will be a positive $+X \otimes X$ term, and so for all $\alpha$ we must have $P_{\alpha} \in \{I,X\}$ and so $U_{\alpha} = P_{\alpha} {\sf C } \in \{I, W,X,XW\}$.
	\end{proof}
	
The following Lemma was proved in \cite{klassen2018two,marvian2018computational} by formulating an efficient strategy which finds a two-local termwise-stoquastic decomposition which is equivalent to $H$ being a symmetric $Z$-matrix by Proposition \ref{prop:global2}\footnote{More generally, one can note that it is easy to decide whether a $k$-local Hamiltonian is $k$-local term-wise stoquastic, as this is a linear programming problem. This can be seen by noting that the number of parameters needed to specify a local decomposition is polynomially dependent on the number of qubits, and the number of conditions to test on each term is dependent on the locality of the term. Furthermore, all of the conditions are linear \cite{marvian2018computational, thesis:ioannou}.}.
\begin{lemma}\label{lem:checkingIsEasy}\cite{klassen2018two}
Given a two-local Hamiltonian $H$ on $n$ qubits, one can decide if $H$ is a symmetric $Z$-matrix in the given basis in a number of steps polynomial in $n$.
\end{lemma}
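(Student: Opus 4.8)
The obstacle to a naive approach is that $H$ is a $2^n\times 2^n$ matrix, so one cannot simply enumerate its off-diagonal entries and test their signs; the argument must exploit two-locality. The plan is instead to invoke Proposition~\ref{prop:global2} to replace the question ``is $H$ a symmetric $Z$-matrix?'' by the question ``does $H$ admit a two-local termwise-stoquastic decomposition $H=\sum_{u<v}D_{uv}$?'', which involves only $O(n^2)$ local terms, each carrying a constant number of real Pauli coefficients, and hence has an input-polynomial description.

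Given this reduction, the cleanest route is to note that the \emph{proof} of Proposition~\ref{prop:global2} already extracts an explicit, coefficient-level characterisation. Writing $H=M^{(0)}+M^{(1)}+M^{(2)}$ as there, $H$ is a symmetric $Z$-matrix if and only if: (i) every Pauli coefficient of $H$ with an odd number of $Y$'s vanishes (reality); (ii) for every pair $u<v$ the two-bit-flip sector obeys $a_{XX}^{uv}\le -|a_{YY}^{uv}|$; and (iii) for every qubit $u$ the one-bit-flip sector obeys Eq.~(\ref{eq:global}), i.e.\ $a_X^u\le -\big(\sum_{v>u}|a_{XZ}^{uv}|+\sum_{w<u}|a_{ZX}^{wu}|\big)$. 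Reading these $O(n^2)$ coefficients off the input and checking (i)--(iii) costs $O(n^2)$ arithmetic operations and comparisons, which is the claimed algorithm; if all checks pass, the distribution of $a_X^u$ among the $a_{XI}^{uv}$ and $a_{IX}^{wu}$ described in the proof of Proposition~\ref{prop:global2} even furnishes an explicit termwise-stoquastic decomposition.

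An equivalent but more mechanical route is to cast the existence of a decomposition directly as a linear feasibility problem: take the Pauli coefficients of all the $D_{uv}$ as $O(n^2)$ real variables, impose $\sum_{u<v}D_{uv}=H$ as a system of linear equalities matching Pauli coefficients, and impose via Proposition~\ref{prop:basic} that each $D_{uv}$ be a symmetric $Z$-matrix as further linear equalities (the vanishing conditions) together with linear inequalities (each constraint $a\le -|b|$ split into $a\le -b$ and $a\le b$); a polynomial-size linear program then decides feasibility. Either way, the only step that is not an immediate check is the one-local sector $M^{(1)}$, where several candidate local terms contribute to the same off-diagonal entry: here one must reuse exactly the argument of Proposition~\ref{prop:global2}, namely that the worst-case choice of spectator-bit signs collapses the whole family of conditions to the single inequality Eq.~(\ref{eq:global}), and conversely that Eq.~(\ref{eq:global}) is precisely what is needed to split $a_X^u$ across the two-qubit terms so that each $D_{uv}^{(1)}$ is a symmetric $Z$-matrix. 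The remaining sectors $M^{(0)}$ (no constraint) and $M^{(2)}$ (terms on distinct pairs occupy distinct off-diagonal positions) are routine, so this is where the modest content of the lemma lies.
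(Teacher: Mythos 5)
Your proposal is correct and matches the paper's approach: the paper likewise reduces the global test to the existence of a two-local termwise-stoquastic decomposition via Proposition~\ref{prop:global2}, and its footnote records precisely the linear-programming formulation you give as your second route. Your first route --- reading off the explicit necessary-and-sufficient conditions (reality, $a_{XX}^{uv}\le -|a_{YY}^{uv}|$ for each pair, and Eq.~(\ref{eq:global}) for each qubit) that the proof of Proposition~\ref{prop:global2} already isolates --- is just a more explicit unpacking of the same argument and is equally valid.
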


Corollary~\ref{cor:NPcontainment} now follows immediately from Lemma~\ref{lem:restrictedbasis} and Lemma~\ref{lem:checkingIsEasy}, because the string $(x_1,\dots,x_n)$ is an efficiently checkable witness in the case that $H_{\rm Had}$ is sign-curable by a local unitary transformation:
\begin{corollary}
	\label{cor:NPcontainment}
	If $H$ is a two-local Hamiltonian, then for Hamiltonians $H_{\rm Had}$ of the form in Eq.~(\ref{eq:Hdiscrete}), ${\sf LocalSignCure}$ is in NP.
\end{corollary}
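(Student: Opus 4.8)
The plan is to assemble an explicit $\mathrm{NP}$ verifier for this restricted family of inputs directly out of the two preceding lemmas; no genuinely new argument is needed. An instance is a Hamiltonian $H_{\rm Had}=H+G$ of the form in Eq.~(\ref{eq:Hdiscrete}) on the $4n$ qubits $\bigcup_{u=1}^n\{u,a_u,b_u,c_u\}$, with Pauli coefficients specified by polynomially many bits. I would take the certificate to be a single bit string $x\in\{0,1\}^n$, which is of length $n$ and hence polynomial in the input size.

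On input $(H_{\rm Had},x)$ the verifier first forms the product unitary ${\sf V}(x)=\bigotimes_{u=1}^n\big(W_u\otimes W_{a_u}\otimes W_{b_u}\otimes W_{c_u}\big)^{x_u}$ and computes ${\sf V}(x)^{\dagger}H_{\rm Had}{\sf V}(x)$. Because ${\sf V}(x)$ is a tensor product of explicitly known single-qubit Cliffords and $H_{\rm Had}$ is two-local, this conjugation acts term by term, sending each Pauli tensor term to a single Pauli tensor term up to a sign; the result is again a two-local Hamiltonian whose polynomially many coefficients are obtained with only a constant increase in bit length. The verifier then runs the polynomial-time procedure of Lemma~\ref{lem:checkingIsEasy} on ${\sf V}(x)^{\dagger}H_{\rm Had}{\sf V}(x)$ and accepts if and only if it is a symmetric $Z$-matrix; the whole check takes time polynomial in $n$.

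Completeness and soundness are then immediate from Lemma~\ref{lem:restrictedbasis}. If $H_{\rm Had}$ is sign-curable by some single-qubit product unitary, the equivalence $1\Leftrightarrow 2$ supplies an $x\in\{0,1\}^n$ with ${\sf W}(x)^{\dagger}H{\sf W}(x)$ a symmetric $Z$-matrix, and the $2\Rightarrow 1$ direction shows that for exactly this $x$ the matrix ${\sf V}(x)^{\dagger}H_{\rm Had}{\sf V}(x)$ is a symmetric $Z$-matrix, so the verifier accepts witness $x$. Conversely, if $H_{\rm Had}$ admits no single-qubit sign-curing transformation at all, then in particular ${\sf V}(x)^{\dagger}H_{\rm Had}{\sf V}(x)$ fails to be a symmetric $Z$-matrix for every $x$, so the verifier rejects all certificates; hence {\sf LocalSignCure} restricted to this family is in $\mathrm{NP}$.

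I do not expect a real obstacle: essentially all the work is already done by Lemma~\ref{lem:restrictedbasis}, which pins any curing transformation (up to the harmless conjugation-by-local-$X$ freedom) to a Hadamard pattern $x$, and by Lemma~\ref{lem:checkingIsEasy}, which makes the final membership test efficient. The only points warranting a sentence of care are that conjugation by a product of single-qubit Cliffords keeps the Hamiltonian's description polynomial-size, and that the input coefficients are rational with short descriptions so that the verifier's arithmetic is exact.
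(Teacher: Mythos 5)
Your proposal is correct and takes essentially the same route as the paper, which simply observes that the corollary follows from Lemma~\ref{lem:restrictedbasis} and Lemma~\ref{lem:checkingIsEasy} with the Hadamard pattern $x\in\{0,1\}^n$ serving as the efficiently checkable witness. You have merely spelled out the verifier, completeness, and soundness more explicitly than the one-line justification the paper gives.
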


\subsection{{\sf LocalSignCure} is NP-hard}

Now we will show how to reduce 3-SAT to {\sf LocalSignCure}, and hence show that {\sf LocalSignCure} is NP-hard.
At the heart of the construction is a Hamiltonian $H_{\text{OR}}$ which acts on four qubits labeled $d,1,2,3$:
\begin{equation}
H_{\text{OR}}=-(X_d+Z_d+I) \otimes (Z_1+Z_2+Z_3+2I).
\end{equation}
Thanks to Lemma~\ref{lem:restrictedbasis} it suffices to consider a local basis change of the form ${\sf W}(x)=\bigotimes_{j \in \{d,1,2,3\}}W_j^{x_j}$. Note that $-(X_d+Z_d+I)$ has non-positive matrix entries and is invariant under conjugation by $W_d$. Therefore ${\sf W}(x)H_{\text{OR}}{\sf W}(x)^{\dagger}$ is a symmetric $Z$-matrix if and only if the bit string $x$ is such that all the matrix entries of 
\[W_1^{x_1} Z_1 W_1^{x_1} + W_2^{x_2} Z_2 W_2^{x_2} +W_3^{x_3} Z_3 W_3^{x_3} +2I\]
are non-negative. Recalling that $WZW=X$, one can see that for any $x$, all the off-diagonal matrix entries are non-negative. In addition, the diagonal entries are non-negative unless $(x_1,x_2,x_3)=(0,0,0)$. 
Therefore ${\sf W}(x) H_{\text{OR}}{\sf W}(x)^{\dagger}$ is a symmetric $Z$-matrix if and only if $x_1 \vee x_2 \vee x_3$ evaluates to true. 


	Let $C$ be a 3-SAT Boolean formula of the form 
\[C=\bigwedge_{k=1}^m C_k=\bigwedge_{k=1}^m \left(c_{k,1} \vee c_{k,2} \vee c_{k,3}\right),\]
with $m$ clauses and $n$ variables, where each $c_{k,j}$ is equal to $x_i$ or $\bar{x}_i$ for some $i \in \{1,\dots,n\}$.

Let $H_C$ be the Hamiltonian on $m+n$ qubits (labelled $\{1,\dots, n\} \cup \{d_1,\dots d_m\}$) defined by
\begin{equation}
\label{eq:HC}
H_C=\sum_{k=1}^m H_k=\sum_{k=1}^m - (X_{d_k} +Z_{d_k} +I)
\otimes \left(S(c_{k,1})+S(c_{k,2})+S(c_{k,3})+2I\right),
\end{equation}
where  \[S(c)=\left\{ \begin{array}{cc} Z_i & \text{if } c=x_i \text{ for some } i \\
X_i & \text{if } c=\overline{x_i} \text{ for some } i\end{array} \right. .\qquad  \]
An instance of such a Hamiltonian is illustrated in Figure \ref{fig:satConst}. 
\begin{figure}
\includegraphics[scale=1]{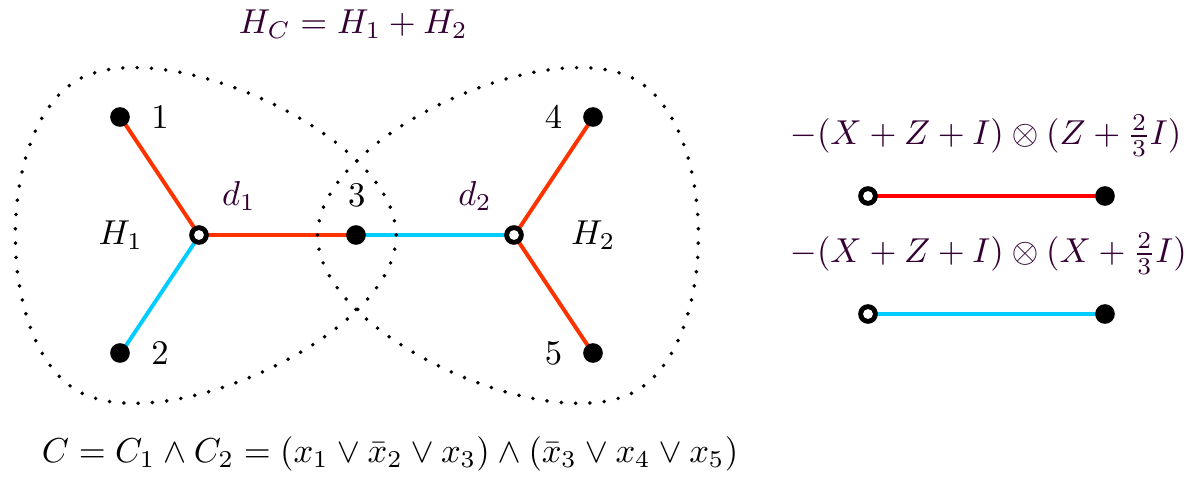}
\caption{An encoding of a $3$-SAT Boolean formula $C$, with two clauses and five variables, into a Hamiltonian $H_C$ as prescribed by Eq.~(\ref{eq:HC}).}\label{fig:satConst}
\end{figure}
For $x \in \{0,1\}^n$ and $y \in \{0,1\}^m$, define 
\[{\sf W}(x,y)=\left(\bigotimes_{i=1}^n W_i^{x_i}\right) \otimes \left( \bigotimes_{j=1}^m W_{d_j}^{y_j}\right).\]

\begin{lemma}
	\label{lem:H3SAT}
	Let $C$ be a 3-SAT Boolean formula, and $H_C$ be the corresponding Hamiltonian defined in Eq.(\ref{eq:HC}), and let $x \in \{0,1\}^n$. $C(x)$ evaluates to true if and only if $\forall y \in \{0,1\}^m$, ${\sf W}(x,y) H_C {\sf W}(x,y)^{\dagger}$ is a symmetric $Z$-matrix.
\end{lemma}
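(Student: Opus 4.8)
### Proof Proposal

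The plan is to reduce the claim to the single-clause analysis already carried out for $H_{\text{OR}}$, exploiting the tensor-product structure of $H_C$ over the clauses. The key observation is that ${\sf W}(x,y)$ acts on each clause Hamiltonian $H_k$ through exactly four of its single-qubit factors: the qubit $d_k$ (via $W_{d_k}^{y_k}$) and the three variable qubits appearing in clause $k$ (via $W_i^{x_i}$ for the relevant $i$). So if I write ${\sf W}(x,y) H_C {\sf W}(x,y)^{\dagger} = \sum_{k=1}^m {\sf W}(x,y) H_k {\sf W}(x,y)^{\dagger}$, then the $k$-th summand is supported only on qubits $\{d_k, i_{k,1}, i_{k,2}, i_{k,3}\}$, and the matrix elements of the full sum at a given off-diagonal position are determined clause-by-clause. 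By Proposition~\ref{prop:global2} (the equivalence of two-local termwise- and global-stoquasticity), ${\sf W}(x,y) H_C {\sf W}(x,y)^{\dagger}$ is a symmetric $Z$-matrix if and only if each conjugated clause term ${\sf W}(x,y) H_k {\sf W}(x,y)^{\dagger}$ is a symmetric $Z$-matrix; this is the step that lets me decouple the clauses. Since each $H_k$ is itself a tensor product of a factor on $d_k$ with a factor on the variable qubits, and $-(X_{d_k}+Z_{d_k}+I)$ is both non-positive and invariant under $W_{d_k}$, this reduces to asking whether all matrix entries of $W^{x_{i_{k,1}}} S(c_{k,1}) W^{x_{i_{k,1}}} + W^{x_{i_{k,2}}} S(c_{k,2}) W^{x_{i_{k,2}}} + W^{x_{i_{k,3}}} S(c_{k,3}) W^{x_{i_{k,3}}} + 2I$ are non-negative.

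Next I would analyze this single-clause condition. Using $WZW = X$ and $WXW = Z$, conjugation by $W^{x_i}$ either fixes or swaps $Z_i \leftrightarrow X_i$. Concretely, $W^{x_i} S(c_{k,j}) W^{x_i}$ equals $Z_{i}$ precisely when $c_{k,j} = x_i$ and $x_i = 0$, or when $c_{k,j} = \bar x_i$ and $x_i = 1$ — that is, exactly when the literal $c_{k,j}$ evaluates to \emph{false} under the assignment $x$; otherwise it equals $X_{i}$ (the literal evaluates to true). So after conjugation the operator on the variable qubits is $\sum_{j\colon c_{k,j}\text{ false}} Z + \sum_{j\colon c_{k,j}\text{ true}} X + 2I$. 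All off-diagonal entries of this are non-negative (each $X$ contributes $+1$, $Z$ and $I$ are diagonal). The diagonal entries range over $2 + (\#\text{false literals diagonal signs}) = 2 + \sum \pm 1$ over the false literals, with the worst case being all three literals false and all their $\pm 1$ signs equal to $-1$, giving diagonal entry $2 - 3 = -1 < 0$. Whenever at least one literal $c_{k,j}$ is true, at most two $Z$ terms remain, so the diagonal is at least $2 - 2 = 0$. Hence ${\sf W}(x,y) H_k {\sf W}(x,y)^{\dagger}$ is a symmetric $Z$-matrix if and only if clause $C_k$ is satisfied by $x$, \emph{independently of $y_k$}.

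Assembling the pieces: ${\sf W}(x,y) H_C {\sf W}(x,y)^{\dagger}$ is a symmetric $Z$-matrix $\iff$ every ${\sf W}(x,y) H_k {\sf W}(x,y)^{\dagger}$ is $\iff$ every clause $C_k$ is satisfied by $x$ $\iff$ $C(x)$ evaluates to true. Since the per-clause condition does not depend on $y$, this holds for some $y$ iff it holds for all $y$, giving the stated ``$\forall y$'' form of the lemma. I expect the main obstacle to be the careful bookkeeping in the decoupling step — justifying that global stoquasticity of the sum of clause terms forces each clause term to be a symmetric $Z$-matrix. The cleanest route is to invoke Proposition~\ref{prop:global2} directly on the conjugated Hamiltonian (noting that conjugation by $W$'s keeps it two-local and the decomposition $\sum_k {\sf W}(x,y)H_k{\sf W}(x,y)^{\dagger}$ is a valid two-local decomposition), rather than re-deriving the matrix-element argument by hand; one should check that the term $H_k$, though nominally four-local in the qubit labels, decomposes further into genuinely two-local pieces so that Proposition~\ref{prop:global2} applies, or simply observe that the single $d_k$-factor times a single variable factor is already two-local and the reality/off-diagonal conditions factor through each such pair.
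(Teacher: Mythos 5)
Your single-clause analysis is correct and matches the paper's: the $d_k$-factor is $W$-invariant so $y_k$ drops out, conjugation by $W^{x_i}$ turns each literal into $X_i$ when true and $Z_i$ when false, off-diagonals of the variable factor are automatically non-negative, and the diagonal fails iff all three literals are false. That part is fine.

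The gap is in the decoupling step, and it is not cosmetic. You invoke Proposition~\ref{prop:global2} to assert that ${\sf W}(x,y) H_C {\sf W}(x,y)^{\dagger}$ is a symmetric $Z$-matrix iff \emph{each} conjugated clause term ${\sf W}(x,y) H_k {\sf W}(x,y)^{\dagger}$ is. But Proposition~\ref{prop:global2} does not say this. It says a two-local $H$ is a symmetric $Z$-matrix iff \emph{there exists} a two-local decomposition into symmetric $Z$-matrices; it does not say that any \emph{given} decomposition (the clause-wise one, nor its further refinement into two-local pieces) must consist of symmetric $Z$-matrices. Indeed the hard direction of that proposition works precisely because one-local terms can be freely redistributed among the $D_{uv}$, so it is routine to have $H = \sum_k D_k$ globally stoquastic with individual $D_k$ not stoquastic (e.g.\ $-3X_1X_2 + X_1$ plus $-3X_1X_3 - X_1$). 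Your trailing remark, that $H_k$ ``decomposes further into two-local pieces so that Proposition~\ref{prop:global2} applies,'' only makes this worse: you would then have a particular two-local decomposition of $H_C$, and Proposition~\ref{prop:global2} is silent on whether \emph{that} one is termwise stoquastic.

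The argument the paper actually uses for the decoupling is local and elementary, and you should replace your Proposition~\ref{prop:global2} appeal with it: qubit $d_k$ appears in $H_k$ and nowhere else, and the only off-diagonal positions where ${\sf W}(x,y) H_k {\sf W}(x,y)^{\dagger}$ can be positive are those that flip $d_k$ (the $-(X_{d_k}+\cdots)\otimes A$ structure forces positive entries to come from the diagonal of $A$ paired with the $X_{d_k}$ flip). Every other $H_{k'}$ acts trivially on $d_k$ and therefore contributes $0$ at any matrix position flipping $d_k$, so no cancellation can occur; hence each clause term must individually be a symmetric $Z$-matrix. Your last sentence gestures in this direction (``the single $d_k$-factor times a single variable factor''), but as written it does not close the gap.
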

\begin{proof}

Note that $(X_{a_k}+Z_{a_k}+I)$ is invariant under conjugation by $W_{d_k}$, so the choice of $y$ leaves $H_C$ unchanged. Furthermore $(X_{a_k}+Z_{a_k}+I)$ has non-negative matrix entries (with some positive off-diagonal matrix entries). Therefore ${\sf W}(x,y) H_k {\sf W}(x,y)^{\dagger}$ is a symmetric $Z$-matrix if and only if all the matrix entries of 
\begin{equation}
\label{eq:WSSSW}
{\sf W}(x,y) \big(S(c_{k,1})+S(c_{k,2})+S(c_{k,3})+2I\big) {\sf W}(x,y)^{\dagger}
\end{equation}
 are non-negative.
As discussed above, $S(c)$ has been defined so that the matrix entries of (\ref{eq:WSSSW}) are non-negative exactly when $\left(c_{k,1} \vee c_{k,2} \vee c_{k,3}\right)$ is true.

Since each $H_k$ is the only interaction acting on qubit $d_k$, and $H_k$ can only fail to be a symmetric $Z$-matrix due to terms which act non-trivially on $d_k$, it follows that \linebreak ${\sf W}(x,y) H_k {\sf W}(x,y)^{\dagger}$ must be a symmetric $Z$-matrix for all $k$, in order for \linebreak ${\sf W}(x,y) H_C {\sf W}(x,y)^{\dagger}$ to be a symmetric $Z$-matrix. Since $C=\bigwedge_{k=1}^m C_k$, this happens exactly when $C(x)$ is true.
\end{proof}

This leads to the main result of this section:
\begin{corollary}
There exists a class of two-local Hamiltonians for which {\sf LocalSignCure} is NP-complete.
\end{corollary}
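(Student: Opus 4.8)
The plan is to combine the two gadget constructions already in hand. First I would take an arbitrary 3-SAT instance $C$ with $n$ variables and $m$ clauses and form the two-local Hamiltonian $H_C$ on $m+n$ qubits defined in Eq.~(\ref{eq:HC}); this is computable from $C$ in polynomial time. Then I would apply the construction of Lemma~\ref{lem:restrictedbasis} to $H_C$, adjoining the $3(m+n)$ ancilla qubits $\{a_\alpha,b_\alpha,c_\alpha\}$ and the gadget $G$, to obtain $H_{\rm Had}=H_C+G$, which is still two-local and of size polynomial in $n+m$. The family of all such $H_{\rm Had}$, as $C$ ranges over 3-SAT instances, is the claimed class.

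Next I would establish NP-hardness by showing that $H_{\rm Had}$ is sign-curable by a product of single-qubit unitaries if and only if $C$ is satisfiable. By Lemma~\ref{lem:restrictedbasis}, $H_{\rm Had}$ is sign-curable iff there is a bit string over the $m+n$ original qubits — which I write as $(x,y)$ with $x\in\{0,1\}^n$ on the variable qubits and $y\in\{0,1\}^m$ on the clause qubits $d_k$ — such that ${\sf W}(x,y)\,H_C\,{\sf W}(x,y)^{\dagger}$ is a symmetric $Z$-matrix. Since each $(X_{d_k}+Z_{d_k}+I)$ is invariant under conjugation by $W_{d_k}$, the choice of $y$ is irrelevant, so this holds iff there is some $x\in\{0,1\}^n$ with ${\sf W}(x,0)\,H_C\,{\sf W}(x,0)^{\dagger}$ a symmetric $Z$-matrix; by Lemma~\ref{lem:H3SAT} this is exactly the condition that $C(x)$ is true for some $x$, i.e.\ that $C$ is satisfiable. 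Because 3-SAT is NP-complete and the map $C\mapsto H_{\rm Had}$ is a polynomial-time reduction, {\sf LocalSignCure} restricted to this family is NP-hard.

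Finally, NP-membership is immediate: each $H_{\rm Had}$ has the form required by Corollary~\ref{cor:NPcontainment}, so {\sf LocalSignCure} for it lies in NP, the short witness being the bit string $x$ specifying the Hadamard pattern, whose verification reduces to checking that a two-local Hamiltonian is a symmetric $Z$-matrix — polynomial by Lemma~\ref{lem:checkingIsEasy}. Combining NP-hardness with NP-membership yields NP-completeness for the class, proving the corollary.

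I do not expect a serious obstacle here, since the conceptual content is carried by Lemmas~\ref{lem:restrictedbasis} and~\ref{lem:H3SAT} and Corollary~\ref{cor:NPcontainment}. The only points needing care are verifying that adjoining $G$ keeps the instance polynomially sized and two-local, and tracking which qubits the bit strings $x$ and $y$ act on so that the chain of ``if and only if'' statements lines up correctly — in particular that the freedom in $y$ on the $d_k$ qubits is harmless because those single-qubit terms are Hadamard-invariant.
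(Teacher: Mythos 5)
Your proof is correct and follows essentially the same route as the paper: construct $H_{C,{\rm Had}}=H_C+G$, use Lemma~\ref{lem:restrictedbasis} to restrict to Hadamard patterns, Lemma~\ref{lem:H3SAT} to tie sign-curability of $H_C$ to satisfiability of $C$, and Corollary~\ref{cor:NPcontainment} for NP-membership. Your version spells out more carefully the chain of equivalences (in particular that the $y$-bits on the $d_k$ qubits are harmless), which the paper leaves implicit in the proof of Lemma~\ref{lem:H3SAT}, but the substance is identical.
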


\begin{proof}
For any 3-SAT formula $C$, we construct the two-local Hamiltonian $H_{C,{\rm Had}}$ by adding the gadget interactions $G$ of Eq. (\ref{eq:Hdiscrete}) for each qubit in the Hamiltonian $H_C$ in Eq.~(\ref{eq:HC}). Using  Lemma~\ref{lem:restrictedbasis} and Lemma~\ref{lem:H3SAT}, 
we conclude that satisfying a family of 3-SAT formulae $C$ is equivalent to {\sf LocalSignCure} for the corresponding family of $H_{C,{\rm Had}}$ Hamiltonians, from which we conclude that {\sf LocalSignCure} is NP-hard.
The inclusion of {\sf LocalSignCure} for $H_{C,{\rm Had}}$ in NP follows from  Corollary~\ref{cor:NPcontainment}.
\end{proof}


Let us briefly comment on the question how hard determining the groundstate energy of $H_{C, {\rm Had}}$ may be. We observe that the qubits $d_i$ and the triples of ancilla qubits $a_u,b_u,c_u$ for each $u$ only couple to the $n$ qubits on which the clauses act. 
In particular, if we would fix the state of these ancillary qubits to $\psi$, then the resulting Hamiltonian $\bra{\psi} H_{C, {\rm Had}} \ket{\psi}$ acting only on the clause qubits would be purely 1-local. It is however not a priori clear that the minimal energy is obtained when the state $\psi$ is a product state, if this were the case then the ground state energy problem would be in NP as a prover could provide a description of this product state. However even the problem of finding such a product ground state is not guaranteed to have an obvious polynomial-time classical algorithm.
It would be worthwhile to investigate this further.



\section{An efficient algorithm for {\sf LocalSignCure} for exactly two-local Hamiltonians}\label{sec:curingResult}


\subsection{Preliminaries}\label{sec:prelim}

In this section we prove Theorem \ref{thm2} by presenting an efficient algorithm for solving {\sf LocalSignCure} when $H$ is an exactly two-local Hamiltonian.

We represent an exactly two-local Hamiltonian by a graph $G$ with matrix-weighted edges. Each qubit in the Hamiltonian corresponds to a vertex in the graph, and each edge corresponds to a term $H_{uv} \neq 0$. Every edge is weighted by the $3\times 3$ real matrix $\beta_{uv}$ associated with $H_{uv}$, as discussed in Proposition~\ref{prop:ortho}. 

 In this picture,  {\sf LocalSignCure} reduces to the following problem.  Consider a graph $G=(V,E)$ with $n$ vertices in $V$ and a set of directed matrix-weighted edges $E$. Each edge $(u,v)$ with direction $u \rightarrow v$ is weighted by a $3\times 3$ real matrix $\beta_{uv}$, and we define $\beta_{vu} = \beta_{uv}^{\sf T}$~\footnote{The purpose of the direction is merely to allow the matrix weight to be well defined. Throughout the text we will ignore the directedness of the graph, and treat the edge as though it is weighted by $\beta_{uv}$ or $\beta_{vu}$ depending on our purpose.} 
Given $G$, find a set of $SO(3)$ rotations $\{O_u\}_{u=1}^n$ which have the action $O_u^{\sf T} \beta_{uv} O_v = \Sigma_{uv} \;  \forall \beta_{uv}$, such that for all edges $(u,v)$:
\begin{equation} \label{cond:diagonality0}
\Sigma_{uv} \textrm{ is a diagonal matrix},
\end{equation}
\begin{equation} \label{cond:magnitude0}
\vert (\Sigma_{uv})_{11} \vert \geq \vert (\Sigma_{uv})_{22} \vert \; \;\;\forall \beta_{uv},
\end{equation} 
\begin{equation} \label{cond:negativity0}
(\Sigma_{uv})_{11} \leq 0 \; \;\;\forall \beta_{uv}.
\end{equation}
Otherwise prove that no such set exists. 

Note that we have rephrased the conditions in Proposition~\ref{prop:basic} according to the labeling $X \rightarrow 1$, $Y \rightarrow 2$, $Z\rightarrow 3$. One can argue, see Ref.~\cite{klassen2018two}, that if there exist $O(3)$ rotations that perform this task, then one can easily construct a set of $SO(3)$ rotations that do the same. Therefore any orthogonal rotations will suffice. 

If all matrices $\beta_{uv}$ are diagonal, then the XYZ-algorithm in Theorem~\ref{theo:KT} can be applied. Naively, our problem could then be reduced to the question: Is there a set of rotations $\{O_u\}$ that has the action $O_u^{\sf T} \beta_{uv} O_v = \Sigma_{uv} \;  \forall \beta_{uv}$, such that condition~\ref{cond:diagonality0} is satisfied, and what are those rotations? If this problem is efficiently solved, one may incorporate the algorithm for finding the set of rotations as a sub-routine of the XYZ-algorithm and solve the entire problem. However we show in Appendix~\ref{app:HeisForm} that deciding the existence of such a set of rotations on $\beta_{uv}$ such that condition~\ref{cond:diagonality0} is satisfied is an NP-hard problem.  


Thus a different approach must be taken, namely we focus on condition~\ref{cond:magnitude0} in order to prune the set of solutions which needs to be considered. More concretely, we will present an algorithm which solves the following problem:

\begin{problem*}[{\sf No-Lone-YY $\&$ Diagonal}]\label{problemStatement}
Is there a set of orthogonal rotations $\{O_u \in O(3)\}$ that have the action $O_u^{\sf T} \beta_{uv} O_v = \Sigma_{uv} \;  \forall \beta_{uv}$, such that:
\begin{equation}\label{cond:diagonality}
\Sigma_{uv} \textrm{ is a diagonal matrix}, 
\end{equation}
\begin{equation} \label{cond:magnitude}
 (\Sigma_{uv})_{22} = 0, \; \;\;\forall \beta_{uv} \textrm{ for which } \textrm{Rank}\,(\beta_{uv}) =1.
\end{equation} 
If yes, what is that set? Note here that condition~\ref{cond:diagonality} is identical to condition~\ref{cond:diagonality0}, and condition~\ref{cond:magnitude} is precisely condition~\ref{cond:magnitude0} restricted to rank-1 matrices. 
\end{problem*}

Note that an efficient algorithm for this problem can be incorporated into the XYZ-algorithm to produce an efficient algorithm for ${\sf LocalSignCure}$ for exactly two-local Hamiltonians, thus directly proving Theorem~\ref{thm2}. More precisely, a solution to this problem prescribes a transformation of our Hamiltonian into an XYZ-Heisenberg Hamiltonian, in which case the XYZ-algorithm can be used to decide if the Hamiltonian can be rotated into a symmetric $Z$-matrix by single-qubit unitary transformations. Furthermore, if no solution exists to this problem, then rotating the Hamiltonian into a symmetric $Z$-matrix by single-qubit transformations is impossible, since both of the above conditions are necessary conditions.

An orthogonal transformation $O_u$ can be written as $O_u=(e_u^1, e^2_u, e_3^u)$ where the $e_u^i$ are three real orthonormal column vectors. We can thus view selecting $O_u$ as selecting a basis $b_u=(e_u^1, e_u^2, e_u^3)$ at vertex $u$.

\begin{definition}[No-Lone-YY Basis (NLY Basis)]
Given a matrix-weighted \linebreak graph $G$ with weights $\beta_{uv}$, an ordered assignment of basis vectors $b_u=(e_1^u, e_2^u, e_3^u)$ to each vertex in the graph is called a \textbf{No-Lone-YY basis (NLY basis)} $B=\{b_u\}$, when $e_i^v$ is a right singular vector of $\beta_{uv}$ with corresponding left singular vector equal to $\pm e_i^u$, i.e. 
\begin{equation} \label{eq:NLYcond1}\forall u,v, i: \; \beta_{uv} e_i^v=\pm \sigma e_i^u ,\;\; \beta_{uv}^{\sf T} e_i^u=\pm \sigma e_i^v,
\end{equation}  and for all rank $1$ matrices $\beta_{uv}$:   
\begin{equation}\label{eq:NLYcond2} \beta_{uv} e_2^v=0 ,\;\; \beta_{uv}^{\sf T} e_2^u=0.
\end{equation}
\end{definition}

It is not hard to see that solving problem statement~\ref{problemStatement} is equivalent to finding a NLY basis, or showing that none exists. 

It is important to note that if we flip the signs on our basis elements, this will have no bearing on the problem. We formally define this equivalence under sign flips as 
\begin{definition}
Two ordered bases $b_u$ and $b_u'$ are  equivalent modulo signs:
$$b_u = b_u' \text{ modulo signs} $$
if $b_u = (e_1^u, e_2^u, e_3^u)$, and $b_u' = (\delta_1^u e_1^u, \delta_2^u e_2^u, \delta_3^u e_3^u)$  with $\delta_i^u \in \{+1, -1\}$.  
\end{definition}
Thus throughout the text we will often talk about a basis \textbf{modulo signs}, meaning a basis choice where the signs have not been specified. The premise is that the choice of signs is irrelevant for the purposes of the problem. This will prove to be a useful fact in the proofs of Lemma~\ref{lem:genericAnsatz} and Theorem~\ref{thrm:genericAlgo}.




A final comment on notation. In the next two subsections we will make use of \textbf{sets of subspaces} of $\mathbb{R}^3$. We wish to hold onto the notion that these are sets of subspaces, but make use of natural set notation in terms of the elements of the subspaces. Consequently, for ease of exposition, we will abuse notation in the following ways. We denote a set of subspaces by $\mathbb{S}=\{S_i \vert S_i \subseteq \mathbb{R}^3\}$. We denote the entrywise intersection of sets of subspaces by 
\begin{equation*}
 \mathbb{S}_1 \cap \mathbb{S}_2 :=  \{S_i \cap S_j \vert S_i \in \mathbb{S}_1 \;, S_j \in \mathbb{S}_2\}.
 \end{equation*}
We denote the span of the union of the subspaces by 
\begin{equation*}
 \textrm{span} (\mathbb{S}) := \textrm{span}\left( \bigcup_{S_i \in \mathbb{S}} S_i \right).
\end{equation*}
We say a set of vectors $b= \{\nu \vert \nu \in \mathbb{R}^3 \}$ is in a set of subspaces $\mathbb{S}$, with the notation $b \subseteq \mathbb{S}$, if every vector in $b$ belongs to a subspace in $\mathbb{S}$. Furthermore, we say a set of subspaces $\mathbb{S}_1$ is contained in another set of subspaces $\mathbb{S}_2$, with the notation $\mathbb{S}_1 \subseteq \mathbb{S}_2$, if every subspace in $\mathbb{S}_1$ is contained in a subspace in $\mathbb{S}_2$. The reason these two notations coincide is because it can be helpful for our purposes to conceptualize the vectors in $b$ as 1-dimensional subspaces, since we do not care about the sign of the vector.
We denote the transformation on each of the subspaces by an orthogonal rotation $O$ as:
\begin{equation*}
O\mathbb{S} := \{ OS_i \vert S_i \in \mathbb{S} \}.
\end{equation*}

\subsection{XOR-SAT}
In the next two subsections we will make repeated use of a subroutine for solving the 2-XOR-SAT problem. XOR-SAT is a Boolean satisfiability problem in which one has a set of Boolean variables $\{x_u\}$ and a set of clauses consisting of not operations and xor operations e.g. $ \bar{x}_u \oplus x_v$, and one asks if there exists an assignment to the Boolean variables which satisfies all of the clauses. XOR-SAT is known to be solvable in polynomial time. 2-XOR-SAT is quite trivially solvable in time $O(N^2)$, where $N$ is the number of variables: the assignment of one variable in the clause uniquely determines the assignment of the other variable in the clause. Thus one varies the assignment of one variable, and propagates that choice through the clauses (of which there are worst case $N^2$), until all variables are assigned or a contradiction is found (if there are disconnected sets of variables, one does the same thing for each disconnected cluster).


\subsection{Illustrative sub-case: graphs with rank-1 edges}\label{subsec:rank1graph}
We begin by considering an illustrative sub-case, where each edge in the graph is weighted by a \lowrank\ matrix (i.e. a \textbf{\lowrank\ edge}).
The significance of \lowrank\ edges is that their matrix weights have a two dimensional null space, which implies an additional freedom in the choice of basis that is not present in edges weighted by \highrank\ matrices (i.e \textbf{\highrank\ edges}), which have at most a one-dimensional null space. This difference will become more apparent when we consider the general case of a graph with both \highrank\ and \lowrank\ edges. 


For a graph with only \lowrank\ edges the algorithm for solving problem statement~\ref{problemStatement}  breaks up into two parts. In the first part we impose some of the necessary constraints for the basis assignment to be NLY, formulating a candidate basis $B$. 
In the second part we permute the vectors of the candidate basis so that it could become an NLY basis.


\begin{definition}[Candidate Basis of a \lowrank\ graph] \label{def:ansatzRank1}
A \textbf{candidate basis of a \lowrank\ graph} is a basis assignment $B= \{b_u\}$ such that for every edge $e=(u,v)$, the basis vectors $b_u = (e_1^u, e_2^u, e_3^u)$ are eigenvectors of $\beta_{uv} \beta_{uv}^{\sf T}$ and the basis vectors $b_v= (e_1^v, e_2^v, e_3^v)$ are eigenvectors of $\beta_{uv}^{\sf T} \beta_{uv}$.
\end{definition}

\begin{proposition} \label{prop:lowrank}
Given a \lowrank\ matrix $\beta_{uv}$, if the basis vectors $b_u $ are eigenvectors of $\beta_{uv} \beta_{uv}^{\sf T}$ and the basis vectors $b_v$ are eigenvectors of $\beta_{uv}^{\sf T} \beta_{uv}$, then there exists a single index $i$  such that $\beta_{uv}^{\sf T} e_i^u \neq 0$  and a single index $j$ such that $\beta_{uv} e_j^v \neq 0$. Furthermore, $\exists \sigma \neq0$ s.t. $\beta_{uv}e_i^v = \pm \sigma e_j^u$ and $\beta_{uv}^{\sf T}e_j^u = \pm \sigma e_i^v$.  
\end{proposition}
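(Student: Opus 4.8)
The plan is to reduce everything to the singular value decomposition of the rank-$1$ weight matrix. Write $\beta_{uv}=\sigma f g^{\sf T}$, where $\sigma>0$ is the unique nonzero singular value and $f,g\in\mathbb{R}^3$ are unit left/right singular vectors (determined up to a common sign). Then $\beta_{uv}\beta_{uv}^{\sf T}=\sigma^2 ff^{\sf T}$ and $\beta_{uv}^{\sf T}\beta_{uv}=\sigma^2 gg^{\sf T}$ are symmetric rank-$1$ matrices, each with eigenvalue $\sigma^2$ on the one-dimensional eigenspace $\mathrm{span}(f)$ (resp.\ $\mathrm{span}(g)$) and eigenvalue $0$ on its orthogonal complement. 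I will also use the standard identities $\ker(\beta_{uv}\beta_{uv}^{\sf T})=\ker(\beta_{uv}^{\sf T})=\mathrm{span}(f)^{\perp}$ and $\ker(\beta_{uv}^{\sf T}\beta_{uv})=\ker(\beta_{uv})=\mathrm{span}(g)^{\perp}$.

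The heart of the argument is the ``single index'' claim, which I would establish by a dimension count. The ordered basis $b_u=(e_1^u,e_2^u,e_3^u)$ is orthonormal (it is the column set of $O_u\in O(3)$) and by hypothesis consists of eigenvectors of $\beta_{uv}\beta_{uv}^{\sf T}$. Any such eigenvector with eigenvalue $0$ lies in $\mathrm{span}(f)^{\perp}$, and any with nonzero eigenvalue lies in the one-dimensional space $\mathrm{span}(f)$, hence equals $\pm f$. Since three orthonormal vectors cannot all lie inside the two-dimensional space $\mathrm{span}(f)^{\perp}$, at least one $e_i^u$ escapes it; since two distinct orthonormal vectors cannot both equal $\pm f$, exactly one does. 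For that index $i$ we have $e_i^u=\pm f\notin\ker(\beta_{uv}^{\sf T})$, so $\beta_{uv}^{\sf T}e_i^u\neq 0$, while for the other two indices $e_k^u\in\mathrm{span}(f)^{\perp}=\ker(\beta_{uv}^{\sf T})$, so $\beta_{uv}^{\sf T}e_k^u=0$. Running the identical argument with $\beta_{uv}^{\sf T}\beta_{uv}=\sigma^2 gg^{\sf T}$ and the orthonormal basis $b_v$ produces the unique index $j$ with $e_j^v=\pm g$ and $\beta_{uv}e_k^v=0$ for all $k\neq j$.

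For the last assertion I would simply substitute back into the SVD: $\beta_{uv}e_j^v=\sigma f\,(g^{\sf T}e_j^v)=\pm\sigma f=\pm\sigma e_i^u$ and $\beta_{uv}^{\sf T}e_i^u=\sigma g\,(f^{\sf T}e_i^u)=\pm\sigma g=\pm\sigma e_j^v$, using $e_i^u=\pm f$, $e_j^v=\pm g$, and absorbing the $\pm 1$ inner products into the overall sign. Thus $e_i^u$ and $e_j^v$ form a left/right singular-vector pair of $\beta_{uv}$ with singular value $\sigma\neq 0$, which is exactly the claimed relation.

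I do not anticipate any serious obstacle: the only point requiring care is the uniqueness (``single index'') claim, and even that is just the observation that a genuinely rank-$1$ symmetric matrix has a one-dimensional top eigenspace and a two-dimensional kernel, so any orthonormal eigenbasis splits $1+2$ in precisely one way. Everything else is a one-line computation with the SVD.
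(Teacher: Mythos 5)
Your proof is correct and takes essentially the same route as the paper: observe that $\beta_{uv}\beta_{uv}^{\sf T}$ and $\beta_{uv}^{\sf T}\beta_{uv}$ are rank-$1$, so at most one vector of each orthonormal basis can carry the nonzero eigenvalue, and then use the one-dimensionality of the column and row spaces to identify the paired singular vectors. The only difference is cosmetic: you make the rank-$1$ SVD $\beta_{uv}=\sigma f g^{\sf T}$ explicit and carry out the substitution directly, whereas the paper states the same facts a little more tersely. (You also implicitly corrected the paper's slip in the final displayed relation, where the indices $i$ and $j$ are swapped relative to their introduction earlier in the proposition; your formulation $\beta_{uv}e_j^v=\pm\sigma e_i^u$ is the internally consistent one.)
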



\begin{proof}
Since $\beta_{uv}$ is \lowrank\ it follows that $\beta_{uv} \beta_{uv}^{\sf T}$ and $\beta_{uv}^{\sf T} \beta_{uv}$ are also \lowrank . Thus only single basis vectors $e_i^u \in b_u$ and $e_j^v \in b_v$ will be eigenvectors with non-zero eigenvalue of $\beta_{uv} \beta_{uv}^{\sf T}$ and $\beta_{uv}^{\sf T} \beta_{uv}$ respectively. Therefore $e_i^u$ and $e_j^v$ are the only singular vectors in $b_u$ and $b_v$ which have non-zero singular values for $\beta_{uv}$. Since the column and row spaces of $\beta_{uv}$ are both 1-dimensional, it must be the case that  $\beta_{uv}e_i^v = \pm \sigma e_j^u$ and $\beta_{uv}^{\sf T}e_j^u = \pm \sigma e_i^v$ for some $\sigma \neq 0$.
\end{proof}

Note that given a candidate basis (and the corresponding orthogonal rotations $\{O_u\}$) the matrix $O_u^{\sf T} \beta_{uv} O_v$ has exactly one non-zero entry but isn't necessarily diagonal. An example of a matrix of this form would be:
\begin{equation}\label{eq:lowRankExample}
O_u^{\sf T} \beta_{uv} O_v = \begin{bmatrix}
0&0&4\\
0&0&0\\
0&0&0
\end{bmatrix}.
\end{equation} 

Therefore a candidate basis is close to being a NLY basis, except the ordering of the basis vectors in $b_u$ and $b_v$ may not be correct. In order to remedy this, we need to permute orderings of the various $b_u$. 
To help visualize this, we may consider the edge $(u,v)$ to be labelled by $i$ on the $u$ side, and $j$ on the $v$ side, where $i$ and $j$ are the indices specified in Proposition~\ref{prop:lowrank}. For example, the matrix in Eq.~(\ref{eq:lowRankExample}) would correspond to the edge in Figure~\ref{fig:bilabeledEdge}.
\begin{figure}[htb]
\begin{center}
\includegraphics[scale=.7]{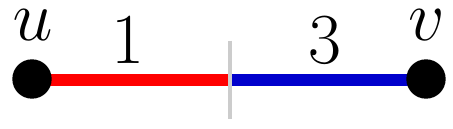}
\end{center}
\caption{Bilabelling of a \lowrank\ edge}\label{fig:bilabeledEdge}
\end{figure}
In this picture the candidate basis $B$ thus specifies a bi-labelled graph, i.e. a graph where every edge has two labels (two colors).

\begin{definition}[Basis Permutation]
Given a basis $b = (e_1, e_2, e_3)$ and permutation $\pi$, the \textbf{permuted basis $b_u^{\pi}$} is defined as:
$b_u^{\pi} := (e_{\pi^{-1}(1)}^u ,e_{\pi^{-1}(2)}^u ,e_{\pi^{-1}(3)}^u  ).$ Given a basis assignment to every vertex $B = \{b_u\}$ and an assignment of permutations to every vertex $\Pi = \{\pi_u\}$, the permuted basis assignment is defined as $B^{\Pi}: = \{ b_u^{\pi_u} \}$

\end{definition}

 Given that the candidate basis $B$ specifies a bi-labelled graph, we can think of the action of basis permutations 
$ b_u \rightarrow b_u^{\pi} $ as a transformation on the labeling, $i \rightarrow \pi(i)$, of every label adjacent to $u$, as illustrated in Figure~\ref{fig:bilabeledGraph}.
 \begin{figure}[htb]
\begin{center}
\includegraphics[scale=.7]{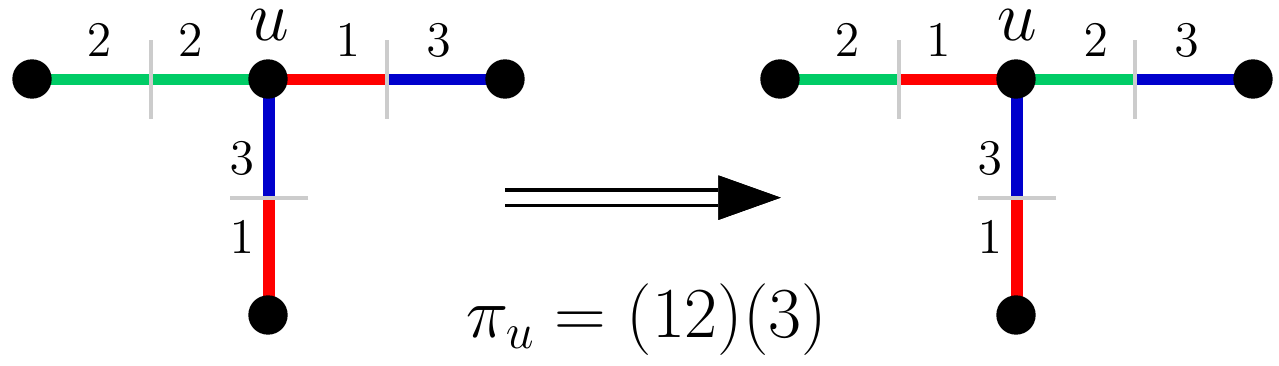}
\end{center}
\caption{Action of permutations on a bi-labelled graph}\label{fig:bilabeledGraph}
\end{figure} 
 The premise is then that the only remaining task is to find a set of permutations $\{\pi_u \in S_3\}$ to apply to every vertex so that:
\begin{itemize}
\item  The bi-labelling is uniform on an individual edge (i.e. $i=j$), corresponding to condition~\ref{eq:NLYcond1}.
\item no edge is labelled by the (green) value $2$, corresponding to condition~\ref{eq:NLYcond2}.
\end{itemize}

If we are unsuccessful in either finding a candidate basis $B$, or an appropriate permutation $\Pi$, then we will argue that no NLY basis exists.

\begin{algorithm}
    \SetKwInOut{Input}{Input}
    \SetKwInOut{Output}{Output}

    \Input{Graph $G = (V, E)$, \lowrank\ matrix edge weights $\{\beta_{uv}\}$ }
    \Output{A candidate basis $B= \{b_u\}$, if one exists. Otherwise False, indicating no candidate basis exists.}
    \For{$v \in V$}{
    	$\mathbb{S}_v = \{ \mathbb{R}^3 \}$
    	
    	\For{$u \in V$ s.t. $e=(u,v) \in E$}{
    		$\mathbb{S}_v^e= $  the set of orthogonal maximal eigenspaces associated with every eigenvalue of $\beta_{uv}^{\sf T} \beta_{uv}$
    		
    		$\mathbb{S}_v = \mathbb{S}_v \cap \mathbb{S}_v^e$
    	}
    	
    	\If{$\textrm{span}(\mathbb{S}_v) \neq \mathbb{R}^3$ }
      {
        return False\;
      }
      Choose orthonormal basis $b_v = (e_1^v, e_2^v, e_3^v) \subseteq \mathbb{S}_v$
    }
      return $B= \{b_v \}$
 
    \caption{Algorithm for finding a candidate Basis of a \lowrank\ graph}\label{alg:rank1ansatz}
\end{algorithm}

\begin{lemma}\label{lem:rank1asatz}
Algorithm~\ref{alg:rank1ansatz} efficiently finds a candidate basis for a \lowrank\ graph, or otherwise shows that no such candidate basis exists.
\end{lemma}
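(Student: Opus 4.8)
Proof plan for Lemma~\ref{lem:rank1asatz}.

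\textbf{Overview of the approach.} The plan is to argue that Algorithm~\ref{alg:rank1ansatz} is correct and efficient in two parts: first, that whenever it returns a basis assignment $B$, that $B$ is indeed a candidate basis; and second, that whenever it returns False, no candidate basis exists. The correctness of the first part rests on the observation that the algorithm only outputs $b_v$ drawn from $\mathbb{S}_v = \bigcap_e \mathbb{S}_v^e$, and each $\mathbb{S}_v^e$ is a complete orthogonal eigenspace decomposition of $\beta_{uv}^{\sf T}\beta_{uv}$. Any vector lying in one of the subspaces of $\mathbb{S}_v^e$ is an eigenvector of $\beta_{uv}^{\sf T}\beta_{uv}$, and similarly (by the defining relation $\beta_{vu}=\beta_{uv}^{\sf T}$, so that the set built at vertex $u$ for the same edge uses $\beta_{uv}\beta_{uv}^{\sf T}$) the vectors $b_u$ are eigenvectors of $\beta_{uv}\beta_{uv}^{\sf T}$. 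This is exactly Definition~\ref{def:ansatzRank1}, so the output is a candidate basis.

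\textbf{The converse direction.} For the second part, I would show the contrapositive: if a candidate basis exists, the algorithm succeeds. Suppose $B' = \{b_v'\}$ is a candidate basis. Then for each vertex $v$ and each adjacent edge $e=(u,v)$, every vector of $b_v'$ is an eigenvector of $\beta_{uv}^{\sf T}\beta_{uv}$, hence lies in one of the maximal eigenspaces comprising $\mathbb{S}_v^e$; thus $b_v' \subseteq \mathbb{S}_v^e$ for every $e$, and therefore $b_v' \subseteq \bigcap_e \mathbb{S}_v^e = \mathbb{S}_v$. Since $b_v'$ is an orthonormal basis of $\mathbb{R}^3$ contained in $\mathbb{S}_v$, we get $\mathrm{span}(\mathbb{S}_v) = \mathbb{R}^3$, so the algorithm does not return False at vertex $v$; as this holds for all $v$, the algorithm reaches the final line and returns some $B$. (It need not return the same $B'$, but by the first part whatever it returns is a valid candidate basis, which is all that is required.) The one point requiring a little care here is that the entrywise intersection of two sets of mutually orthogonal subspaces is again a set of mutually orthogonal subspaces — this follows because the intersection of an eigenspace of one symmetric matrix with an eigenspace of another is a subspace, and distinct such intersections are orthogonal since they sit inside orthogonal eigenspaces — which guarantees that the ``choose orthonormal basis $b_v \subseteq \mathbb{S}_v$'' step is well-defined whenever $\mathrm{span}(\mathbb{S}_v)=\mathbb{R}^3$.

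\textbf{Efficiency.} For each vertex $v$ the algorithm processes each adjacent edge once; for each edge it computes the eigenspace decomposition of a $3\times 3$ real symmetric matrix (a constant-size linear-algebra operation over $\mathbb{R}$) and performs one entrywise subspace intersection of two sets of at most three subspaces of $\mathbb{R}^3$ (again a constant-size operation). The span check and the final basis selection are likewise constant-size. Hence the total cost is $O(|E|)$ operations over $\mathbb{R}$, which is polynomial (indeed linear) in $n$.

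\textbf{Main obstacle.} I expect the only genuinely delicate point to be the bookkeeping around the abused set-of-subspaces notation: making sure that throughout the intersections the sets $\mathbb{S}_v$ remain collections of \emph{mutually orthogonal} subspaces whose direct sum is (at the end) all of $\mathbb{R}^3$, so that a genuine orthonormal basis can be extracted. Everything else is a direct unwinding of Definition~\ref{def:ansatzRank1} together with the spectral theorem for $3\times 3$ symmetric matrices.
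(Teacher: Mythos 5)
Your proposal is correct and takes essentially the same route as the paper: both directions rest on the observation that $\mathbb{S}_v$ is exactly the set of subspaces whose vectors are simultaneous eigenvectors of $\beta_{uv}^{\sf T}\beta_{uv}$ for all edges at $v$, so the algorithm's output (if any) is a candidate basis by construction, and the span check at each vertex is a sound and complete test for existence; the efficiency argument via $O(1)$-sized eigenspace sets per edge also matches. Your added remark that entrywise intersections of orthogonal-subspace families remain orthogonal-subspace families is a small but legitimate point that the paper glosses over.
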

\begin{proof}
Any vectors we choose from  $\mathbb{S}_v$ must simultaneously be eigenvectors of $\beta^{\sf T}_{uv} \beta_{uv}$ for all edges $e=(u,v)$  adjacent to $v$, since they must simultaneously belong to every $\mathbb{S}_v^e$. Furthermore, the spaces in $\mathbb{S}_v$ contain all vectors that are simultaneously eigenvectors of $\beta^{\sf T}_{uv} \beta_{uv}$ for all edges $e=(u,v)$  adjacent to $v$. Therefore if $\mathbb{S}_v$ does not span $\mathbb{R}^3$, then we cannot possibly choose a set of orthonormal vectors $b_v$ which are simultaneous eigenvectors of all neighbouring edges. 

The number of elements in any set of eigenspaces $\mathbb{S}_v^e$ is upper bounded by 3, corresponding to 3 orthogonal 1-dimensional subspaces. The same is true for any intersection of any number of these sets. Thus computing any intersection between these sets of subspaces takes $O(1)$ time. Therefore one may iteratively construct $\mathbb{S}_v$ in time proportional to the number of edges. Thus the algorithm is efficient.
\end{proof}

\begin{algo}[Finding permutations $\Pi$ such that $B^{\Pi}$ is an NLY basis] 
This algorithm takes a candidate basis $B$ of a \lowrank\ graph and  finds a set of permutations $\Pi$ such that $B^{\Pi}$ is a NLY basis, or otherwise indicates that no such set of permutations exist.

 For each edge $(u,v)$, identify the left singular vector $e_i^u \in b_u$ and corresponding right singular vector $e_j^v \in b_v$ which are not in the null space of $\beta_{uv}$, which must exist by Proposition~\ref{prop:lowrank}.  Label each \lowrank\ edge $(u,v)$ with an ordered pair of labels $(i,j)$, as illustrated in figure~\ref{fig:bilabeledEdge}. We say that an edge $e=(u,v)$ with labelling $(i,j)$ connects to $u$ with label $i$ and connects to $v$ with $j$.

If for any vertex $v$ there are at least three edges, each connected to $v$ by a different label, then terminate and indicate that the desired set of permutations does not exist.

If the algorithm has not terminated,  then for every vertex $v$ there exist two labels $i$ and $j$ such that every edge adjacent to $v$ connects to $v$ with one of those two labels. This holds even if every edge connects to $v$ with the same label. Identify a pair of permutations $\pi^0_{v}$ and $\pi^1_{v}$ such that $\pi^0_{v}(i)=1$, $\pi^0_{v}(j)=3$ and $\pi^1_{v}(i)=3$, $\pi^1_{v}(j)=1$.

The task now becomes assigning a binary value $x_{v}$ to each $v$ so that for every edge $(u,v)$ with label $(i,j)$ the binary assignments satisfy 
\begin{equation*}
\pi^{x_{u}}_{u}(i) = \pi^{x_{v}}_{v}(j).
\end{equation*}
 By virtue of $\pi$ never mapping any label to the value $2$, and ensuring the uniform bi-labelling of each edge, such an assignment will specify an NLY basis. 

This problem reduces straightforwardly to an XOR-SAT problem. Each edge $(u,v)$ corresponds to an XOR clause: $\bar{x}_{u} \oplus x_{v}$ if $i=j$, and $ x_{u} \oplus x_{v}$ if $i \neq j$. If there is a solution, then this specifies an NLY basis, namely $B^{\Pi}$ with $\Pi= \{\pi^{x_{u}}_{u}\}$. If there is no solution, then the desired set of permutations does not exist.
\end{algo}

\begin{algorithm} 
    \SetKwInOut{Input}{Input}
    \SetKwInOut{Output}{Output}

    \Input{Graph $G = (V, E)$, \lowrank\ matrix edge weights $\{\beta_{uv}\}$, candidate basis $B = \{b_u\}$ }
    \Output{A set of permutations $\Pi = \{ \pi_u\} $ such that $B^{\Pi}$ is an NLY basis, if one exists. Otherwise False.}
 	\For{$v \in V$}{
 		\tcc{Label all incident edges according to which basis vector in $b_v$ is not in the null space of $\beta_{uv}$. These always exist by Proposition~\ref{prop:lowrank}.}
 		$L(v) =\{ \}$
 		
 		\For{$u\in V$ s.t. $e=(u,v)\in E$}{
 			\For{$i\in \{1,2,3\}$} {
			$e_i^v = b_v[i]$
			
			\If{$\beta_{uv}e_i^v\neq0$}							{	
				$L(v,e) = i$
				
				$L(v) = L(v) \cap \{i\}$
			}
			}
		}
		\tcc{If a vertex is incident on more than two different labels, then return false.}
			\If{$\vert L(v) \vert = 3$}{
				return False
			} 	
			\tcc{If the algorithm has not terminated,  then for every vertex $v$ there exist at most two labels such that every edge incident on $v$ connects to $v$ with one of those two labels.}			
			
			\tcc{Define permutations so that all incident edge labels are mapped to either $1$ or $3$.}
			Choose permutation $\pi_v^0$ s.t. $\pi_v^0(L(v)[1])=1$, and if $\vert L(v)\vert>1$ then $\pi_v^0(L(v)[2])=3$
			
			Choose permutation $\pi_v^1$ s.t. $\pi_v^1(L(v)[1])=3$, and if $\vert L(v)\vert>1$ then $\pi_v^1(L(v)[2])=1$ 
 		}
 		
		\tcc{For each edge define a 2-XOR-SAT clause.} 		
 		
 		\For{$e=(u,v)\in E$}{
 			\eIf{$L(v,e) = L(u,e)$}{
 				$C_e(x_u,x_v) = x_u \oplus x_v$
 			}{
				 $C_e(x_u,x_v) = \bar{x}_u \oplus x_v$			
 			}
 		}
 		\tcc{The solution to the associated 2-XOR-SAT problem specifies which permutations to apply at each vertex so that $\pi_u^{x_u}(i) = \pi_v^{x_v}(j)$ . By virtue of $\pi$ never mapping any label to the value $2$, and ensuring the uniform bi-labelling of each edge, such an assignment will specify an NLY basis.} 
 		success= 2-XOR-SAT(ref $\{ x_v \;\vert\; \forall v \in V\}$, $\{C_e \;\vert\; \forall e \in E\}$)

 		\eIf{success}{
 			return $\Pi = \{ \pi_v^{x_v}\;\vert\; \forall v \in V\}$
 		}{
 			return False
 		}

    \caption{Algorithm for finding permutations $\Pi$ such that $B^{\Pi}$ is an NLY basis}\label{alg:rank1perms}
\end{algorithm}



\begin{theorem}\label{thrm:rank1algo}
Given a graph with only \lowrank\ edges, one can efficiently find an NLY basis, or otherwise show that no such basis exists.
\end{theorem}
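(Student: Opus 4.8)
The plan is to prove Theorem~\ref{thrm:rank1algo} by combining Algorithm~\ref{alg:rank1ansatz} and Algorithm~\ref{alg:rank1perms} and verifying that the combined procedure is correct and runs in polynomial time. The first thing to record is that every NLY basis is, in particular, a candidate basis in the sense of Definition~\ref{def:ansatzRank1}: condition~(\ref{eq:NLYcond1}) states that each $e_i^v$ is a right singular vector and $e_i^u$ a left singular vector of $\beta_{uv}$, and hitting these relations with $\beta_{uv}^{\sf T}$ and $\beta_{uv}$ respectively shows that $e_i^v$ is an eigenvector of $\beta_{uv}^{\sf T}\beta_{uv}$ and $e_i^u$ an eigenvector of $\beta_{uv}\beta_{uv}^{\sf T}$ (with eigenvalue $\sigma^2$, possibly $0$). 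Consequently, if Algorithm~\ref{alg:rank1ansatz} reports --- correctly, by Lemma~\ref{lem:rank1asatz} --- that no candidate basis exists, then no NLY basis exists and we are done.

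Given a candidate basis $B$ returned by Algorithm~\ref{alg:rank1ansatz}, I would next establish the correctness of Algorithm~\ref{alg:rank1perms}. By Proposition~\ref{prop:lowrank} each rank-1 edge $(u,v)$ has a unique (modulo sign) nonzero-singular direction $e_i^u\in b_u$ on the $u$ side and $e_j^v\in b_v$ on the $v$ side, which makes the bi-labelling well defined. The crucial observation is that, for a rank-1 edge, as soon as a permuted candidate basis matches these two labels --- i.e. $\pi_u(i)=\pi_v(j)$ --- condition~(\ref{eq:NLYcond1}) holds automatically on that edge, because the remaining two basis vectors at $u$ (resp. $v$) form an orthonormal basis of the $2$-dimensional null space of $\beta_{uv}^{\sf T}$ (resp. $\beta_{uv}$) no matter how that null space is split. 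Condition~(\ref{eq:NLYcond2}) then amounts exactly to demanding that the matched label not be $2$, i.e. $\pi_u(i)=\pi_v(j)\in\{1,3\}$. From this I would conclude: if some vertex $v$ has three incident edges using all three labels (the case $|L(v)|=3$), then after any $S_3$ permutation one of those labels lands at position $2$, so Algorithm~\ref{alg:rank1perms} correctly returns False; otherwise each vertex has at most two labels, the two permutations $\pi_v^0,\pi_v^1$ that send them to $\{1,3\}$ are well defined, and the requirement ``$\pi_u^{x_u}$ agrees with $\pi_v^{x_v}$ on the matched label'' is precisely the 2-XOR-SAT clause the algorithm writes ($x_u\oplus x_v$ or $\bar x_u\oplus x_v$ according to whether $L(u,e)=L(v,e)$ or not). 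A satisfying assignment produces $B^{\Pi}$ with $\Pi=\{\pi_v^{x_v}\}$ meeting both (\ref{eq:NLYcond1}) and (\ref{eq:NLYcond2}), hence an NLY basis; infeasibility of the 2-XOR-SAT instance means no permutation of $B$ is NLY.

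The step I expect to be the main obstacle --- and the one needing the most care --- is bridging ``no permutation of the particular $B$ is NLY'' with ``no NLY basis exists'', since Algorithm~\ref{alg:rank1ansatz} makes arbitrary choices inside degenerate eigenspaces. The idea is that the bi-labelled graph produced from $B$, regarded up to per-vertex relabelling, is independent of these choices. Indeed, since each $\beta_{uv}^{\sf T}\beta_{uv}$ has rank $1$, its positive eigenspace is a fixed line; after the orthogonal intersection performed by Algorithm~\ref{alg:rank1ansatz}, at every vertex $v$ the set $\mathbb{S}_v$ is either three fixed mutually orthogonal lines --- in which case $b_v$ is determined modulo signs and a permutation --- or it contains a $2$-dimensional subspace $S_v$, in which case every edge incident on $v$ must have its nonzero-singular direction along the single line $S_v^{\perp}$, so all those edges carry the same label at $v$ and the split of $S_v$ is irrelevant to both NLY conditions. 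Since any NLY basis is (by the first paragraph) a candidate basis, it differs from $B$ only by per-vertex signs, a per-vertex permutation, and such an irrelevant re-splitting of $2$-dimensional null spaces; hence the 2-XOR-SAT instance built from $B$ is feasible if and only if some NLY basis exists. Finally, efficiency is immediate: Algorithm~\ref{alg:rank1ansatz} is efficient by Lemma~\ref{lem:rank1asatz}, and Algorithm~\ref{alg:rank1perms} performs $O(|V|+|E|)$ elementary steps plus one call to 2-XOR-SAT, which runs in time $O(|V|^2)$.
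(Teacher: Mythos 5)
Your proposal is correct and follows essentially the same route as the paper: combine Algorithm~\ref{alg:rank1ansatz} with Algorithm~\ref{alg:rank1perms}, observe that every NLY basis is a candidate basis, check that matching labels on a \lowrank\ edge automatically yields condition~(\ref{eq:NLYcond1}) while avoiding label $2$ yields~(\ref{eq:NLYcond2}), and then establish the key fact that failure of Algorithm~\ref{alg:rank1perms} on the particular $B$ returned by Algorithm~\ref{alg:rank1ansatz} rules out every NLY basis. The only cosmetic difference is how that key fact is argued: the paper runs a direct contradiction argument --- assume an NLY basis $\bar{B}$ exists, use the one-dimensionality of the nonzero singular direction of each \lowrank\ $\beta_{uv}$ to build a per-vertex permutation $\Pi$ with $B^{\Pi}$ having the same bi-labelled graph as $\bar{B}$ --- whereas you first classify the possible $\mathbb{S}_v$ (three fixed lines, or a line plus its orthogonal plane, or trivially $\mathbb{R}^3$) and deduce that any candidate basis differs from $B$ only by per-vertex signs, a per-vertex permutation, and an irrelevant re-splitting of $2$-dimensional null spaces. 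Both arguments amount to the same invariance of the bi-labelled graph up to per-vertex relabelling, so no gap here; your version simply makes the degeneracy structure of $\mathbb{S}_v$ a little more explicit.
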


\begin{proof}
The algorithm for finding an NLY basis in this case proceeds by first finding a candidate basis $B$ using Algorithm~\ref{alg:rank1ansatz}, and then finding a set of permutations $\Pi$ such that $B^{\Pi}$ is a NLY basis using Algorithm~\ref{alg:rank1perms}.   It should be clear that the basis $B^{\Pi}$ is an NLY basis, since for every edge $(u,v)$ Algorithm~\ref{alg:rank1perms}  has explicitly paired those two vectors in $b_u$ and $b_v$ not in the null space of $\beta_{uv}$, and ensured that they are not the second entry. Additionally, Algorithm~\ref{alg:rank1perms} is efficient, since solving 2-XOR-SAT is efficient.

If Algorithm~\ref{alg:rank1ansatz} fails, then by Lemma~\ref{lem:rank1asatz} no candidate basis exists, and since any NLY basis must satisfy the conditions of being a candidate basis, no NLY basis exists.  Furthermore, when given a candidate basis $B$, if Algorithm~\ref{alg:rank1perms} fails, then clearly no set of permutations $\Pi$ exists such that $B^{\Pi}$ is an NLY basis. In one case this is because there are three edges connected to a vertex by a different label, and thus the label $2$ cannot be removed by any permutation. In the other it is because there is not a solution to the 2-XOR-SAT problem, which rules out all potential permutions for those vertices connected to exactly two labels, while in the case of vertices connected to exactly one label, there are other possible permutations, but they would have the same action, and are thus also ruled out.

%


The only non-trivial fact left to prove is that if, given a candidate basis $B$, Algorithm~\ref{alg:rank1perms} fails, then no NLY basis exists. Naively once could imagine that, given some alternative candidate basis, Algorithm~\ref{alg:rank1perms} might succeed. Here we prove that this cannot happen, using proof by contradiction. 

Assume that given a candidate basis $B$, Algorithm~\ref{alg:rank1perms} fails and there does not exist a permutation $\Pi$ such that the basis $B^{\Pi}$ is an NLY basis. Suppose however that there exists an NLY basis $\bar{B}$. If for some edge $(u,v)$ adjacent to $u$, the basis vector $e_i^u \in b_u$ satisfies $\beta_{uv}^{\sf T} e_i^u \neq 0$, then there must exist a unique vector $\bar{e}_j^u \in \bar{b}_u$ such that $\beta_{uv}^{\sf T} \bar{e}_j^u \neq 0$. Furthermore $e_i^u = \pm \bar{e}_j^u $, since $\beta_{uv}$ is \lowrank . Therefore for every edge $(u,v)$ adjacent to $u$, if $e_i^u$ satisfies $\beta_{uv}^{\sf T} e_i^u \neq 0$, then $\bar{e}_j^u$ also satisfies $\beta_{uv}^{\sf T} \bar{e}_j^u \neq 0$. Therefore for every index $i \in \{1,2,3\}$ there exists an index $j \in \{1,2,3\}$ such that for every edge $e=(u,v)$ adjacent to $u$, if $e_i^u $ satisfies $\beta_{uv}^{\sf T} e_i^u \neq 0$ then $\bar{e}_j^u$ satisfies $\beta_{uv}^{\sf T} \bar{e}_j^u \neq 0$. Let $\pi_u$ be the permutation with the mapping: $\pi_u(i)=j$, and $\Pi =\{\pi_u\}$. Then the bi-labelled graph associated with $B^{\Pi}$ must be identical to the bi-labelled graph associated with $\bar{B}$, and therefore $B^{\Pi}$ must be an NLY basis, which is a contradiction.
\end{proof}

\subsection{Graphs with both \highrank\ and \lowrank\ edges}
We will now show how the intuition and arguments given in Subsection~\ref{subsec:rank1graph} translate into the case where the matrix weights may have any rank. First we outline the structure of the argument. Just as in Subsection~\ref{subsec:rank1graph}, we will first search for a candidate basis $B$ for the graph, and then search for an appropriate set of permutations $\Pi$ to apply to the basis vectors. 


\begin{definition}[Candidate Basis of a Graph]\label{def:ansatz}
 A \textbf{candidate basis} $B=\{b_u\}$ is an assignment of basis vectors $b_u= (e_1^u, e_2^u, e_3^u)$, to each vertex $u$, satisfying the following two conditions:
\begin{enumerate}
\item For every \lowrank\ edge $(u,v)$ adjacent to the vertex $u$, the basis vectors $b_u$ are eigenvectors of $\beta_{uv} \beta_{uv}^{\sf T}$.
\item For every \highrank\ edge $(u,v)$, the basis vectors of $b_u$ and $b_v$ are left and right singular vectors of $\beta_{uv}$ respectively, and satisfy:
$\exists  \sigma \in \mathbb{R}$ s.t. $ \beta_{uv} e_i^v = \pm \sigma e_i^u  \textrm{ and }\; \beta_{uv}^{\sf T} e_i^u = \pm \sigma e_i^v$.
\end{enumerate}


\end{definition}
The candidate basis has the same requirements on \lowrank\ edges as in the previous section, however it satisfies more stringent requirements on \highrank\ edges, namely that the transformed matrix weights $O_u^{\sf T} \beta_{uv} O_v$ are diagonal under the prescribed orthogonal rotations $\{ O_u\}$. The most significant difference between the algorithms presented in this section 
is the procedure for finding a candidate basis (Algorithm~\ref{alg:genericAnsatz}). However once a candidate basis has been found, the procedure for finding an appropriate set of permutations (Algorithm~\ref{alg:genericPerms}) will have the same essential form as Algorithm~\ref{alg:rank1perms} with one difference: Instead of individual vertices being the sites to which permutations are assigned, we will instead assign permutations to subgraphs whose vertices are connected by \highrank\ paths (Definition~\ref{def:HRCC}), so that each vertex in such a subgraph is permuted uniformly. This is illustrated in Figure~\ref{fig:permGenericGraph}, in contrast to Figure~\ref{fig:bilabeledGraph}. It will be straightforward to see that if Algorithms \ref{alg:genericAnsatz} and \ref{alg:genericPerms} succeed, then they will have produced an NLY basis. The only significant subtle point that remains, and will be argued in Theorem~\ref{thrm:genericAlgo}, is that if Algorithm~\ref{alg:genericPerms} is given a candidate basis and fails to find a set of permutations which produces an NLY basis, then no NLY basis exists and in particular no other candidate bases need to be considered.

\begin{figure}\label{fig:permGenericGraph}
\includegraphics[scale=.7]{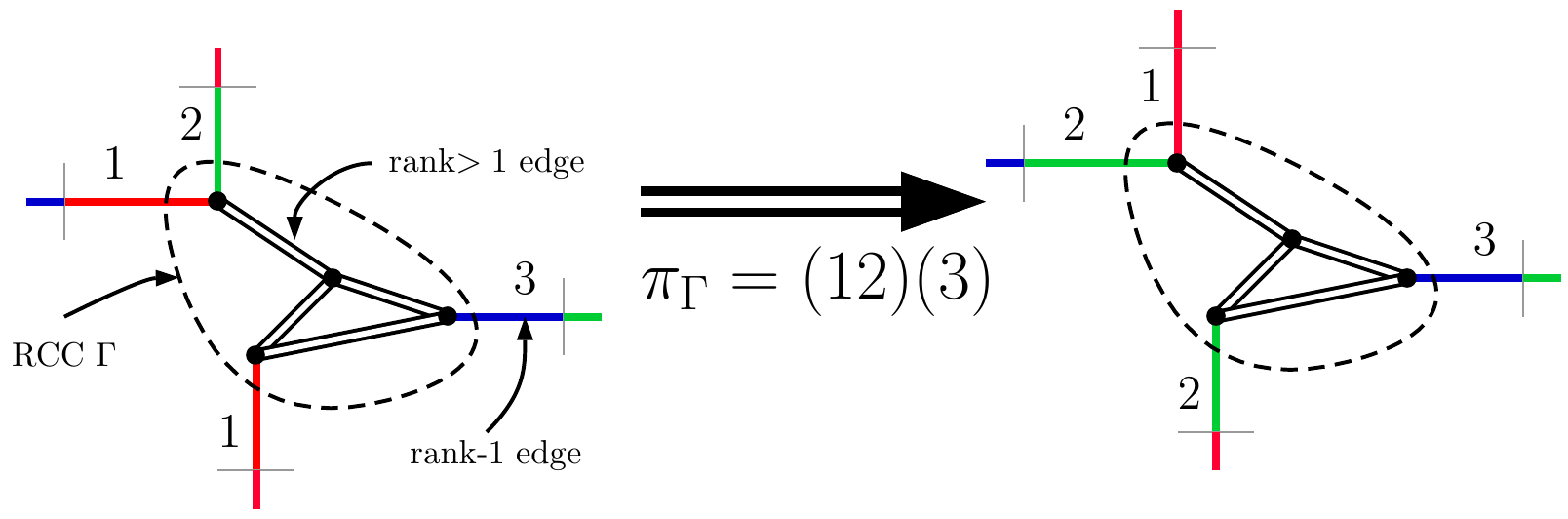}
\caption{Action of a permutation $(12)(3)$ on a \HRCC \;(RCC) in black.}
\end{figure}

Before proceeding with the description of the algorithm for finding a candidate basis, we must establish some facts about \highrank\ edges, and the structure they impose on the problem.

\begin{lemma}\label{lem:rigid} 
Given a \highrank\ edge $(u,v)$, and bases $b_u$, $b_v$ which are eigenvectors of $\beta_{uv}\beta_{uv}^{\sf T}$ and $\beta_{uv}^{\sf T} \beta_{uv}$ respectively. The vectors $e_i^u \in b_u$ and $e_i^v \in b_v$ satisfy $\beta_{uv} e_i^v = \pm \sigma_i e_i^u$ and $\beta_{uv}^{\sf T} e_i^u = \pm \sigma_i e_i^v$, $\sigma_i \in \mathbb{R}$, if and only if, for every singular value decomposition $\beta_{uv} = O_u^e \Sigma^{\rm  SVD}_{uv} (O_v^e)^{\sf T}$ the operator defined as 
\begin{equation} \label{eq:transferOp}
 O_{v \leftarrow u} =O_{u \leftarrow v}^{\sf T}:= O_v^e (O_u^e)^{\sf T}
\end{equation}
 satisfies $O_{v \leftarrow u} e_i^u = \pm e_i^v ,\;\; O_{u \leftarrow v} e_i^v = \pm e_i^u  \;\; \forall i.$ In other words:
\begin{equation}
O_{v \leftarrow u} b_u = b_v \text{ modulo signs,} \text{ and equivalently } O_{u \leftarrow v} b_v = b_u \text{ modulo signs}.
\end{equation}
\end{lemma}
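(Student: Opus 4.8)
\textbf{Proof plan for Lemma~\ref{lem:rigid}.}

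The statement is essentially a rigidity fact about the singular value decomposition of a full-rank $3\times 3$ matrix (the \highrank\ case here means $\mathrm{rank}(\beta_{uv})\ge 2$). The plan is to prove the two implications separately, but in both directions the central object is the relationship between an eigenbasis of $\beta_{uv}\beta_{uv}^{\sf T}$ and an SVD of $\beta_{uv}$. First I would record the elementary observation that if $\beta_{uv} = O_u^e \Sigma^{\rm SVD}_{uv} (O_v^e)^{\sf T}$ is any SVD, then the columns of $O_u^e$ are eigenvectors of $\beta_{uv}\beta_{uv}^{\sf T}$ (with eigenvalues the squared singular values), and the columns of $O_v^e$ are eigenvectors of $\beta_{uv}^{\sf T}\beta_{uv}$; conversely, given bases $b_u$, $b_v$ of such eigenvectors with the compatible pairing $\beta_{uv} e_i^v = \pm\sigma_i e_i^u$, assembling them into orthogonal matrices yields an SVD (up to sign and ordering conventions). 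So the hypothesis ``$b_u$, $b_v$ eigenvectors of $\beta_{uv}\beta_{uv}^{\sf T}$, $\beta_{uv}^{\sf T}\beta_{uv}$ and $\beta_{uv}e_i^v=\pm\sigma_i e_i^u$'' is just another way of saying that $(b_u,b_v)$ furnishes an SVD of $\beta_{uv}$.

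For the ``only if'' direction, suppose $b_u$, $b_v$ satisfy $\beta_{uv}e_i^v=\pm\sigma_i e_i^u$ and $\beta_{uv}^{\sf T}e_i^u=\pm\sigma_i e_i^v$, and let $O_u^e\Sigma^{\rm SVD}_{uv}(O_v^e)^{\sf T}$ be an arbitrary SVD. Write $b_u$ as columns of an orthogonal matrix $B_u$ and $b_v$ as columns of $B_v$; the pairing condition says $B_u^{\sf T}\beta_{uv}B_v$ is diagonal, hence is a (signed, possibly reordered) copy of $\Sigma^{\rm SVD}_{uv}$. Then $O_{v\leftarrow u} = O_v^e (O_u^e)^{\sf T}$ applied to $e_i^u$: I would compute $O_v^e(O_u^e)^{\sf T}B_u$ and use that both $(O_u^e, O_v^e)$ and $(B_u,B_v)$ diagonalize $\beta_{uv}$ to the same singular values. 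The only freedom in an SVD with fixed singular value matrix is (a) sign flips of paired singular vectors and (b) orthogonal mixing within a singular subspace of equal singular values. In case (a) the conclusion $O_{v\leftarrow u}e_i^u=\pm e_i^v$ is immediate. For case (b) — the potential subtlety — I would note that the matrices $O_v^e(O_u^e)^{\sf T}$ and $B_v B_u^{\sf T}$ both equal the \emph{same} operator, namely the unique orthogonal operator that appears in the polar-type decomposition sending the left singular frame to the right singular frame for each fixed SVD; more carefully, since both $(O_u^e,O_v^e)$ and $(B_u,B_v)$ realize SVDs with identical $\Sigma^{\rm SVD}_{uv}$, one has $B_u = O_u^e R$, $B_v = O_v^e R'$ for orthogonal $R,R'$ that are block-diagonal along the singular subspaces and with $\Sigma R' = R\Sigma$ forcing $R$ and $R'$ to agree (modulo signs) on each degenerate block and on the nondegenerate blocks. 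Hence $B_v B_u^{\sf T} = O_v^e R' (O_u^e R)^{\sf T} = O_v^e (O_u^e)^{\sf T}$ modulo signs, which is exactly $O_{v\leftarrow u}$, and applying it to $e_i^u$ (the $i$-th column of $B_u$) gives $\pm e_i^v$.

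For the ``if'' direction, assume that for \emph{every} SVD the operator $O_{v\leftarrow u}$ satisfies $O_{v\leftarrow u}e_i^u = \pm e_i^v$ for all $i$. Fix one SVD $O_u^e\Sigma^{\rm SVD}_{uv}(O_v^e)^{\sf T}$. Since the columns of $O_u^e$ are an eigenbasis of $\beta_{uv}\beta_{uv}^{\sf T}$ and those of $O_v^e$ of $\beta_{uv}^{\sf T}\beta_{uv}$, and since $\beta_{uv}(O_v^e)_{:,i} = (\Sigma^{\rm SVD}_{uv})_{ii}(O_u^e)_{:,i}$, applying $O_{v\leftarrow u}$ to the relation $\beta_{uv}^{\sf T} e_i^u = \beta_{uv}^{\sf T} O_{u\leftarrow v} O_{v\leftarrow u} e_i^u$ and using $O_{v\leftarrow u}e_i^u = \pm e_i^v$ together with the action of $\beta_{uv}$, $\beta_{uv}^{\sf T}$ in the $(O_u^e, O_v^e)$ frames, I would derive $\beta_{uv}e_i^v = \pm\sigma_i e_i^u$ and $\beta_{uv}^{\sf T}e_i^u = \pm\sigma_i e_i^v$ with $\sigma_i = (\Sigma^{\rm SVD}_{uv})_{ii}$; in particular the $b_u$ are eigenvectors of $\beta_{uv}\beta_{uv}^{\sf T}$ (as $\beta_{uv}\beta_{uv}^{\sf T}e_i^u = \pm\sigma_i\beta_{uv}e_i^v = \sigma_i^2 e_i^u$) and likewise $b_v$ of $\beta_{uv}^{\sf T}\beta_{uv}$. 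The final ``in other words'' line is then just the matrix reformulation: $O_{v\leftarrow u}e_i^u = \pm e_i^v$ for all $i$ is precisely $O_{v\leftarrow u}b_u = b_v$ modulo signs, and transposing gives the equivalent statement for $O_{u\leftarrow v}$.

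The main obstacle I anticipate is the bookkeeping in the degenerate case, i.e. when $\beta_{uv}$ has a repeated nonzero singular value or a zero singular value (recall \highrank\ only excludes rank $\le 1$, so rank $2$ with a one-dimensional kernel is allowed, and rank $3$ with a repeated singular value is allowed). In those cases the SVD is genuinely non-unique within a subspace, and the content of the lemma is that the hypothesis/conclusion is robust across that whole family; handling this cleanly requires being careful that ``modulo signs'' is the exact right equivalence and that the mixing matrices $R, R'$ within a degenerate block are forced to coincide. I would isolate this as a short sublemma on the ambiguity of the SVD ($\beta_{uv} = O_u\Sigma O_v^{\sf T} = \tilde O_u \Sigma \tilde O_v^{\sf T}$ implies $\tilde O_u = O_u Q$, $\tilde O_v = O_v Q'$ with $Q, Q'$ block-orthogonal and $\Sigma Q' = Q\Sigma$) and then the rest is routine.
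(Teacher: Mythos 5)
Your proof is correct in its essentials but takes a genuinely different route from the paper. The paper works directly at the level of individual vectors: starting from $\beta_{uv}^{\sf T} e_i^u = \pm\sigma_i e_i^v$ and $\beta_{uv}\beta_{uv}^{\sf T} e_i^u = \sigma_i^2 e_i^u$, it rewrites both in the frame $(O_u^e)^{\sf T}$, observes that an eigenvector of $(\Sigma_{uv}^{\rm SVD})^2$ is automatically an eigenvector of the nonnegative diagonal $\Sigma_{uv}^{\rm SVD}$, and cancels $|\sigma_i|$ to conclude $O_v^e(O_u^e)^{\sf T} e_i^u = \pm e_i^v$ when $\sigma_i\neq 0$; the possible single zero singular value (there can be at most one, since the edge has rank at least $2$) is then handled by orthogonality, since the other two vectors are already pinned down. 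The ``if'' direction is the same computation run backward. You instead assemble $b_u$, $b_v$ into orthogonal matrices $B_u$, $B_v$, observe that $(B_u, B_v)$ also furnishes an SVD of $\beta_{uv}$, and invoke the rigidity of SVD: writing $B_u = O_u^e R$, $B_v = O_v^e R'$, the constraint $R^{\sf T}\Sigma_{uv}^{\rm SVD}R'$ diagonal forces $R$ and $R'$ to be block-diagonal along singular subspaces and to agree modulo signs on each nonzero block (the zero block, if present, is $1\times 1$, so the sign ambiguity is harmless). This yields $B_v B_u^{\sf T} = O_v^e(O_u^e)^{\sf T}$ modulo signs, and since $B_v B_u^{\sf T} e_i^u = e_i^v$ by construction, the claim follows. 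Both arguments are valid; the paper's is shorter and more elementary, needing no structural theorem about SVD non-uniqueness, while your version isolates a reusable sublemma and makes the role of the ``high-rank'' hypothesis more conceptually transparent (only a one-dimensional kernel can carry an uncorrelated sign). One small caution: your intermediate identity ``$\Sigma R' = R\Sigma$'' is only the special case where the diagonal matrix $B_u^{\sf T}\beta_{uv}B_v$ equals $\Sigma_{uv}^{\rm SVD}$ exactly; in general it can differ by signs and a permutation, so the correct relation is $R^{\sf T}\Sigma_{uv}^{\rm SVD}R' = D$ with $D$ diagonal, and the block analysis should be run from there. Your ``if'' direction as sketched (applying $O_{v\leftarrow u}$ to a tautological identity) would need to be tightened; when made precise it reduces to essentially the paper's forward computation.
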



\begin{proof}
First we prove the \textit{only if} condition. Given that $\beta_{uv}^{\sf T} e^u_i = \pm \sigma_i e^v_i$ and 
$ \beta_{uv} \beta_{uv}^{\sf T} e_i^u =  \sigma_i^2 e_i^u $ we have
\begin{equation*}
 O_v^e \Sigma_{uv}^{\rm SVD} (O_u^e)^{\sf T} e_i^u = \pm \sigma_i e_i^v\;\;,\;\; (\Sigma_{uv}^{\rm SVD})^2 (O_u^e)^{\sf T} e_i^u = \sigma_i^2 (O_u^e)^{\sf T} e_i^u.
\end{equation*}
If a real matrix $A$ is non-negative and diagonal, then any eigenvectors of $A^2$ with eigenvalues $\lambda$ are also the eigenvectors of $A$ with eigenvalues $\vert \sqrt{\lambda}\vert$. Since $\Sigma_{uv}^{\rm SVD}$ is non-negative and diagonal, we see that $(O_u^e)^{\sf T} e_i^u$ is an eigenvector of $\Sigma_{uv}^{\rm SVD}$ with eigenvalue $\vert \sigma_i \vert$. 
It follows that:
\begin{equation*}
O_v^e \Sigma_{uv}^{\rm SVD} (O_u^e)^{\sf T} e_i^u = \pm \sigma_i e_i^v  \Rightarrow
\vert \sigma_i \vert O_v^e(O_u^e)^{\sf T} e_i^u = \pm \sigma_i e_i^v
\end{equation*}
If $\sigma_i \neq 0$ then $O_v^e (O_u^e)^{\sf T} e_i^u =\pm   e_i^v$. Suppose $\sigma_i=0$, then since $\textrm{Rank}(\beta_{uv})>1$ there is a single $i$ for which this holds. Since for all $j\neq i$, $O_v^e (O_u^e)^{\sf T} e_j^u = \pm e_j^v$, it follows that $e_i^v$ must lie in the one-dimensional subspace spanned by $O_v^e (O_u^e)^{\sf T} e_i^u$, and so $O_v^e (O_u^e)^{\sf T} e_i^u = \pm e_i^v$.

Now we prove the \textit{if} condition. Given that $ \beta_{uv} \beta_{uv}^{\sf T} e_i^u =  \sigma_i^2 e_i^u $ and $O_v^e (O_u^e)^{\sf T} e_i^u  = \pm e_i^v$:
\begin{align*} 
 \beta_{uv} \beta_{uv}^{\sf T} e_i^u &=  \sigma_i^2 e_i^u \\
   O_u^e \Sigma^{\rm  SVD}_{uv} (O_v^e)^{\sf T}  O_v^e \Sigma^{\rm  SVD}_{uv} (O_u^e)^{\sf T} e_i^u &=  \sigma_i^2 e_i^u \\
(\Sigma_{uv}^{\rm SVD})^2 (O_u^e)^{\sf T} e_i^u &=  \sigma_i^2 (O_u^e)^{\sf T} e_i^u.
\end{align*}
Since $\Sigma_{uv}^{\rm SVD}$ is a positive diagonal matrix we have
\begin{align*} (\Sigma_{uv}^{\rm SVD}) (O_u^e)^{\sf T} e_i^u &=  \vert \sigma_i \vert (O_u^e)^{\sf T} e_i^u \\
 O_v^e(\Sigma_{uv}^{\rm SVD}) (O_u^e)^{\sf T} e_i^u &=  \vert \sigma_i \vert O_v^e (O_u^e)^{\sf T} e_i^u \\
 \beta_{uv}^{\sf T} e_i^u &= \pm \sigma_i  e_i^v.
\end{align*}
 By a symmetric argument  $ \beta_{uv} e_i^v = \pm \sigma_i e_i^u$.
\end{proof}
It is important to note that the construction of $O_{v \leftarrow u}$ in Eq.~(\ref{eq:transferOp}) is not unique. One could find a different singular value decomposition and construct a different operator $O_{v \leftarrow u}'$. However, as proven above, for any such operator its action on a singular vector $e_i^u$ of $\beta_{uv}$ is identical, up to a difference in the sign, which has no bearing on the problem. In light of this, for the remainder of the text we will treat the operator $O_{v \leftarrow u}$ as a well defined orthogonal operator, with the implicit assumption being that any such operator suffices. 

\iftrue

The above lemma has two important consequences.
\begin{corollary} \label{cor:path}\hfill
\begin{enumerate}
 \item Condition 2 in Definition \ref{def:ansatz} is equivalent to the condition that for every \highrank\ edge $(u,v)$ and any operator $O_{v \leftarrow u}$:
 $$ O_{v \leftarrow u}b_u = b_v \text{ modulo signs}.$$ 
\item If $B=\{ b_u\}$ is a candidate basis, then given a path $p= (u,x...,y,w,v)$ of \highrank\ edges going from vertex $u$ to $v$, for any orthogonal rotation defined by $O_{p} = O_{v \leftarrow w}O_{w \leftarrow y}... O_{x \leftarrow u}$, which we call a \textbf{\highrank\  path operator}, it must be the case that $ O_{p}  b_u= b_v$ modulo signs. 
\end{enumerate}
\end{corollary}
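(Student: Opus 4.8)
The plan is to read off both statements from Lemma~\ref{lem:rigid}, which already contains all the substance.

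For part 1, the first step is to translate the hypotheses of Condition~2 in Definition~\ref{def:ansatz} into the language of Lemma~\ref{lem:rigid}: ``$b_u$ consists of left singular vectors of $\beta_{uv}$'' is synonymous with ``$b_u$ consists of eigenvectors of $\beta_{uv}\beta_{uv}^{\sf T}$'', and ``$b_v$ consists of right singular vectors of $\beta_{uv}$'' with ``$b_v$ consists of eigenvectors of $\beta_{uv}^{\sf T}\beta_{uv}$''. Under this dictionary the situation is exactly the one analysed in Lemma~\ref{lem:rigid}: its left-hand side, $\beta_{uv}e_i^v=\pm\sigma_i e_i^u$ and $\beta_{uv}^{\sf T}e_i^u=\pm\sigma_i e_i^v$ for all $i$, is the remaining content of Condition~2, and its right-hand side, $O_{v\leftarrow u}e_i^u=\pm e_i^v$ for all $i$, is by definition the assertion $O_{v\leftarrow u}b_u=b_v$ modulo signs. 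So part 1 is just Lemma~\ref{lem:rigid} restated; the only thing to add is the observation already recorded in the remark following that lemma, namely that $O_{v\leftarrow u}$ is independent of the chosen singular value decomposition up to the signs of its action on the $e_i^u$, which is precisely what makes the phrase ``any operator $O_{v\leftarrow u}$'' well posed.

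For part 2, I would induct on the length of the high-rank path $p=(u,x,\dots,y,w,v)$, the length-one case being part 1. Since $B$ is a candidate basis, Condition~2 holds on every high-rank edge of $p$, so by part 1 we have $O_{x\leftarrow u}b_u=b_x$ modulo signs, $O_{x'\leftarrow x}b_x=b_{x'}$ modulo signs for the next vertex $x'$ on $p$, and so on up to $O_{v\leftarrow w}b_w=b_v$ modulo signs. The one elementary fact to record is that if an orthogonal $O$ satisfies $Ob=b'$ modulo signs and $\tilde b=b$ modulo signs, then $O\tilde b=b'$ modulo signs (apply $O$ to $\tilde e_i=\pm e_i$ and use $Oe_i=\pm e_i'$). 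Applying this repeatedly transports the relation along $p$: $O_{x'\leftarrow x}O_{x\leftarrow u}b_u=b_{x'}$ modulo signs, and, continuing, $O_pb_u=(O_{v\leftarrow w}\cdots O_{x\leftarrow u})b_u=b_v$ modulo signs. As in part 1, each factor is determined only up to signs on the relevant singular vectors, hence so is the composite $O_p$ on $b_u$, which is all that is claimed.

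The computations are routine; the only thing needing care is the sign and choice-of-decomposition bookkeeping — checking that ``modulo signs'' survives composition and that the conclusion does not depend on which singular value decompositions were used to build the individual transfer operators $O_{\cdot\leftarrow\cdot}$. This is exactly what the remark after Lemma~\ref{lem:rigid} settles, so I would invoke it rather than repeat the argument. (The requirement in Condition~2 that $b_u$ and $b_v$ be singular-vector bases in the first place is, as in Lemma~\ref{lem:rigid}, part of the standing setup for a prospective candidate basis and is not something I would try to recover from the bare identity $O_{v\leftarrow u}b_u=b_v$.)
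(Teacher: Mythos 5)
Your proposal is correct and follows essentially the same route as the paper: part 1 is Lemma~\ref{lem:rigid} re-read through the dictionary between singular-vector bases and eigenvector bases of $\beta\beta^{\sf T}$ and $\beta^{\sf T}\beta$, and part 2 is a straightforward induction along the high-rank path, using that composition of orthogonal maps respects equality modulo signs. The paper's own proof says only ``follows trivially by the definition'' and ``follows by induction from the first''; you have supplied the routine but correct bookkeeping that justifies those one-liners, including the right observation (needed to make ``any operator $O_{v\leftarrow u}$'' well posed) that the transfer operator's action on singular vectors is SVD-independent up to signs.
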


\begin{proof}
The first statement follows trivially by the definition of the candidate basis. The second statement follows by induction from the first. 
\end{proof}

\fi

We see that, if for some vertex $u$, we choose a basis $b_u$ which happens to belong to a yet unknown candidate basis $B$, then this fixes, via the \highrank\ path operators, all of the bases $b_v \in B$, modulo signs, for all vertices $v$ connected to $u$ by \highrank\ paths. This motivates the following definition.
\begin{definition}[\capHRCC\  (\abrvHRCC)] \label{def:HRCC}
Remove all \lowrank\ edges from the graph $G$. What remains is a family of distinct connected components which are composed entirely of \highrank\ edges. Define the \textbf{\HRCC}  as the subgraph $\Gamma$ associated with such a connected component. 
\end{definition}
Note that in the case where some vertex $v$ is connected to only \lowrank\ edges, $v$ on its own still constitutes a \HRCC . Therefore by construction every vertex is in exactly one \abrvHRCC. Note also that any two vertices connected by a path of \highrank\ edges belong to the same \abrvHRCC . 

\begin{definition}[Candidate Basis on a \abrvHRCC ]
Given a \abrvHRCC\ $\Gamma$, a \textbf{candidate basis of a \HRCC} $B_{\Gamma}$ on the vertices of $\Gamma$ is the assignment of a basis $b_u$ to each vertex $u \in \Gamma$ which satisfies all the conditions of a candidate basis for all of the \highrank\ edges in $\Gamma$, as well as for the \lowrank\ edges (which are not in $\Gamma$) which are adjacent to the vertices in $\Gamma$.
\end{definition}

Clearly, if we combine all the candidate bases for the RCCs $\Gamma$, we obtain a candidate basis for the vertices of the whole graph $G$. So the task of finding a candidate basis $B$ for the whole graph breaks up into finding a candidate basis $B_{\Gamma}$ for each \abrvHRCC\ $\Gamma$, so that $B = \bigcup_{\Gamma} B_{\Gamma}$. Furthermore, if we can correctly choose a basis $b_u$ at one vertex $u$ in $\Gamma$ then, due to Corollary~\ref{cor:path}, we will have successfully specified all of $B_{\Gamma}$. The primary challenge is making the right choice of $b_u$.

\begin{algorithm}
	\LinesNumbered
	\SetAlgoVlined
	
    \SetKwInOut{Input}{Input}
    \SetKwInOut{Output}{Output}

    \Input{Graph $G = (V, E)$, matrix edge weights $\{\beta_{uv}\}$, RCCs $\{\Gamma\}$ }
    \Output{A candidate basis $B= \{b_u\}$, if one exists. Otherwise False, indicating no candidate basis exists.}
	\SetKwBlock{Begin}{for $\Gamma \in \text{RCCs}$ do }{}
	\SetAlgoLined    
    \Begin{
    	\SetAlgoVlined
    	\tcc{Construct the $O_{v\leftarrow u}$ operators for all the edges in $\Gamma$}
    	\For{$e=(u,v)\in \Gamma$}{
    		$(O_u^e, \Sigma_{uv}^{\text{SVD}},  O_v^e) = \text{SVD}(\beta_{uv})$ s.t. 
    		$O_u^e \Sigma_{uv}^{\text{SVD}} (O_v^e)^T = \beta_{uv}$
    		
    		$O_{v\leftarrow u} = O_v^e (O_u^e)^{\sf T}$
    		
    		$O_{u\leftarrow v} =O_{v\leftarrow u}^{\sf T} $
    	}
    	
		\tcc{At each vertex take intersection of eigenspaces associated with immediately neighbouring edges, as in Algorithm \ref{alg:rank1ansatz}}    	
    	\For{$v \in \Gamma$}{

    		$\mathbb{S}_v[0] = \{ \mathbb{R}^3 \}$
    	
    	\For{$u \in V$ s.t. $e=(u,v) \in E$}{
    		$\mathbb{S}_v^e= $  the set of orthogonal maximal eigenspaces of $\beta_{uv}^{\sf T} \beta_{uv}$
    		
    		$\mathbb{S}_v[0] = \mathbb{S}_v[0] \cap \mathbb{S}_v^e$
    	}
    	}
    	
	\tcc{For each vertex, iteratively take the intersection of the subspaces at that vertex, with the appropriately rotated subspaces of the neighbouring vertices}    	
    	
    	i=0
    	
    	\While{True}{ \label{algLine:whileLoop}
    		\For{$v\in \Gamma$}{
    			$\mathbb{S}_{\text{neighbours}} = \{ \mathbb{R}^3\}$
    			
    			\For{$u \in \Gamma$ s.t. $(u,v) \in \Gamma$}{
    				$\mathbb{S}_{\text{neighbours}} =\mathbb{S}_{\text{neighbours}} \cap (O_{v \leftarrow u} \mathbb{S}_u[i]) $				
    			}
    			
    			 $\mathbb{S}_v[i+1] = \mathbb{S}_v[i] \cap\mathbb{S}_{\text{neighbours}}$\label{algLine:iterativeProcess}
    		}
    		\tcc{Conclude the iterative process when it reaches a fixed point. Note that since every subspace in $\mathbb{S}_v[i+1]$ is contained in a subspace of $\mathbb{S}_v[i]$, this process must reach a fixed point.}
			\If{$\mathbb{S}_v[i+1] = \mathbb{S}_v[i] \;\; \forall v\in \Gamma$}{
				$\mathbb{S}_v[f] := \mathbb{S}_v[i] \;\; \forall v \in \Gamma$
				
				break
			}    		
    		i++
    	}
    	
    	\tcc{Choose a spanning tree and construct the intersection of the eigenspaces of the \highrank\ path operators associated with the fundamental cycles. Take the intersection of this with the set of subspaces at the root vertex.}
    	
    	$T$ = spanning tree of $\Gamma$ with root vertex $r$.
    	
    	$\mathbb{S}_{\text{loops}} = \{\mathbb{R}^3\}$
    	
    	\For{edge $e \in \Gamma$ s.t. $e\not \in T$ \label{algLine:edgesNotInT}}{	
    	Get $C_e$, the fundamental cycle associated with $e$
   		 	
    	Let $p_e = (r,u,v,...,w,r)$ be the ordered vertex sequence of $C_e$
    	
		Construct $O_{p_e}= O_{r \leftarrow u} O_{u \leftarrow v}... O_{w \leftarrow r}$, the assocated \highrank\ path operator. \label{algLine:loopOp}
		
		Find $\{\lambda_i\}$ and $\mathbb{S}_{p_e} =\{S_i^{p_e}\}$, the eigenvalues and maximal eigenspaces of the orthogonal matrix $O_{p_e}$. (Note that every orthogonal matrix is diagonalizable) 	
    	
		\If{$\exists i \;\text{s.t.}\; \lambda_i\not \in \mathbb{R}$}{   
			return False  \label{algLine:imaginaryEigvals}	
    	}
    	$\mathbb{S}_{\text{loops}} = \mathbb{S}_{\text{loops}} \cap \mathbb{S}_{p_e}$

    }
	    
	$\mathbb{S}_r^* = \mathbb{S}_r[f] \cap\mathbb{S}_{\text{loops}} $
	
	\textbf{Continued below}
    }
    \caption{Algorithm for finding a candidate basis}\label{alg:genericAnsatz}
\end{algorithm}



\begin{algorithm}
  
      \SetKwInOut{Input}{}
      \SetKwProg{Def}{def}{:}{}
	\SetKwFunction{Propagate}{propagate}
    
\textbf{Algorithm \ref{alg:genericAnsatz}} continued
\SetKwBlock{Begin}{}{end}
\Begin{
	\Input{ }
	
    \If{
    	$\text{span}(\mathbb{S}_r^*) \neq \mathbb{R}^3$ 
    }{
    	return False \label{algLine:incompleteSpan}
    }
    \tcc{Note that the intersection of sets of orthogonal subspaces is also a set of orthogonal subspaces. Thus if $\text{span}(\mathbb{S}_r^*)=\mathbb{R}^3$ then one can always choose an orthogonal basis from it.}
    Choose orthonormal basis $b_r = (e_1^r, e_2^r, e_3^r) \subseteq \mathbb{S}_r^*$ \label{algLine:br}
 	
 	\tcc{propagate the choice of basis at the root vertex out to the rest of the vertices in the tree.}
    
	\Def{\Propagate{$u, b_u$}}{
	\For{vertex $v\in T$ that are children of $u$}{
		$b_v = O_{v\leftarrow u} b_u$ \label{algLine:propagate}
		
		\Propagate{$v$, $b_v$}
	}
	}    
    
    \Propagate($r$, $b_r$)
    
	$B_{\Gamma} = \{b_u\;:\; \forall u \in \Gamma \}$
  }
  return $B = \bigcup_{\Gamma} B_{\Gamma}$
\end{algorithm}

\begin{lemma}\label{lem:genericAnsatz}
Given a matrix weighted graph, Algorithm~\ref{alg:genericAnsatz} efficiently finds a candidate basis $B$ or otherwise shows that no such candidate basis exists. The algorithm takes $O(N^3)$ steps, where $N$ is the number of vertices.
\end{lemma}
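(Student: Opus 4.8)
The plan is to verify three things about Algorithm~\ref{alg:genericAnsatz}: \emph{soundness} (if it returns a basis $B$, then $B$ is a candidate basis in the sense of Definition~\ref{def:ansatz}), \emph{completeness} (if it returns \textsf{False}, then no candidate basis exists), and the claimed $O(N^3)$ bound on the number of arithmetic operations. The organising fact is Corollary~\ref{cor:path}: inside an \abrvHRCC\ $\Gamma$, once a basis $b_r$ is fixed at the root $r$, the \highrank\ path operators determine $b_v$ at every other vertex $v$ of $\Gamma$ modulo signs, so the algorithm's only real job is to carve out precisely the legal choices for $b_r$. Because a candidate basis of the whole graph restricts to a candidate basis of each \abrvHRCC\ (conditions~1 and~2 of Definition~\ref{def:ansatz} are inherited by subgraphs), it suffices to analyse the algorithm one \abrvHRCC\ at a time.

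For soundness I would first check that at the fixed point of the propagation loop one has $\mathbb{S}_v[f]\subseteq O_{v\leftarrow u}\mathbb{S}_u[f]$ for every \highrank\ edge $(u,v)\in\Gamma$, and, applying the same statement with $u$ and $v$ exchanged and then acting with $O_{v\leftarrow u}$, also $O_{v\leftarrow u}\mathbb{S}_u[f]\subseteq\mathbb{S}_v[f]$; since the entrywise intersection of sets of mutually orthogonal nonzero subspaces is again such a set and orthogonal maps preserve this structure, mutual containment forces $O_{v\leftarrow u}\mathbb{S}_u[f]=\mathbb{S}_v[f]$. Propagating the chosen $b_r\subseteq\mathbb{S}_r^{*}\subseteq\mathbb{S}_r[f]$ down the spanning tree via $b_v=O_{v\leftarrow u}b_u$ therefore keeps $b_v\subseteq\mathbb{S}_v[f]\subseteq\mathbb{S}_v[0]$ at every vertex, and membership in $\mathbb{S}_v[0]$ says exactly that $b_v$ consists of eigenvectors of $\beta_{uv}^{\sf T}\beta_{uv}$ for every incident edge---condition~1 for the incident \lowrank\ edges and the singular-vector half of condition~2 for the incident \highrank\ edges. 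For the matched half of condition~2: on tree edges $b_v=O_{v\leftarrow u}b_u$ holds exactly, so Lemma~\ref{lem:rigid} applies; on a non-tree edge $e$, the choice $b_r\subseteq\mathbb{S}_{p_e}$ (the maximal eigenspaces of the orthogonal operator $O_{p_e}$, all with eigenvalue $\pm1$ because the algorithm did not return \textsf{False} at line~\ref{algLine:imaginaryEigvals}) gives $O_{p_e}b_r=b_r$ modulo signs, and conjugating this identity by the tree-path operators from $r$ to the two endpoints of $e$ yields $O_{v\leftarrow u}b_u=b_v$ modulo signs, so Lemma~\ref{lem:rigid} again applies. Hence $B=\bigcup_\Gamma B_\Gamma$ is a candidate basis.

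For completeness, suppose for contradiction that the algorithm returns \textsf{False} on some \abrvHRCC\ $\Gamma$ although a candidate basis $\{b_u\}$, restricted to $\Gamma$, exists. By Corollary~\ref{cor:path} (part~2) every loop operator satisfies $O_{p_e}b_r=b_r$ modulo signs, so $O_{p_e}$ has only the real eigenvalues $\pm1$ (hence no return at line~\ref{algLine:imaginaryEigvals}) and $b_r\subseteq\mathbb{S}_{p_e}$, whence $b_r\subseteq\mathbb{S}_{\text{loops}}$. An induction on the loop index $i$ then shows $b_v\subseteq\mathbb{S}_v[i]$ for all $v\in\Gamma$: the base case $i=0$ is again the eigenvector property just discussed, and in the step one uses Corollary~\ref{cor:path} (part~1) to write $b_v=O_{v\leftarrow u}b_u\subseteq O_{v\leftarrow u}\mathbb{S}_u[i]$ for each \highrank\ neighbour $u$, so that $b_v\subseteq\mathbb{S}_v[i]\cap\bigcap_u O_{v\leftarrow u}\mathbb{S}_u[i]=\mathbb{S}_v[i+1]$. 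Thus $b_r\subseteq\mathbb{S}_r[f]\cap\mathbb{S}_{\text{loops}}=\mathbb{S}_r^{*}$, so $\textrm{span}(\mathbb{S}_r^{*})=\mathbb{R}^3$ and there is no return at line~\ref{algLine:incompleteSpan} either, contradicting the assumption.

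For the running time, every SVD, eigendecomposition, subspace intersection, and rotation involves only $3\times3$ matrices or subspaces of $\mathbb{R}^3$ and so costs $O(1)$. The crucial point is that the orthogonal subspace arrangements of $\mathbb{R}^3$, ordered by entrywise containment, form a poset of constant height, so during the loop each $\mathbb{S}_v$ strictly refines at most a constant number of times; hence the loop reaches its fixed point after $O(n_\Gamma)$ iterations, each costing $O(m_\Gamma)$, where $n_\Gamma$ and $m_\Gamma$ are the numbers of vertices and \highrank\ edges of $\Gamma$. Building the transfer operators $O_{v\leftarrow u}$, a spanning tree, and the loop operators (each a product of $O(n_\Gamma)$ transfer operators, over the $O(m_\Gamma)$ fundamental cycles), plus the final propagation and the lower-order initialisation of the $\mathbb{S}_v[0]$, all fit within $O(n_\Gamma m_\Gamma)$ per \abrvHRCC; summing, $\sum_\Gamma n_\Gamma m_\Gamma\le N\sum_\Gamma m_\Gamma\le N\binom{N}{2}=O(N^3)$. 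The step I expect to be the main obstacle is the consistency claim at the heart of soundness---that the fixed point of the propagation loop really satisfies $O_{v\leftarrow u}\mathbb{S}_u[f]=\mathbb{S}_v[f]$, so that a choice made only at the root propagates to legal bases everywhere---together with the careful bookkeeping of the ``modulo signs'' equivalence and of the zero subspaces that the entrywise intersections can create.
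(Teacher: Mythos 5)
Your proposal is correct and follows essentially the same structure as the paper's own proof: soundness via propagating membership in the $\mathbb{S}_v[f]$ down the spanning tree together with the loop-operator argument for non-tree edges, completeness by induction showing $b_v \subseteq \mathbb{S}_v[i]$ for every $i$, and the $O(N^3)$ bound from the constant height of the subspace-arrangement poset. The one cosmetic difference is that you establish the two-sided identity $O_{v\leftarrow u}\mathbb{S}_u[f]=\mathbb{S}_v[f]$ at the fixed point, whereas the paper only uses the one-sided containment $\mathbb{S}_u[f]\subseteq O_{u\leftarrow v}\mathbb{S}_v[f]$, which already suffices to push $b_r\subseteq\mathbb{S}_r[f]$ down the tree.
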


\begin{proof}

Let us first prove that if the algorithm returns $B$, then $B$ is a candidate basis. To show that $B$ is a candidate basis, we need only show that for all $\Gamma$ $B_{\Gamma}$ is a candidate basis.

The first fact to note is that for every vertex $v \in \Gamma$, and for every edge $(v,w)$ adjacent to $v$, the basis vectors $b_v$ are eigenvectors of $\beta_{vw} \beta_{vw}^{\sf T}$ and so the first condition necessary for $B_{\Gamma}$ to be a candidate basis is satisfied. To see that this is true, it suffices to show that $b_v \in \mathbb{S}_v[f]$, since $\mathbb{S}_v[f] \subseteq \mathbb{S}_v[0]$, and $\mathbb{S}_v[0]$ by construction only contains eigenvectors of all neighbouring edges, including \lowrank\ edges. For all $w \in \Gamma$, since $\mathbb{S}_w[f]$ is a fixed point of 
 the equation on Line (\ref{algLine:iterativeProcess}) of Algorithm \ref{alg:genericAnsatz},
  we have
\begin{equation}
\mathbb{S}_w[f] = \mathbb{S}_w[f] \cap \left( \bigcap_x O_{w \leftarrow x} \mathbb{S}_x[f] \right),
\end{equation}
 where $x$ runs over \highrank\ edges adjacent to $w$. Thus for all \highrank\ edges $(w,x)$ in $\Gamma$,  $\mathbb{S}_w[f] \subseteq O_{w \leftarrow x} \mathbb{S}_x[f]$ . Consequently, given a vertex $w$ in the spanning tree $T$, and a child vertex $x$, if $b_w \subseteq \mathbb{S}_w[f]$, then since $b_x = O_{x \leftarrow w} b_w \Rightarrow O_{w \leftarrow x} b_x = b_w$ it follows that  $b_x \subseteq \mathbb{S}_x[f]$. Since $b_r \subseteq \mathbb{S}_r[f]$, and $r$ is the root node of $T$, it follows by induction that for all $v \in \Gamma$:  $b_v \subseteq \mathbb{S}_v[f]$. 
 

The second fact to note is that for every \highrank\ edge $(w,v)$ in $\Gamma$, $b_w = O_{w \leftarrow v} b_v$, modulo signs. This is clearly true for every \highrank\ edge in $T$ by construction, as specified in 
Line (\ref{algLine:propagate}) of Algorithm \ref{alg:genericAnsatz}.
All that remains are those \highrank\ edges not in $T$.  Consider an edge $e = (v,w)$ not in $T$.  There is a fundamental cycle $C_e$, with a path $p_e$ which goes from the root vertex $r$, up to $v$, entirely along paths in the spanning tree, then from $v$ to $w$, and then from $w$ back to $r$. Thus the associated \highrank\ path operator is
\begin{equation*}
 O_{p_e} = O_{r \leftarrow x }... O_{y \leftarrow w} O_{w \leftarrow v} O_{v \leftarrow z} ... O_{q \leftarrow r}.
\end{equation*}
Furthermore, the bases $b_v$ and $b_w$ are, by construction:
\begin{equation*}
 b_v = O_{v \leftarrow z} ... O_{q \leftarrow r} b_r
 \end{equation*}
 \begin{equation*}
  b_w =  O_{w \leftarrow y}...O_{x \leftarrow r} b_r  \rightarrow b_w =  O_{y \leftarrow w}^{\sf T}...O_{r \leftarrow x}^{\sf T} b_r 
 \end{equation*}
By construction, every element in $b_r$ must be an eigenvector of $O_{p_e}$ with real eigenvalues (equal to $+1$ or $-1$ since $O_{p_e}$ is an orthogonal matrix)
(see Line (\ref{algLine:br}) of Algorithm \ref{alg:genericAnsatz}).
  Thus $ b_r=O_{p_e} b_r  $  modulo signs. Therefore
\begin{align*}
b_r &= O_{r \leftarrow x }... O_{y \leftarrow w} O_{w \leftarrow v} O_{v \leftarrow z} ... O_{q \leftarrow r} b_r \text{ modulo signs}, \\
O_{y \leftarrow w}^{\sf T}...O_{r \leftarrow x}^{\sf T}  b_r &= O_{w \leftarrow v} O_{v \leftarrow z} ... O_{q \leftarrow r} b_r \text{ modulo signs}, \\
b_w &= O_{w \leftarrow v} b_{v}  \;\; \textrm{ modulo signs}.
\end{align*} 
Thus for every \highrank\ edge $(w,v)$ in $\Gamma$, $b_w = O_{w \leftarrow v} b_v$, modulo signs. Combining this fact with  
claim 1 of Corollary~\ref{cor:path}, it is clear that the second condition  necessary for $B_{\Gamma}$ to be a candidate basis is satisfied. Therefore $B_{\Gamma}$ is a candidate basis.

\bigbreak
Now we will prove that if Algorithm \ref{alg:genericAnsatz} returns False, then no candidate basis exists. First we note that obviously if for any $\Gamma$ there does not exist a candidate basis $B_{\Gamma}$, then no candidate basis exists for the whole graph. 

There are two places where the algorithm returns False. Once at Line (\ref{algLine:imaginaryEigvals}), and once at Line (\ref{algLine:incompleteSpan}). This happens at Line (\ref{algLine:imaginaryEigvals}) if some $O_{p_e}$ has any non-real eigenvalues. Note that by claim 2 in Corollary~\ref{cor:path}, if there existed a candidate basis $B=\{ b_u\}$, then $O_{p_e} b_r = b_r$, modulo signs, since $O_{p_e}$ is a \highrank\ path operator. In other words, the eigenvalues of $O_{p_e}$ should be either $+1$ or $-1$ \footnote{For an example of where such a loop operation becomes important, see Appendix~\ref{app:cliffordsNotSuffice}}. 

The algorithm indicates at Line (\ref{algLine:incompleteSpan}) that no candidate basis exists if, for a given \abrvHRCC\ $\Gamma$ with root vertex $r$, $\textrm{span}\left( \mathbb{S}_r^* \right) \neq \mathbb{R}^3$. This happens if and only if there does not exist a set of three orthogonal vectors such that each of them belongs to a subspace in $\mathbb{S}_r[f]$ as well as a subspace in every $\mathbb{S}_{p_e}$. We prove by contradiction that in this case $B_{\Gamma}$ must not exist.

Suppose there does not exist a set of three orthogonal vectors such that each of them belongs to a subspace in $\mathbb{S}_r[f]$ as well as a subspace in every $\mathbb{S}_{p_e}$. Suppose a candidate basis $B_{\Gamma}$ does exist, then by the argument made for the case of Line (\ref{algLine:imaginaryEigvals}) in the preceding paragraph, the basis vectors $b_r$ must be eigenvectors of every $O_{p_e}$ and thus each vector in $b_r$ belongs to a subspace in $\mathbb{S}_{p_e}$ for every $p_e$. Therefore it must be that $b_r \not \subseteq \mathbb{S}_r[f]$. However this is contradicted by the following argument.

First note that for all $b_v \in B_{\Gamma}$ and for every adjacent edge $(v,x)$, the basis vectors $b_v$ must be eigenvectors of $\beta_{vx} \beta_{vx}^{\sf T}$, and thus $b_v \subseteq \mathbb{S}_v[0]$. Second note that if for all $ b_v \in B_{\Gamma}: \; b_v \subseteq \mathbb{S}_v[i]$, then for all $b_v \in B_{\Gamma}: \; b_v \subseteq \mathbb{S}_v[i+1]$.  This follows from the fact that for all vertices $v \in \Gamma$, and for all \highrank\ edges $(v,x)$ adjacent to $v$, $b_v  = O_{v \leftarrow x} b_x$, modulo signs, by Corollary~\ref{cor:path}, and thus $b_v \subseteq O_{v \leftarrow x}\mathbb{S}_x[i]$. Since $\mathbb{S}_v[i+1] = \mathbb{S}_v[i] \cap \left(\bigcap_{x} O_{v \leftarrow x}\mathbb{S}_x[i] \right) $ it follows that $b_v \subseteq \mathbb{S}_v[i+1]$. Thus, by induction,  $\forall b_u \in B_{\Gamma}: \; b_u \subseteq \mathbb{S}_u[f]$, which is a contradiction. 

\bigbreak


Finally we prove that the algorithm runs in $O(N^3)$ steps, where $N$ is the number of vertices in the graph. The most costly part of the algorithm is the while loop at Line (\ref{algLine:whileLoop}). Let $n_{\Gamma}$ be the number of vertices in the \abrvHRCC\ $\Gamma$. Constructing all $\mathbb{S}_u[0]$ runs in worst case $O(n_{\Gamma}N)$. Each iterative step runs in worst case $O(n_{\Gamma}N)$. At each iterative step the subspaces of $\mathbb{S}_u[i+1]$ must be contained in the subspaces in $S_u[i]$. Therefore if we have not reached a fixed point, then at each iterative step there is at least one $\mathbb{S}_u[i+1]$ for which the dimensions of the subspaces have decreased when compared to $\mathbb{S}_u[i]$. If $\mathbb{S}_u[i]$ spans $\mathbb{R}^3$, then the dimensions of its mutually orthogonal subspaces must be either $(3)$, $(2,1)$ or $(1,1,1)$. Thus for every vertex $u$, the iterative process can only decrease the dimensionality of the subspaces in $\mathbb{S}_u[i]$ at most three times before $\mathbb{S}_u[i]$ no longer spans $\mathbb{R}^3$. So the maximum number of iterations is $3n_{\Gamma}$. 

Therefore the naive worst case runtime of this step is $O(n_{\Gamma}^2N)$. However we expect that a more careful analysis would find the runtime to be closer to $O(n_{\Gamma}N)$, since the runtime of each iterative step is proportional to the connectivity of the graph, while the number of iterative steps required should be inversely proportional to the connectivity.

At Line (\ref{algLine:edgesNotInT}) the algorithm iterates over edges in $\Gamma$ not in $T$, the number of which is upper bounded by $O(n_{\Gamma}^2)$. All other steps in the algorithm iterate over vertices in $\Gamma$, or edges adjacent to those vertices, and so have runtime at most $O(n_{\Gamma}N)$.

Since the whole algorithm iterates over all \abrvHRCC s, it follows that the  runtime is $O \left( \sum_{\Gamma} n_{\Gamma}^2 N \right)$ which, by the triangle inequality, is upper bounded by $O(N^3)$.
\end{proof}




\begin{algorithm}
    \SetKwInOut{Input}{Input}
    \SetKwInOut{Output}{Output}

    \Input{Graph $G = (V, E)$, \highrank\ and \lowrank\ matrix edge weights $\{\beta_{uv}\}$, RCCs $\{\Gamma\}$, and candidate basis $B=\{b_u\}$ }
    
        \Output{A set of permutations $\Pi = \{\pi_u\}$ such that $B^{\Pi}$ is an NLY basis, if one exists. Otherwise False, indicating none exists.}
	\For{$\Gamma \in $ RCCs}{
		\tcc{Label all \lowrank\ edges incident to vertices $v$ in $\Gamma$ according to which basis vector $b_v$ is not in the null space of $\beta_{uv}$. These always exist by Proposition \ref{prop:lowrank}}
		$L(\Gamma) = \{\}$
		
		\For{$e =(u,v) \in E$ s.t. $v \in \Gamma$, and $\text{rank}(\beta_{uv})=1$ }{		
		\For{$i\in\{1,2,3\}$}{
			$e_i^u = b_u[i]$
			
			\If{$\beta_{uv}e_i^v \neq 0$}{
			
				$L(v, e)=i$
				
				$L(\Gamma) = L(\Gamma) \cap \{i\}$
			}			
		}
		
		}
		\tcc{If $\Gamma$ is incident on more than two different labels, then return false.}
		\If{$\vert L(\Gamma)\vert=3$}{
			return False
		}
		\tcc{Define permutations so that all incident edge labels are mapped to either $1$ or $3$}
		
			Choose perm. $\pi_{\Gamma}^0$ s.t. $\pi_{\Gamma}^0(L(\Gamma)[1])=1$, and if $\vert L(\Gamma)\vert>1$ then $\pi_{\Gamma}^0(L(\Gamma)[2])=3$
			
			Choose perm. $\pi_{\Gamma}^1$ s.t. $\pi_{\Gamma}^1(L(\Gamma)[1])=3$, and if $\vert L(\Gamma)\vert>1$ then $\pi_{\Gamma}^1(L(\Gamma)[2])=1$ 
 		}
 		
 		\tcc{The task now becomes assigning a binary value $x_{\Gamma}$ to each $\Gamma$ so that for every \lowrank\ edge $e=(u,v)$, with labels $L(u,e)=i$ and $L(v,e)=j$, which has vertices in $\Gamma_u$ and $\Gamma_v$, the binary assignment $x_{\Gamma_u}$ to $\Gamma_u$ and $x_{\Gamma_v}$ to $\Gamma_v$ satisfy: $\pi^{x_{\Gamma_u}}_{\Gamma_u}(i) = \pi^{x_{\Gamma_v}}_{\Gamma_v}(j).$}
 		
		\tcc{For each \lowrank\ edge define a 2-XOR-SAT clause on boolean variables associated with the RCCs on which the edge is incident.} 		
 		
 		\For{$e=(u,v)\in E$ s.t. $\text{rank}(\beta_{uv})=1$}{
 			Find $\Gamma_u\in $ RCCs s.t. $u\in \Gamma_u$. 
 			
 			Find $\Gamma_v \in $ RCCs s.t. $v \in \Gamma_v$.

 			\eIf{$L(v,e) = L(u,e)$}{
 				$C_e(x_{\Gamma_u},x_{\Gamma_v}) = x_{\Gamma_u} \oplus x_{\Gamma_v}$
 			}{
				 $C_e(x_{\Gamma_u},x_{\Gamma_v}) = \bar{ x}_{\Gamma_u} \oplus x_{\Gamma_v}$			
 			}
 		}
 		\tcc{The solution to the associated 2-XOR-SAT problem specifies, for each RCC $\Gamma$, which permutations to apply in a uniform fashion to all vertices in $\Gamma$.} 
 		
 		variables = $\{ x_{\Gamma} \;\vert\; \forall \Gamma \in \text{RCCs}\}$
 		
 		clauses =  $\{C_e \;\vert\; \forall e=(u,v) \in E \text{ s.t. } \text{rank}(\beta_{uv})=1 \}$
 		
 		success= 2-XOR-SAT(ref variables, clauses)		
 		
		\If{$\lnot$success}{
			return False
		}

 		\For{$\Gamma \in $ RCCs}{
 			\For{vertex $v \in \Gamma$}{
				 $\pi_v = \pi_{\Gamma}^{x_{\Gamma}}$
 			}
 		}
 		
 		return $\Pi = \{ \pi_v \;\vert\; \forall v \in V\}$

    \caption{Algorithm for finding permutations $\Pi$ such that $B^{\Pi}$ is a NLY basis}\label{alg:genericPerms}
\end{algorithm}

\begin{theorem}\label{thrm:genericAlgo}
Given a matrix weighted graph, one can efficiently find an NLY basis, or else show that no such basis exits.
\end{theorem}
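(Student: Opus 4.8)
The plan is to assemble Theorem~\ref{thrm:genericAlgo} from Algorithm~\ref{alg:genericAnsatz} and Algorithm~\ref{alg:genericPerms} in exactly the way Theorem~\ref{thrm:rank1algo} assembled the \lowrank\ case from Algorithms~\ref{alg:rank1ansatz} and~\ref{alg:rank1perms}: first call Algorithm~\ref{alg:genericAnsatz} to obtain a candidate basis $B$ (or a proof that none exists), then call Algorithm~\ref{alg:genericPerms} on $B$ to obtain permutations $\Pi$ (or a proof that no $\Pi$ makes $B^{\Pi}$ NLY), and output $B^{\Pi}$. Efficiency is immediate: Lemma~\ref{lem:genericAnsatz} bounds the first call by $O(N^3)$, and the second call only labels \lowrank\ edges, groups vertices into RCCs, and solves one $2$-XOR-SAT instance, each of which is polynomial. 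It therefore remains to verify correctness.

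First the straightforward directions. If Algorithm~\ref{alg:genericAnsatz} returns False then, by Lemma~\ref{lem:genericAnsatz}, there is no candidate basis; since the defining conditions~(\ref{eq:NLYcond1})--(\ref{eq:NLYcond2}) of an NLY basis directly entail both conditions of Definition~\ref{def:ansatz}, every NLY basis is a candidate basis, so no NLY basis exists. If instead Algorithm~\ref{alg:genericAnsatz} succeeds and Algorithm~\ref{alg:genericPerms} succeeds with output $\Pi$, then $B^{\Pi}$ is an NLY basis: the $2$-XOR-SAT clauses were constructed so that on every \lowrank\ edge the two permuted labels coincide and lie in $\{1,3\}$, which is condition~(\ref{eq:NLYcond2}) together with diagonality of $\Sigma_{uv}$ on \lowrank\ edges, whereas on \highrank\ edges the relation $O_{v\leftarrow u}b_u = b_v$ modulo signs of the candidate basis (claim 1 of Corollary~\ref{cor:path}) is preserved because $\Pi$ permutes every vertex of an \abrvHRCC\ uniformly, which by Lemma~\ref{lem:rigid} yields condition~(\ref{eq:NLYcond1}).

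The substantive point — and the main obstacle — is to show that if Algorithm~\ref{alg:genericAnsatz} produced a candidate basis $B$ but Algorithm~\ref{alg:genericPerms} returns False, then no NLY basis exists, so that rerunning the procedure with a different candidate basis is pointless. I would argue by contradiction: assume an NLY basis $\bar B$ exists. It is a candidate basis, so by claim 2 of Corollary~\ref{cor:path} its restriction to each \abrvHRCC\ $\Gamma$ is rigidly determined modulo signs by its value $\bar b_r$ at the chosen root $r$; moreover, rerunning the invariance argument from the proof of Lemma~\ref{lem:genericAnsatz} shows $\bar b_r \subseteq \mathbb{S}_r[f]$, and since each loop operator $O_{p_e}$ fixes $\bar b_r$ modulo signs we also get $\bar b_r \subseteq \mathbb{S}_{p_e}$ for all fundamental cycles, hence $\bar b_r \subseteq \mathbb{S}_r^*$ — the same set Algorithm~\ref{alg:genericAnsatz} selected $b_r$ from. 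For any \lowrank\ edge $(u,v)$ incident to $\Gamma$, its nontrivial singular vector is unique up to sign (rank $1$); transporting it back to $r$ along \highrank\ path operators identifies $e_j^r$ (with $j$ the $B$-label) and $\bar e_{j'}^r$ (with $j'$ the $\bar B$-label) up to sign, so the map $\sigma_\Gamma\colon j\mapsto j'$ is a single well-defined injection depending only on the pair $(b_r,\bar b_r)$. Because $\bar B$ is NLY, no $\bar B$-label equals $2$, so $\sigma_\Gamma$ maps the set of occurring labels into $\{1,3\}$; in particular at most two indices occur as labels, whence $\vert L(\Gamma)\vert\le 2$ and $\sigma_\Gamma$ extends to one of the permutations $\pi_\Gamma^0,\pi_\Gamma^1$ built by Algorithm~\ref{alg:genericPerms}. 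Recording those choices gives a satisfying assignment of the $2$-XOR-SAT instance and a $\Pi=\{\pi_\Gamma\}$ for which the bi-labelled graph of $B^{\Pi}$ equals that of $\bar B$; hence $B^{\Pi}$ is NLY, contradicting that Algorithm~\ref{alg:genericPerms} returned False.

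The one delicate case, which I expect to be where the real work lies, is when $\mathbb{S}_r^*$ contains a two-dimensional subspace, so that $\bar b_r$ need not be a signed permutation of $b_r$ and $\sigma_\Gamma$ is a priori ill-defined. Here I would show that any index sitting in a two-dimensional block of $\mathbb{S}_r^*$ is never the label of a \lowrank\ edge incident to $\Gamma$: such an edge's eigenspace decomposition at its $\Gamma$-endpoint isolates the nontrivial singular direction into a one-dimensional subspace of $\mathbb{S}_v[0]$, and since the fixed-point iteration and the loop-operator intersections only shrink subspaces, the transported direction remains inside a one-dimensional block of $\mathbb{S}_r^*$. Consequently $\sigma_\Gamma$ need only be defined on indices lying in one-dimensional blocks, where $b_r$ and $\bar b_r$ genuinely agree up to sign, and it may be extended arbitrarily on the remaining indices without disturbing any \lowrank\ label; this repairs the contradiction argument and completes the proof.
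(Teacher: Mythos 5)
Your proposal is correct and follows essentially the same route as the paper: assemble Algorithms~\ref{alg:genericAnsatz} and~\ref{alg:genericPerms} exactly as Theorem~\ref{thrm:rank1algo} assembled the rank-1 case, and establish the crucial ``Algorithm~\ref{alg:genericPerms} fails $\Rightarrow$ no NLY basis'' direction by contradiction, matching $B$-labels to $\bar B$-labels via the rank-1 uniqueness of the nontrivial singular direction transported along \highrank\ path operators. The one place you diverge stylistically is that you route the comparison through the root vertex $r$, which forces you to confront the case where $\mathbb{S}_r^*$ has a two-dimensional block and $\bar b_r$ need not be a signed permutation of $b_r$; your resolution (labels of \lowrank\ edges always land in one-dimensional blocks of $\mathbb{S}_r^*$ because the rank-1 eigenspace decomposition isolates a 1-d subspace of $\mathbb{S}_v[0]$ that the iteration and loop intersections only shrink, and the fixed-point condition makes the subspace structure covariant under $O_{v\leftarrow u}$) is correct, but the paper sidesteps this case entirely by comparing $e_i^u$ and $\bar e_{j}^u$ directly at the vertices $u$ that actually carry a \lowrank\ edge, where both are pinned down to the same one-dimensional space by the rank-1 condition; well-definedness of $i\mapsto j$ across $\Gamma$ is then obtained from the same \highrank-path transport plus orthogonality of $\bar b_v$. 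In short, the argument is sound, but the detour through $r$ adds a case the paper's direct comparison avoids.
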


\begin{proof}
The procedure for finding an NLY basis is to first find a candidate basis $B$ using Algorithm~\ref{alg:genericAnsatz}, and then find a set of permutations $\Pi$ such that $B^{\Pi}$ is an NLY basis using Algorithm~\ref{alg:genericPerms}. 

If Algorithm~\ref{alg:genericPerms} is succesful, then $B^{\Pi}$ is an NLY basis by the following reasoning. For every \highrank\ edge an identical permutation is applied to its adjacent vertices, and so the diagonality of the matrix weights is preserved. While for the \lowrank\ edges, the matrix weights are diagonal, and by construction every matrix weight is zero in its second diagonal entry, as per arguments made in subsection \ref{subsec:rank1graph}. 

Algorithm~\ref{alg:genericPerms} is efficient, since the number of variables  in the 2-XOR-SAT problem is the number of \abrvHRCC s, and in the worst case this is the number of vertices $N$, so it runs in time $O(N^2)$. Combining this with Lemma~\ref{lem:genericAnsatz} the worst case runtime of the whole algorithm is $O(N^3)$.

Finally, if either of these algorithms fail, then we claim that no NLY basis exists, by the following two arguments. Firstly, if Algorithm~\ref{alg:genericAnsatz} fails, then by Lemma~\ref{lem:genericAnsatz} no candidate basis exists, and since an NLY basis must satisfy the conditions for being a candidate basis, no NLY basis exists. Secondly, we must establish the non-trivial fact that if Algorithm~\ref{alg:genericPerms} fails, then no NLY basis exists. In other words, we need to rule out the possibility that Algorithm~\ref{alg:genericPerms} might have succeeded had we supplied it with an alternative candidate basis. The rest of our exposition is devoted to proving this fact.

First note that, when given a candidate basis $B$, if Algorithm~\ref{alg:genericPerms} fails then no set of permutations exists such that $B^{\Pi}$ is an NLY basis. This follows from the fact that, if a permutation were to exist, it must be uniform on every \abrvHRCC , in order to preserve the diagonality of the \highrank\ edges. Given this, the argument reduces to the same one made in Theorem~\ref{thrm:rank1algo}, where we treat \abrvHRCC s as sites, since every \lowrank\ edge is adjacent to \abrvHRCC s. 

Given that if the procedure in Algorithm~\ref{alg:genericPerms} fails, then there does not exist a permutation $\Pi$ such that the basis $B^{\Pi}$ is an NLY basis, we use proof by contradiction to show that in this case no NLY basis exists.

 Suppose there exists an NLY basis $\bar{B} = \{ \bar{b}_u\}$. We now argue that for a fixed RCC $\Gamma$, for every index $i \in \{1,2,3\}$ there exists an index $j \in \{1,2,3\}$ such that for every vertex $u \in \Gamma$, if $e_i^u \in b_u$ corresponds to a left singular vector, with non-zero sigular value, of a \lowrank\ edge adjacent to $u$, then $e_i^u = \pm \bar{e}_j^u \in \bar{b}_u$. 

Given an index $i$, consider any two vertices $u,v \in \Gamma$ for which $e_i^u, e_i^v$ correspond to singular vectors, with non-zero singular values, of some \lowrank\ edges adjacent to $u$ and $v$. Consider that $\bar{b}_u$ must also contain a vector $\bar{e}_{j_u}^u$ at some particular index $j_u$, which is also a singular vector with non-zero singular value of the same \lowrank\ edge adjacent to $u$. Since that edge is \lowrank , it follows that $e_i^u = \pm \bar{e}_{j_u}^u$. A similar argument can be made for $v$ so that $e_i^v = \pm \bar{e}_{j_v}^v$, for some index $j_v$.  There must exist a \highrank\ path $p$ connecting $u$ to $v$,  and by Corollary~\ref{cor:path}, $O_p e_i^u = \pm e_i^v$. Similarly, $O_p \bar{e}_{j_u}^u = \pm \bar{e}_{j_u}^v$.  Therefore $\pm  \bar{e}_{j_u}^v =\bar{e}_{j_v}^v$ and since each vector in $\bar{b}_v$ is orthogonal we have $j_u =j_v =j$. Since this is true for any such pair of vertices $u,v$, there must exist an index $j$ such that for every vertex $u \in \Gamma$, if $e_i^u$ corresponds to a left singular vector, with non-zero sigular value, of a \lowrank\ edge adjacent to $u$, then $e_i^u = \pm \bar{e}_j^u \in \bar{b}_u$.

We can now proceed by the same reasoning used in the proof of Theorem~\ref{thrm:rank1algo}. Let $\pi_{\Gamma}$ be the permutation with the mapping: $\pi_{\Gamma}(i)=j$, and let $\pi_u = \pi_{\Gamma}$ for all $u\in \Gamma$, and define $\Pi =\{\pi_u\}$. Then the bi-labelled graph associated with $B^{\Pi}$ must be identical to the bi-labelled graph associated with $\bar{B}$ (i.e. the action on the \lowrank\ edges is the same), and therefore $B^{\Pi}$ must be an NLY basis, which is a contradiction.
\end{proof}

\section{Discussion}
\label{sec:dis}


It is clear from the work presented here that in the case of two-local qubit Hamiltonians, the hardness of curing the sign problem by local basis transformations is determined by the presence or absence of one-local terms in the Hamiltonian. 


The question of whether the general {\sf LocalSignCure} is a problem in NP for general two-local $n$-qubit Hamiltonians is not clear, as the set of local unitary transformations is a continuous parameter space, and a prover would need to specify a sign-curing solution with a polynomial number of bits and such exact sign-curing transformations may not exist. A natural relaxation would be to demand that the transformation be approximately sign-curing, a direction of research that is explored in Ref.~\cite{HRNE:easing}, so that the problem would be contained in MA. 

We do not know the complexity of determining the ground state energy for the family of Hamiltonians presented in section 5. It may be that finding the ground state energy is easy, thus obviating any interest in a curing transformation. It would be interesting to show that a family of Hamiltonians exists for which deciding sign-curing and determining ground state energy are both hard. It would be very surprising indeed if the hardness of deciding a curing transformation only appears when the ground state energy is efficiently computable.

A natural extension of sign-curing transformations beyond single-qubit unitary transformations are transformations which first embed each qubit into a $d$-dimensional system, and then allow for local basis changes in this $d$-dimensional system. The power of such ``lifting" basis changes is completely unexplored, even in the two-qubit case. Another class of sign-curing transformations are Clifford circuits which map a Hamiltonian composed of ${\rm poly}(n)$ $k$-local Pauli's onto a sum of ${\rm poly}(n)$ non-local Pauli's. The power of these transformations is also largely unexplored, but some first results are reported in \cite{thesis:ioannou}. 
Recently, it was demonstrated~\cite{hen2019}  that even when there is an essential sign problem in the Hamiltonian, there are ways to group terms in the expansion of the Gibbs state to avoid the sign problem. It would be interesting to understand better how these techniques relate to stoquastic Hamiltonians. 

Another strand of interesting future research concerns the distinction between termwise and globally stoquastic Hamiltonians. Examples can be constructed of 3-local globally-stoquastic but not termwise-stoquastic Hamiltonians and the complexity of deciding global stoquasticity can be analyzed.

\appendix

\section{A simple example of non-stoquastic two-local Hamiltonian}\label{app:simpleExample}
Here we present a two-qubit Hamiltonian that can not be transformed into a symmetric $Z$-matrix by any single-qubit unitary transformations. Consider the Hamiltonian
\begin{equation*}
 H = -ZZ - 2XX + 3YY + IX + IZ + ZI + XI .
 \end{equation*}
The $\beta$-matrix of this Hamiltonian is of the form:
\begin{equation*}
 \beta = \left( \begin{matrix}
-2 &0&0\\
0& 3 & 0 \\
0 & 0& -1 
\end{matrix} \right).
\end{equation*}
First note that this Hamiltonian is not stoquastic in this basis because $ a_{XX} >  - \vert a_{YY} \vert$.  The orthogonal rotations on the $\beta$-matrix must be confined to the $XZ$ plane, in order to avoid complex terms like $IY$ or $XY$. Any pair of such orthogonal transformations (given by angles $\theta_1$ and $\theta_2$) will keep the $\beta$-matrix into a block-diagonal form, and the new $a_{XX}'$ entry will be
\begin{equation*}
a_{XX}' = -2 \cos(\theta_1) \cos(\theta_2) - \sin(\theta_1) \sin(\theta_2) > -3.
\end{equation*} 
Therefore $ a_{XX}' >  - \vert a_{YY} \vert$ for all values of $\theta_1$ and $\theta_2$, and so $H$ can not be transformed into a symmetric $Z$-matrix by single-qubit unitary transformations.

\section{Curing the sign problem for strictly two-local Hamiltonians by single-qubit Clifford transformations is easy}\label{app:cliffordsEasy}

Suppose that instead of single-qubit unitaries, one is interested in curing a strictly two-local Hamiltonian by single-qubit Clifford transformations. It is straightforward to show, by similar arguments to those outlined in Section~\ref{sec:curingResult} that such a problem is easy as we will do below. 

First, we know that the transformations that are employed in the XYZ-algorithm are single-qubit Clifford transformations. Furthermore, single-qubit Clifford transformations correspond to signed permutations on the matrix-weighted graph. Therefore, by the same logic as for the one-local unitary case, it suffices to find an algorithm which answers Problem ~\ref{problemStatement}, where instead of searching for a set of orthogonal rotations $\{O_u\}$, one instead seeks a set of signed permutations $\{\Pi_u\}$.

Now, instead of being able to consider any basis $B = \{b_u\}$, we only consider bases which are related to the standard basis by signed permutations:
\begin{equation*}
b_u =\{\Pi_u e_1, \Pi_u e_2, \Pi_u e_3\}.
\end{equation*}
Again, the signs are irrelevant, and therefore we are looking for bases which are related to the standard basis by a permutation.

We say a matrix is quasi-monomial if for each row and column of that matrix there is at most one non-zero entry \footnote{In a monomial matrix each row and column have exactly one non-zero entry, hence the addition `quasi'.}. A matrix $\beta_{uv}$ for an edge $e=(u,v)$ which is quasi-monomial admits a singular value decomposition of the form:
\begin{equation*}
\beta_{uv} = \Pi_u \Sigma_{uv}^{\rm SVD} (\Pi_v)^{\sf T},
\end{equation*}
where $\Pi_u$ and $\Pi_v$ are signed permutations. This can be seen by noting that by an appropriate permutation of columns and rows, the non-zero entries of $\beta_{uv}$ can be made positive and put on the diagonal in descending order, which corresponds to the singular value decomposition of $\beta$. It follows by definition that any $O_{v\leftarrow u}$, as defined in Eq.~(\ref{eq:transferOp}), is also a signed permutation. It is not difficult to see that if any edge in the matrix-weighted graph of our Hamiltonian has a weight $\beta_{uv}$ which is not quasi-monomial, then there can be no set of signed permutations which simultaneously diagonalize the weights of the graph. This is because a diagonalized matrix is quasi-monomial, and it is impossible to transform a non-quasi-monomial matrix into a quasi-monomial matrix by permuting the rows and columns.

We now describe the algorithm for answering problem statement~\ref{problemStatement} in the case where we are interested in signed permutations instead of orthogonal rotations. First check that every matrix weight $\beta_{uv}$ is a quasi-monomial matrix as this is a necessary condition by the arguments above. If any of these matrices are not, then we return false. 

We then identify a candidate basis $B_{\Gamma}$ for each \HRCC\ such that that the bases $b_u \in B_{\Gamma}$ are permutations of the standard basis. For each \HRCC\ construct the set of subspaces $\mathbb{S}_u^*$ as described in algorithm~\ref{alg:genericAnsatz}. Then check if the standard basis belongs to $\mathbb{S}_u^*$. By the same arguments made in~\ref{lem:genericAnsatz} it is clear that if the standard basis is not in $\mathbb{S}_u^*$ then no permutations of the standard basis are in $\mathbb{S}_u^*$, and so no candidate basis exists for $\Gamma$ which is a permutation of the standard basis, and so we must return false. If the standard basis is in $\mathbb{S}_u^*$, then we choose the standard basis for $b_u$, and construct a candidate basis $B_{\Gamma}$ for $\Gamma$ as per step 8 of Algorithm~\ref{alg:genericAnsatz}. Since every operator $O_{v \leftarrow u}$ is a signed permutation, it follows that all other bases $b_v \in B_{\Gamma}$ are permutations of the standard basis. Let $B = \bigcup_{\Gamma} B_{\Gamma}$.

Equipped now with a candidate basis for the graph, we can proceed with Algorithm~\ref{alg:genericPerms}. Noting that the only transformations being performed in this section are permutations on the candidate basis, we know that any NLY basis that is found will be a permutation of the standard basis. As such, whichever answer it gives will be the answer to our problem. $\Box$

We remark that even when the graph is weighted by only quasi-monomial matrices, it is generally not sufficient to consider only single-qubit Clifford transformations as curing transformations. This is proved in Appendix~\ref{app:cliffordsNotSuffice}.

\section{Single-qubit Clifford transformations do not suffice to cure the sign problem for a quasi-monomial matrix weighted graph}\label{app:cliffordsNotSuffice}
In this Appendix we show that when the $\beta$-matrices associated with a graph are quasi-monomial, as introduced in Appendix~\ref{app:cliffordsEasy}, then, even if there does not exist a set of signed permutations $\{ \Pi_u\}$ such that $\Pi_u^{\sf T} \beta_{uv} \Pi_v $ is diagonal, there may still exist a set of orthogonal transformations $\{O_u \}$ such that $O_u^{\sf T} \beta_{uv} O_v $ is diagonal. This is in contrast to the XYZ-algorithm. In the XYZ-algorithm all $\beta$-matrices are diagonal, a subclass of quasi-monomial matrices. In that case it was shown in Ref.~\cite{klassen2018two} that if there does not exist a set of signed permutations $\{ \Pi_u\}$ such that $\Pi_u^{\sf T} \beta_{uv} \Pi_v $ is diagonal, then there also does not exist a set of orthogonal transformations $\{O_u \}$ such that $O_u^{\sf T} \beta_{uv} O_v $ is diagonal, and so it is sufficient to consider signed permutations.

This insight is somewhat surprising for the following reason. If one considers a single quasi-monomial matrix $\beta$, it holds that all other quasi-monomial matrices $\beta'$ which can be obtained by orthogonal  transformations $O_1^{\sf T} \beta O_2 = \beta'$ can also be obtained by signed permutations $\Pi_1^{\sf T} \beta \Pi_2 = \beta'$ . One can see this by noting that the absolute values of the non-zero entries of $\beta$ are its singular values, and so the singular value decomposition of $\beta$ is related to all of the quasi-monomial matrices by the shuffling and flipping the signs of the rows and columns. 



Consider a matrix-weighted graph, whose matrix weights are monomial matrices, in particular consider a triangle with three qubits and Hamiltonian of the form:
\begin{equation*}
 H = H_{12}+ H_{23} + H_{31} \; \; ,\;  H_{uv} = X_u Y_v + Y_u X_v. 
\end{equation*}
The corresponding matrix weights of our graph are thus of the form:
\begin{equation*}
 \beta_{uv} = \left(\begin{array}{ccc}
0 & 1 & 0 \\
1 & 0 & 0 \\
0 & 0 & 0 
\end{array}\right).
\end{equation*}
It is not hard to see that no permutations exist which simultaneously diagonalize all three matrices. 
However, if we apply the rotation \begin{equation}\label{eq:topoSol}
O = \frac{1}{\sqrt{2}} \left(\begin{array}{ccc} 
1 & -1 & 0 \\ 
1 & 1 & 0 \\
0 & 0 & 1 
\end{array}\right),\end{equation} at every vertex, then every matrix is diagonalized. Translating back to the application of basis changes, this transformation corresponds to applying T-gates, of the form $\left(\begin{array}{cc}1 & 0 \\ 0 & e^{i \pi/4}  \end{array}\right)$, on all the qubits.

We see that in the case of quasi-monomial matrix-weighted graphs, the family of graphs which are equivalent under orthogonal transformations are not equivalent under signed permutations, and instead form sectors which depend on the graph topology.  It is precisely the \highrank\ loops considered in our algorithm in section~\ref{sec:curingResult} which captures this non-trivial topological structure. In Algorithm~\ref{alg:genericAnsatz}, one needs to check the loop operators $O_{P_e}$ (Algorithm~\ref{alg:genericAnsatz} Line~\ref{algLine:loopOp})) in order to identify the rotation~\ref{eq:topoSol}.


\section{Heisenberg Form Under Local Unitaries is NP-hard}\label{app:HeisForm}
In this appendix we argue that deciding if a 2-local Hamiltonian can be put into \textit{Heisenberg form} by single qubit unitary rotations is NP-hard. Equivalently, for the case of a matrix weighted graph of the kind considered in Section~\ref{sec:curingResult}, finding a set of orthogonal rotations $\{O_u\}$ that diagonalize every matrix weight is NP-hard.

\begin{definition}[Heisenberg Form]
A Hamiltonian of the form:
$$ H = \sum_{\langle u v \rangle} J_x^{uv} X_uX_v +J_y^{uv} Y_u Y_v +J_z^{uv} Z_u Z_v $$
is said to be in \textit{Heisenberg form}.
\end{definition}

\begin{definition}[Cubic Graph]
A cubic graph is a graph in which all vertices have degree three.
\end{definition}

\begin{definition}[Chromatic Index]
The chromatic index of a graph is the minimum number of colours required to color the edges of the graph in such a way that no two adjacent edges have the same colour.
\end{definition}

\begin{theorem}[\cite{Holyer81}]
It is NP-complete to determine whether the chromatic index of a cubic graph is 3 or 4.
\end{theorem}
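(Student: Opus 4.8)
The plan is to prove membership in NP and NP-hardness; essentially all the work is in the latter. \emph{Membership} is immediate: by Vizing's theorem every cubic graph has chromatic index $3$ or $4$, so the question is equivalent to asking whether $G$ has a proper $3$-edge-colouring, and such a colouring is a polynomially checkable certificate of size $O(|E(G)|)$. For \emph{hardness} I would give a polynomial-time reduction from $3$-SAT, building from a formula $\phi$ a cubic graph $G_\phi$ that is $3$-edge-colourable if and only if $\phi$ is satisfiable. It is convenient to name the three colours by the nonzero elements of $\mathbb{Z}_2^2$; the workhorse is the \emph{Parity Lemma}: in a proper $3$-edge-colouring of a cubic graph the three colour classes are perfect matchings, so the $\mathbb{Z}_2^2$-sum of the colours across any edge cut $\delta(S)$ vanishes (equivalently, each colour occurs in $\delta(S)$ with parity $|\delta(S)| \bmod 2$). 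This lets one analyse a gadget purely through the colours appearing on its boundary half-edges.

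The graph $G_\phi$ is assembled from constant-size cubic fragments, each exposing only degree-$2$ vertices at designated ports: a \emph{variable gadget} whose output port admits exactly two legal colour patterns, read as TRUE/FALSE; a \emph{wire} that carries such a pattern, together with an \emph{inverter} and a \emph{fan-out} gadget producing two faithful copies of a value, so that one variable can feed many clauses; and a \emph{clause gadget} on three input ports that is $3$-edge-colourable exactly when at least one input is TRUE. Ports are then matched in pairs or attached to small rigid ``terminator'' subgraphs so that $G_\phi$ is $3$-regular, and a single rigid \emph{anchor} subgraph is included so that the names of the three colour classes, hence the meaning of TRUE and FALSE, are fixed globally (a proper edge-colouring being a priori defined only up to permuting the three colours).

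Using the Parity Lemma one verifies, by finite case analysis on each (constant-size) gadget, that it enforces exactly its intended Boolean constraint on its ports, and that wires and fan-outs propagate values faithfully. Then a satisfying assignment of $\phi$ yields a global $3$-edge-colouring by colouring every gadget according to its locally-forced pattern; conversely, any proper $3$-edge-colouring of $G_\phi$ restricts on the variable gadgets to a consistent truth assignment, which the clause gadgets force to satisfy every clause. Since $|V(G_\phi)| = O(|\phi|)$ and the construction is computable in polynomial time, the reduction is complete.

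\emph{Main obstacle.} The crux is the explicit design of the clause, inverter, and fan-out gadgets together with the proof that they admit \emph{no} unintended colourings: each fragment must stay cubic while leaving exactly the right number of half-edges, and one must rule out parasitic $3$-edge-colourings that would let an unsatisfied clause still be coloured. A closely related subtlety is colour-anchoring — without the rigid anchor a clause could be ``satisfied'' by a global permutation of colours rather than by a genuine TRUE input, so one must check that the anchor truly rigidifies the colour classes. This gadget engineering, rather than the high-level argument, is where the difficulty lies, and is what makes the original construction intricate.
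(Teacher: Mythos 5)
This statement is cited by the paper from Holyer's 1981 article and is not proved in the paper; there is therefore no ``paper's own proof'' to compare against. Your sketch is nonetheless an accurate high-level account of the structure of Holyer's argument: membership in NP via Vizing's theorem; a reduction from $3$-SAT; the parity lemma (that in a proper $3$-edge-colouring of a cubic graph each colour class is a perfect matching, so each colour occurs across any cut $\delta(S)$ with parity $|S|\bmod 2$, equivalently the $\mathbb{Z}_2^2$-sum over the cut vanishes); and the assembly of the reduction from constant-size cubic gadgets whose boundary colourings realize Boolean semantics.

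The one honest caveat — which you yourself flag — is that the sketch defers precisely the part where all the difficulty lives: the explicit construction of the inverting/clause/fan-out gadgets and the finite case-analysis showing they admit no parasitic colourings. Holyer's actual gadgets are built around a specific ``inverting component'' and its parity behaviour, and the correctness proof is genuinely delicate; without exhibiting those subgraphs and carrying out the case analysis, the sketch does not yet constitute a proof. Your remark about colour-anchoring is also apt: a proper $3$-edge-colouring is only defined up to the $S_3$ action on colour classes, and one must check that the boolean semantics attached to a port is invariant under (or rigidified against) that action — Holyer handles this by having the inverting component's behaviour depend only on whether the two boundary colours agree, which is colour-permutation invariant, rather than by a separate anchor gadget, so your proposed anchor is one of several viable fixes but is not exactly what the original construction does. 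As a roadmap for reproducing Holyer's theorem the sketch is sound; as a self-contained proof it is incomplete in exactly the place you identify.
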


\begin{corollary}[Heisenberg Form is NP-hard]
It is NP-hard to determine if a 2-local Hamiltonian can be put into Heisenberg form by applying single qubit unitary rotations.
\end{corollary}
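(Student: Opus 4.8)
The plan is to reduce the edge-3-colorability problem for cubic graphs to the Heisenberg-form problem, using the theorem of Holyer that deciding whether a cubic graph has chromatic index $3$ or $4$ is NP-complete. Given a cubic graph $G=(V,E)$, I would build a $2$-local Hamiltonian $H_G$ on $|V|$ qubits (one qubit per vertex) by placing, on each edge $(u,v)\in E$, a term whose $\beta$-matrix $\beta_{uv}$ is chosen so that the only way to simultaneously diagonalize all the edge matrices by local $SO(3)$ rotations is to assign, at each vertex, a rotation that sends the three incident edges to the three coordinate axes $X,Y,Z$ in some order — i.e. a proper edge $3$-coloring. The natural choice is to make each $\beta_{uv}$ a \emph{rank-1} matrix: by Proposition~\ref{prop:lowrank}, a rank-1 edge $(u,v)$ contributes a single left singular vector at $u$ and a single right singular vector at $v$, and after diagonalization this edge ``occupies'' exactly one of the three basis directions at $u$ and one at $v$. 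If all edge matrices are generic rank-1 (so the singular vector of edge $(u,v)$ at vertex $u$ is \emph{not} aligned with the singular vector of any other edge incident to $u$), then diagonalizing all of them forces the three incident singular vectors at each vertex to become three distinct coordinate axes, hence an orthonormal frame — which is impossible unless the three singular vectors at each vertex are already pairwise orthogonal, something we would arrange by construction so that the obstruction is purely combinatorial.

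Concretely, I would pick, for each vertex $u$ of degree $3$ with incident edges $e_1,e_2,e_3$, a fixed orthonormal triple $(f^u_1,f^u_2,f^u_3)$ of $\mathbb{R}^3$, and set $\beta_{e}$ for $e=(u,v)$ to be the rank-1 matrix $\sigma_e\, f^u_{i(e,u)} (f^v_{i(e,v)})^{\sf T}$ (up to the transpose convention of the excerpt), where $i(e,u)\in\{1,2,3\}$ records which of $u$'s three incident-edge slots $e$ uses. Then a set of $SO(3)$ rotations $\{O_u\}$ diagonalizes all $\beta_e$ iff, for every edge $e=(u,v)$, $O_u$ carries $f^u_{i(e,u)}$ onto a coordinate axis and $O_v$ carries $f^v_{i(e,v)}$ onto the \emph{same} coordinate axis; since the $f^u_j$ are orthonormal, $O_u$ is then forced (up to signs and the usual irrelevance of signs noted in the excerpt) to permute $(f^u_1,f^u_2,f^u_3)$ onto $(X,Y,Z)$ in some order, i.e. to \emph{name} the three slots at $u$ with the three colors. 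Agreement across each edge is precisely the constraint that the coloring is proper: the two endpoints assign the shared edge the same color, and distinct edges at a common vertex get distinct colors. Hence $H_G$ can be put into Heisenberg form iff $G$ is edge-$3$-colorable. The construction is clearly polynomial and the $\beta$-entries can be taken with $O(1)$ bits (e.g. rational approximations to a fixed frame, or an exactly rational frame), so this establishes NP-hardness.

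I would then note that this simultaneously proves the claim, mentioned in the Preliminaries of Section~\ref{sec:curingResult}, that deciding the existence of rotations making condition~\ref{cond:diagonality0} hold is NP-hard, which is the motivation for the paper's detour through the {\sf No-Lone-YY \& Diagonal} problem. The main obstacle is getting the rank-1 gadget to behave exactly as claimed: one must check carefully that \emph{no} diagonalizing rotation exists other than those coming from a proper $3$-coloring — in particular that a single shared coordinate axis cannot be ``reused'' in a way that collapses two colors, and that the freedom in $SO(3)$ (versus signed permutations) does not create spurious solutions. This is where one leans on the rigidity of rank-1 edges established in Proposition~\ref{prop:lowrank}: each rank-1 edge pins down exactly one axis at each endpoint, and three mutually orthogonal pinned axes at a degree-$3$ vertex leave no wiggle room. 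A secondary point to handle is that Holyer's theorem is about chromatic index $3$ versus $4$; since the reduction is a clean ``yes iff edge-$3$-colorable'', this transfers NP-hardness directly. A careful writeup would also observe that for a cubic graph every vertex has degree exactly $3$, so no vertex is under- or over-constrained, which is what makes the ``frame = coloring'' correspondence tight.
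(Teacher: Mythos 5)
Your proposal takes essentially the same route as the paper: both reduce from Holyer's theorem on the chromatic index of cubic graphs, and both place a rank-$1$ $\beta$-matrix on each edge chosen so that the three singular vectors meeting at a degree-$3$ vertex form an orthonormal triple, forcing any diagonalizing set of $O(3)$ rotations to induce a proper edge-$3$-coloring (and conversely). The paper's construction is the concrete special case of yours in which the orthonormal triple at each vertex is taken to be the standard basis, i.e.\ each edge term is simply a product $P_u Q_v$ of Paulis with each of $X_u,Y_u,Z_u$ used exactly once; this makes the forward direction (coloring $\Rightarrow$ Heisenberg form) realizable by single-qubit Cliffords, but the substance of the reduction and the $\beta$-matrix rigidity argument are the same as what you describe.
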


\begin{proof}
The proof proceeds by reduction.

Consider a cubic graph $G=(E,V)$. Consider a two-local Hamiltonian $H_G$, acting on qubits associated with the vertices $V$. For every edge $e=(u,v) \in E$ $H_G$ consists of a two local term $P_uQ_v\;,\; P,Q \in \{X,Y,Z\}$, with the restriction that $\forall u \in V$: $X_u$, $Y_u$ and $Z_u$ appear in exactly one such term. This last restriction is always possible because every vertex has degree 3.

If $G$ has chromatic index $3$, then $H_G$ can be put into Heisenberg form by single qubit unitaries. Let the colouring be given by the Pauli operators $X$, $Y$, and $Z$. For every edge $e=(u,v)$ coloured by the Pauli operator $R$, map the Hamiltonian term $P_uQ_v$ to the term $R_uR_v$. Since every term in $H_G$ acting on a given qubit acts with a different Pauli operator, and every edge incident on a vertex is given a different colour, each Pauli operator acting on a given qubit is mapped to a unique Pauli operator, and so such a mapping can be given in terms of single qubit Clifford operators.

For the reverse direction, we prove that if $H_G$ can be put into Heisenberg form $H'_G$ by single qubit unitary rotations, then $G$ has chromatic index $3$.
Putting $H_G$ into Heisenberg form by single qubit unitary rotations is equivalent to diagonalizing all $\beta_{uv}$ matrices by orthogonal transformations, as defined in \ref{prop:ortho}, for all edges $(u,v)$. Consider that for a given edge $(u,v)$, the matrix $\beta_{uv}$ is rank-1. Thus for every edge $(u,v)$, $H'_G$ contains a single two qubit term of the form $X_uX_v$, $Y_uY_v$ or $Z_uZ_v$. Furthermore, all edges incident on a vertex must be associated with a different Pauli operator, since the transformation is unitary. Thus $H'_G$ prescribes a $3$ colouring, and $G$ has chromatic index $3$.
\end{proof}

Note that the above proof applies regardless of if the Hamiltonians are restricted to being exactly 2-local or not. Note also that problems of this type are ruled out by the No-Lone-YY condition introduced in Section~\ref{sec:curingResult}.

\section*{Acknowledgments}
The authors would like to thank Sergey Bravyi and Daniel Lidar for their thoughtful comments on this work.

\bibliographystyle{IEEEtran}
\bibliography{refs_stoq}
\end{document}